\documentclass[12pt]{article}
\usepackage[linesnumbered,ruled,vlined]{algorithm2e}
\usepackage[english]{babel}
\usepackage[justification=centering]{caption}
\usepackage{wrapfig, color}
\usepackage[margin=1in]{geometry}
\usepackage{diagbox,threeparttable} 
\usepackage{natbib}
\usepackage{subcaption}
\usepackage{amsmath,amsthm, amssymb,enumitem,esvect,bbm, cases}
\usepackage{graphicx, wrapfig}
\usepackage{booktabs}
\usepackage{cancel}
\usepackage{comment, setspace, empheq, url}
\usepackage[T1]{fontenc}
\usepackage[colorlinks=true, citecolor=green, linkcolor=blue, urlcolor=blue, hypertexnames=false]{hyperref}
\usepackage{textcomp, multirow}
\usepackage{soul}
\usepackage[linesnumbered,ruled,vlined]{algorithm2e}
\usepackage{algorithmic}
\usepackage{mathtools, nccmath}
\usepackage{tocloft}
\usepackage[compact]{titlesec}  
\usepackage{setspace}
\setlength\parindent{0pt}
\setlength\parskip{6pt}

\allowdisplaybreaks
\def\thm@space@setup{%
  \thm@preskip=0pt plus 1pt minus 1pt
  \thm@postskip=\thm@preskip 
}

\newtheorem{defn}{Definition}
\newtheorem{theorem}{Theorem}
\newtheorem{proposition}[theorem]{Proposition}
\newtheorem{lemma}[theorem]{Lemma}

\newtheorem{remark}{Remark}
\newtheorem{con}[theorem]{Assumption}

\newcommand{\bs}{\boldsymbol}
\newcommand{\M}{\mathcal{M}}
\newcommand{\y}{\mathbf{y}}
\newcommand{\x}{\mathbf{x}}
\newcommand{\X}{\mathbf{X}}
\def\BibTeX{{\mathcal{M}m B\kerP-.05em{\sc i\kerP-.025em b}\kerP-.08em
    T\kerP-.1667em\lower.7ex\hbox{E}\kerP-.125emX}}
\makeatletter
\makeatother

\begin{document}

\title{\LARGE{PRECISE: PRivacy-loss-Efficient and Consistent Inference based on poSterior quantilEs}}
\author{Ruyu Zhou, Fang Liu\footnote{$\;$Correspondence author: fliu2@nd.edu}\\
 \footnotesize Department of Applied and Computational Mathematics and Statistics,\\
 \footnotesize University of Notre Dame, IN 46556, USA} 
\date{}
\maketitle\vspace{-30pt}

\begin{abstract}
Differential Privacy (DP) is a mathematical framework for releasing information with formal privacy guarantees. 
While numerous DP procedures have been developed for statistical analysis and machine learning, valid statistical inference methods offering high utility under DP constraints remain limited. 
We formalize this gap by introducing the notion of valid Privacy-Preserving Interval Estimation (PPIE) and propose a new PPIE approach -- PRECISE -- to constructing privacy-preserving posterior intervals with the goal of offering a better privacy-utility tradeoff than existing DP inferential methods. PRECISE is a general-purpose and model-agnostic method that generates intervals using quantile estimates obtained from a sanitized posterior histogram with DP guarantees. We explicitly characterize the global sensitivity of the histogram formed from posterior samples for the parameter of interest, enabling its sanitization with formal DP guarantees. We also analyze the sources of error in the mean squared error (MSE) of the histogram-based private quantile estimator and prove its consistency for the true posterior quantiles as the sample size or privacy loss increases with along with its rate of convergence. 
We conduct extensive experiments to compare the utilities of PRECISE with common existing privacy-preserving inferential approaches across a wide range of inferential tasks, data types and sizes, DP types, and privacy loss levels. The results demonstrated a significant advantage of PRECISE with its nominal coverage and substantially narrower intervals than the existing methods, which are prone to either under-coverage or impractically wide intervals. 

\vspace{6pt}
\noindent \textbf{keywords}:  Bayesian, differential privacy, MSE consistency, privacy-preserving interval estimation (PPIE),  privacy loss, quantile.
\end{abstract} 

\maketitle

\setlength{\abovedisplayskip}{6pt}
\setlength{\belowdisplayskip}{6pt}

\setstretch{1.02}
\section{Introduction}\label{sec:intro}
\subsection{Background}
The unprecedented availability of data containing sensitive information has heightened concerns about the potential privacy risks associated with the direct release of such data and the outputs of statistical analyses and machine learning tasks. Providing a rigorous framework for privacy guarantees, Differential Privacy (DP) has been widely adopted for performing privacy-preserving analysis since its debut in 2006 \citep{dwork2006calibrating, dwork2006our} and gained enormous popularity among privacy researchers and in practice (e.g., Apple \citep{apple}, Google \citep{erlingsson2014rappor}, the U.S. Census \citep{abowd2018us}). 
Many DP procedures have been developed for various statistical problems, including sample statistics (e.g., mean \citep{smith2011privacy}, median \citep{dwork2009differential}, variance or covariance  \citep{amin2019differentially, biswas2020coinpress}), linear regression \citep{alabi2020differentially, wang2018revisiting}, empirical risk minimization (ERM) \citep{chaudhuri2011differentially}, and so on. 

Most existing DP methods to date focus on releasing privatized or sanitized statistics without uncertainty quantification, limiting their usefulness for robust decision-making. Though there exists work on DP statistical inference, including both hypothesis testing and interval estimation, this line of research is still in its early stages and is largely focused on relatively simple inference tasks, such as DP $\chi^2$-test \citep{gaboardi2016differentially}, uniformly most powerful tests for Bernoulli data \citep{awan2018differentially}, $F$-test in linear regression \citep{alabi2022hypothesis}, and privacy-preserving interval estimation for Gaussian means or regression coefficients. Our work contributes to this field by introducing a new procedure for valid \emph{privacy-preserving interval estimation (PPIE)}. 


\subsection{Related work}\label{sec:related_work}
Most of the existing works on PPIE can be loosely grouped into two broad categories. The first group obtains PPIE through the derivation of the asymptotic distribution of privacy-preserving (PP) estimator, where either asymptotic Gaussian distributions (e.g., inferring univariate Gaussian mean  \citep{du2020differentially, evans2023statistically, d2015differential, karwa2017finite}, multivariate sub-Gaussian mean \citep{biswas2020coinpress}, proportion \citep{lin2024differentially}, and complicated problems like M-estimators \citep{avella2023differentially} and ERM \citep{Wang_Kifer_Lee_2019}), or 
asymptotic $t$-distributions  (e.g., inferring linear regression coefficient
\citep{sheffet2017differentially} and the general-purpose multiple sanitization (MS) procedure \citep{liu2016model})
are assumed. 
The second group employs a quantile-based approach. The frequentist methods in this category primarily rely on the bootstrap technique to build PPIE, such as the simulation approach \citep{du2020differentially} and the parametric bootstrap method  \citep{ferrando2022parametric}. The BLBquant method \citep{chadha2024resampling} and the GVDP (General Valid DP) method \citep{covington2021unbiased} employ the Bag of Little Bootstraps (BLB) technique \citep{kleiner2014scalable} to obtain private quantiles. BLBquant provides quantitative error bounds and outperforms GVDP empirically.  \citep{wang2022differentially} leverages deconvolution (a technique that deals with contaminated data) to analyze DP bootstrap estimates and obtain PPIE. In the Bayesian framework, the existing methods focus on incorporating DP noise in PP posterior inference and computation, such as PP regression coefficient estimation through sufficient statistics perturbation \citep{bernstein2019differentially, kulkarni2021differentially} and data augmentation MCMC sampler \citep{ju2022data}.
Outside these two categories, other PPIE approaches include non-parametric methods for population medians \citep{drechsler2022nonparametric}, synthetic data-based methods \citep{bojkovic2024differentially, raisa2023noise,liu2016model}, and simulation-based methods \citep{awan2023simulation}, among others.

While research on PPIE has been growing, limitations remain in current techniques from methodological, computational, and application perspectives. \emph{First}, most methods are designed for specific basic inferential tasks (e.g., Gaussian means), creating a need for more general PPIE procedures that can accommodate a wide range of statistical inference problems. \emph{Second}, some existing methods are compatible only with certain types of DP guarantees (e.g., $\varepsilon$-DP), limiting their applicability. \emph{Third}, even for basic PPIE tasks, there is considerable room to improve existing methods in achieving an optimal trade-off between privacy and utility, particularly in practically meaningful privacy loss settings with better computational efficiency, as some existing sampling-based methods tend to be computationally intensive. \emph{Fourth}, there is a lack of comprehensive comparisons on the practical feasibility and utility of existing PP inferential methods across common statistical problems, which is essential for guiding their practical applications.
\subsection{Our work and contributions}
In this work, we address several research and application gaps in PPIE identified in Section \ref{sec:related_work}. 
Our main contributions are summarized below. 
\begin{itemize}[leftmargin=24pt, itemsep=0pt] 
    \item We introduce a formal definition of valid PPIE and propose PRECISE, a new approach for PPIE via consistent estimation of PP posterior quantiles. PRECISE is problem- and model-agnostic and broadly applicable whenever Bayesian posterior samples can be obtained. It is robust to user-specified global bounds on data or parameter space -- a persistent challenge in preserving utility for PP inference. This results in a superior privacy-utility tradeoff compared to existing PPIE methods, which are often highly sensitive to global bounds specifications.
    \item We define global sensitivity (GS) of the posterior density for the parameter of interest to be used with a proper randomized mechanism to achieve formal DP guarantees. We further introduce a convenient analytically approximate GS variant when the sample data size is large, as well as an upper bound for the GS to facilitate practical implementation. 
    \item We theoretically analyze the Mean-Squared-Error (MSE) consistency of the private quantiles obtained by PRECISE toward their non-private posterior quantiles, and derive the convergence rate in sample size and privacy loss parameter.  
    \item Our empirical results in various experimental settings show that PRECISE achieves nominal coverage with significantly narrower intervals, whereas other PPIE methods may either under-cover or produce unacceptably wide intervals at low privacy loss, providing strong evidence on the superior performance by PRECISE in the privacy-utility trade-offs.
    \item We also propose an exponential-mechanism-based PP posterior quantile estimator and examine its theoretical properties and practical limitations, offering further context for the advantages of PRECISE.
\end{itemize}

\section{Preliminaries}\label{sec:prelim}
In this section, we provide a brief overview of the basic concepts in differential privacy (DP). For the definitions below, we refer two datasets $\mathbf{x}$ and $\mathbf{x}'$ as neighbors (denoted by $d(\mathbf{x}, \mathbf{x}')=1$) if $\mathbf{x}$ differs from $\mathbf{x}'$ by exactly one record by either removal or substitution.
\begin{defn}[$(\varepsilon,\delta)$-DP \citep{dwork2006our, dwork2006calibrating}]\label{def:epsdelta}
    A randomized algorithm $\mathcal{M}$ is of $(\varepsilon, \delta)$-DP if for all pairs of neighboring datasets $(\mathbf{x}, \mathbf{x}')$ and for any subset $\mathcal{S} \subset \mbox{Image}(\mathcal{M})$,
    \begin{equation}
        \Pr(\mathcal{M}(\mathbf{x})\in \mathcal{S}) \leq e^{\varepsilon}\cdot\Pr(\mathcal{M}(\mathbf{x}')\in \mathcal{S}) + \delta.
    \end{equation}
\end{defn}
$\varepsilon \!>\! 0$ and $\delta \!\geq\! 0$ are privacy loss parameters. When $\delta \!=\! 0$, $(\varepsilon, \delta)$-DP reduces to pure $\varepsilon$-DP. Smaller values of $\varepsilon$ and $\delta$ imply stronger privacy guarantees for the individuals in a dataset as the outputs based on $\mathbf{x}$ and $\mathbf{x}'$ are more similar. 

\textbf{Laplace mechanism} \citep{dwork2006calibrating} is a widely-used mechanism to achieve $\varepsilon$-DP. Let $\boldsymbol{s} = (s_1, \ldots, s_r)$ denote the statistics calculated from a dataset; and its sanitized version via the Laplace mechanism is $\mathbf{s}^*= \mathbf{s} + \mathbf{e}$, where $\mathbf{e}=\{e_j\}_{j=1}^r$ with $e_j\sim \mbox{Laplace}(0, \Delta_1/\varepsilon)$, and $\Delta_1 \!=\!\max_{\mathbf{x}, \mathbf{x}', d(\mathbf{x}, \mathbf{x}')=1}\!||\boldsymbol{s}(\mathbf{x}) - \boldsymbol{s}(\mathbf{x}')||_1$ is the \textit{$\ell_1$ global sensitivity} of $\boldsymbol{s}$, representing the maximum change in $\boldsymbol{s}$ between two neighboring datasets in $\ell_1$ norm. Higher sensitivity requires more noise to achieve the pre-set privacy guarantee.

\textbf{Exponential mechanism} \citep{mcsherry2007mechanism} is a general mechanism of $\varepsilon$-DP and releases sanitized $s^*$ with probability $\propto\exp(\varepsilon\cdot u(s^*|\mathbf{x})/2\Delta_u)$, where $u$ is a utility function that assigns a score to every possible output $s^*$ and $\Delta_u$ is the $\ell_1$ global sensitivity of $u$.

\begin{defn}[$\mu$-GDP \citep{dong2022gaussian}]\label{def:mu}     Let $\mathcal{M}$ be a randomized algorithm and $\mathcal{S}$ be any subset of $\mbox{Image}(\mathcal{M})$. Consider the hypothesis test $H_0\!: S\!\sim\! \mathcal{M}(\mathbf{x})$ versus $H_1\!: S\!\sim\! \mathcal{M}(\mathbf{x}')$, where $d(\mathbf{x}, \mathbf{x}')=1$. $\mathcal{M}$ is of $\mu$-Gaussian DP if it satisfies
    \begin{equation}
    T(\mathcal{M}(\mathbf{x}), \mathcal{M}(\mathbf{x}'))(\alpha)\geq \Phi(\Phi^{-1}(1-\alpha)-\mu), 
    \end{equation}
    where $T(\cdot, \cdot)(\alpha)$ is the minimum type II error among all such tests at significance level $\alpha$ and $\Phi(\cdot)$ is the CDF of the standard normal distribution. 
\end{defn}

In less technical terms, Definition \ref{def:mu} states that $\mathcal{M}$ is of $\mu$-GDP if distinguishing any two neighboring datasets given the information sanitized via $\mathcal{M}$  is at least as difficult as distinguishing $\mathcal{N}(0,1)$ and $\mathcal{N}(\mu, 1)$. $(\varepsilon,\delta)$-GDP relates to $\mu$-GDP, with one $\mu$ corresponding to infinite pairs of  $(\varepsilon,\delta)$. 

\begin{lemma}[Conversion between $(\varepsilon, \delta)$-DP and $\mu$-GDP \citep{dong2022gaussian}]
\label{lemma:mu_epsdelta} 
A mechanism is of $\mu$-GDP if and only if it is of $(\varepsilon, \delta(\varepsilon))$-DP for all $\varepsilon \geq 0$, where $\delta(\varepsilon) = \Phi(-\varepsilon/\mu + \mu/2) - e^{\varepsilon}\Phi(-\varepsilon/\mu - \mu/2)$.
\end{lemma}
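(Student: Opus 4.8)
The plan is to pass through the trade-off-function formulation of privacy that underlies Definition \ref{def:mu}. Recall the standard characterization: a mechanism $\mathcal{M}$ is $(\varepsilon,\delta)$-DP if and only if, for every neighboring pair $(\mathbf{x},\mathbf{x}')$, the minimal type II error curve $T(\mathcal{M}(\mathbf{x}),\mathcal{M}(\mathbf{x}'))$ dominates the piecewise-linear function
\[
f_{\varepsilon,\delta}(\alpha)=\max\{0,\ 1-\delta-e^{\varepsilon}\alpha,\ e^{-\varepsilon}(1-\delta-\alpha)\},
\]
while $\mathcal{M}$ is $\mu$-GDP if and only if $T(\mathcal{M}(\mathbf{x}),\mathcal{M}(\mathbf{x}'))\ge G_\mu$ for all such pairs, where $G_\mu(\alpha)=\Phi(\Phi^{-1}(1-\alpha)-\mu)$ is exactly the trade-off function of testing $\mathcal{N}(0,1)$ against $\mathcal{N}(\mu,1)$. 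Consequently, the lemma reduces to the analytic envelope identity $G_\mu=\sup_{\varepsilon\ge 0}f_{\varepsilon,\delta(\varepsilon)}$ with $\delta(\varepsilon)$ as stated; once this is established, both directions of the ``iff'' follow from monotonicity of trade-off-function domination.

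The first concrete step is to compute the privacy profile of the Gaussian pair, $\delta(\varepsilon):=\sup_{\mathcal{S}}\big(\Pr_{\mathcal{N}(\mu,1)}(\mathcal{S})-e^{\varepsilon}\Pr_{\mathcal{N}(0,1)}(\mathcal{S})\big)$. Because the likelihood ratio $\exp(\mu z-\mu^2/2)$ is increasing in $z$, the supremum is attained on the half-line $\mathcal{S}^\star=\{z>\varepsilon/\mu+\mu/2\}$; evaluating the two Gaussian tails on $\mathcal{S}^\star$ gives $\Pr_{\mathcal{N}(\mu,1)}(\mathcal{S}^\star)=\Phi(-\varepsilon/\mu+\mu/2)$ and $\Pr_{\mathcal{N}(0,1)}(\mathcal{S}^\star)=\Phi(-\varepsilon/\mu-\mu/2)$, which yields precisely the claimed $\delta(\varepsilon)$. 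This computation simultaneously shows that the pair $(\mathcal{N}(0,1),\mathcal{N}(\mu,1))$ is $(\varepsilon,\delta(\varepsilon))$-DP for every $\varepsilon\ge0$, hence $G_\mu\ge f_{\varepsilon,\delta(\varepsilon)}$ pointwise for all $\varepsilon\ge0$.

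For the forward direction, if $\mathcal{M}$ is $\mu$-GDP then $T(\mathcal{M}(\mathbf{x}),\mathcal{M}(\mathbf{x}'))\ge G_\mu\ge f_{\varepsilon,\delta(\varepsilon)}$ for every neighboring pair and every $\varepsilon$, so $\mathcal{M}$ is $(\varepsilon,\delta(\varepsilon))$-DP for all $\varepsilon\ge0$. For the converse, $(\varepsilon,\delta(\varepsilon))$-DP for all $\varepsilon$ gives $T(\mathcal{M}(\mathbf{x}),\mathcal{M}(\mathbf{x}'))\ge\sup_{\varepsilon\ge0}f_{\varepsilon,\delta(\varepsilon)}$, and it remains to prove the reverse envelope inequality $\sup_{\varepsilon\ge0}f_{\varepsilon,\delta(\varepsilon)}\ge G_\mu$. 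This is the main obstacle. I would argue it via convex duality: $G_\mu$ is convex and symmetric ($G_\mu=G_\mu^{-1}$) as a trade-off function of two Gaussians, while each nontrivial branch of $f_{\varepsilon,\delta}$ is affine, so it suffices to show that at each $\alpha_0\in(0,1)$ some $f_{\varepsilon_0,\delta(\varepsilon_0)}$ is tangent to $G_\mu$. Differentiating gives $G_\mu'(\alpha_0)=-e^{\varepsilon_0}$ for the unique $\varepsilon_0$ with $\Phi^{-1}(1-\alpha_0)=\varepsilon_0/\mu+\mu/2$, and the tangent line at $\alpha_0$ then simplifies, using the expressions for $\delta(\varepsilon_0)$ just derived, to $1-\delta(\varepsilon_0)-e^{\varepsilon_0}\alpha$ — exactly the active branch of $f_{\varepsilon_0,\delta(\varepsilon_0)}$; the symmetric half of the curve is handled by the involution $G_\mu=G_\mu^{-1}$ together with the branch $e^{-\varepsilon}(1-\delta-\alpha)$. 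Combining the tangency at every point with convexity yields $G_\mu=\sup_{\varepsilon\ge0}f_{\varepsilon,\delta(\varepsilon)}$, hence $T(\mathcal{M}(\mathbf{x}),\mathcal{M}(\mathbf{x}'))\ge G_\mu$, i.e. $\mathcal{M}$ is $\mu$-GDP, which closes the equivalence.
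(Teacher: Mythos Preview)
The paper does not prove this lemma; it is simply quoted from \citet{dong2022gaussian} as background. Your argument is correct and is essentially the proof given in that reference: you compute the privacy profile of the pair $(\mathcal{N}(0,1),\mathcal{N}(\mu,1))$ via Neyman--Pearson to obtain the stated $\delta(\varepsilon)$, and then use the convex-duality/tangent-line picture to show that $G_\mu$ is the upper envelope of the family $\{f_{\varepsilon,\delta(\varepsilon)}\}_{\varepsilon\ge 0}$. One small point worth making explicit: your tangency computation only produces $\varepsilon_0\ge 0$ when $\alpha_0\le \Phi(-\mu/2)$ (equivalently, where $G_\mu'(\alpha_0)\le -1$); for the complementary range you correctly invoke the symmetry $G_\mu=G_\mu^{-1}$ and the second affine branch $e^{-\varepsilon}(1-\delta-\alpha)$, but spelling out that the two ranges together cover all of $(0,1)$ would make the envelope identity airtight.
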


\textbf{Gaussian mechanism} can be used achieve both $(\varepsilon,\delta)$-GDP and  $\mu$-GDP. In this work, we use the Gaussian mechanism of  $\mu$-GDP. Specifically, sanitized $\boldsymbol{s}^*\!=\! \boldsymbol{s}+\boldsymbol{e}$, where  $\mathbf{e}=\{e_j\}_{j=1}^r$ and $e_j \!\sim\! \mathcal{N}(0, \Delta_2^2/\mu^2)$, and $\Delta_2\!=\!\max_{\mathbf{x}, \mathbf{x}', d(\mathbf{x}, \mathbf{x}')=1}\!||\boldsymbol{s}(\mathbf{x}) \!-\! \boldsymbol{s}(\mathbf{x}')||_2$ is the $\ell_2$ global sensitivity of $\boldsymbol{s}$, the maximum change in $\boldsymbol{s}$ between two neighboring datasets in $\ell_2$ norm.

DP and many of its variants, $\mu$-GDP included, have appealing properties for both research and practical applications. For example, they are \textit{immune to post-processing}; that is, any further processing on the differentially private output without accessing the original data maintains the same privacy guarantees. Furthermore, the \textit{privacy loss composition} property of DP tracks overall privacy loss from repeatedly accessing and releasing information from a dataset. The basic composition principle states that if $\mathcal{M}_1$ is of $(\varepsilon_1,\delta_1)$-DP (or $\mu_1$-GDP) and $\mathcal{M}_2$ is of  $(\varepsilon_2,\delta_2)$-DP (or $\mu_2$-GDP), then $\mathcal{M}_1 \!\circ\! \mathcal{M}_2$ is of $(\varepsilon_1 \!+\! \varepsilon_1, \delta_1\!+\!\delta_2)$-DP (or $\sqrt{\mu_1^2 \!+\! \mu_2^2}$-GDP) if  $\mathcal{M}_1$ and  $\mathcal{M}_2$ operate on the same dataset. The privacy loss composition bound in $\mu$-GDP is tighter than that of the $(\varepsilon,\delta)$-DP.

\section{Privacy-Preserving Interval Estimation (PPIE)}\label{sec:PPIE}
Before introducing PRECISE as a PPIE procedure, we provide a formal definition of PPIE in Definition \ref{defn:CI}. The definition applies to PP intervals constructed in both the Bayesian and frequentist frameworks. 

\begin{defn}[privacy-preserving interval estimate (PPIE)]\label{defn:CI}
Let $\mathbf{x}$ be a sensitive dataset of $n$ records that is a random sample from the probability distribution $f(\x|\boldsymbol{\theta}_0)$ with unknown parameters $\boldsymbol{\theta}_0$.
Denote the non-private interval estimator for $\bs\theta_0$ at confidence level $1-\alpha$ by $(l(\x), u(\x))$ and the DP mechanism by $\M$. The PPIE at privacy loss $\bs\eta$ for $\boldsymbol{\theta}_0$,  denoted by interval ($\M(l(\mathbf{x})), \M(u(\mathbf{x}))$, satisfies 
\begin{equation}\label{eqn:CI}
    \textstyle{\Pr_{\M,\mathbf{x}}}(\M(l(\mathbf{x}))<\boldsymbol{\theta}_0<\M(u(\mathbf{x})))\ge 1-\alpha \; \mbox{ for every } \boldsymbol{\theta}_0,\boldsymbol{\eta}.
\end{equation}
\end{defn}\vspace{-6pt}
Definition \ref{defn:CI} is not exact but conservative --  that is, instead of requiring $ \Pr(\M(l(\mathbf{x}))\!<\!\boldsymbol{\theta}_0\!<\!\M(u(\mathbf{x})))\!=\!1\!-\!\alpha$, it requires the probability $\ge1-\alpha$ as intervals with exact coverage may be difficult to construct, especially dealing with discrete distributions. On the other and, the construction of an interval should aim to keep the width  $|\M(u(\mathbf{x}))-\M(l(\mathbf{x}))|$ as small as possible while satisfying Definition \ref{defn:CI}; otherwise, the PPIE would be conservative and meaningless.\footnote{For completeness, Definition \ref{defn:CI} can be extended to scenarios where there exist other parameters that are not of immediate inferential interest. That is, $\mathbf{x}\sim f(X|\boldsymbol{\theta}_0,\boldsymbol{\beta}_0)$ and Eq.~\eqref{eqn:CI} holds for every $\boldsymbol{\theta}_0,\boldsymbol{\beta}_0$ and $\bs\eta$.}
The DP mechanism $\M$ in Definition \ref{defn:CI} can be of  $(\varepsilon,\delta)$-DP ($\bs\eta\!=\!(\varepsilon,\delta)$) or any of its variants, such as $\mu$-GDP ($\bs\eta\!=\!\mu$). 

\subsection{Overview of the PRECISE procedure}\label{sec:RAP}\vspace{-6pt}
Let $ \{\x_i\}_{i=1}^n\overset{\text{i.i.d}}{\sim} f(\x|\bs\theta_0)$ denote a dataset containing data points from $n$ individuals whose privacy are to be protected; $\x_i \in \mathbb{R}^q$ and $\bs\theta_0 = (\theta^{(1)}_0, \theta^{(2)}_0, \ldots, \theta^{(p)}_0)^{\top} \in \bs\Theta$ represents the $p$-dimensional true parameter vector. The PRECISE  procedure constructs a pointwise interval estimation for $\bs\theta_0$ in a Bayesian framework, with the steps outlined as follows. 

First, it draws $m$ posterior samples $\{\theta_j^{(k)}\}_{j=1}^m$ 
for $k=1,\ldots,p$ and constructs a histogram $H^{(k)}$ based on these $m$ samples.  
Second, it perturbs the bin counts in $H^{(k)}$ using a DP mechanism $\mathcal{M}$ at a pre-specified privacy loss $\bs\eta$, resulting in a \emph{Privacy-Preserving Posterior} (P$^3$) histogram $H^{(k)*}$. \emph{Third}, at a specified confidence coefficient $1-\alpha\! \in\!(0,1)$ for PPIE, identify two bins in $H^{(k)*}$, the cumulative probability up to which is the closest to $\frac{\alpha}{2}$ and $1-\frac{\alpha}{2}$ respectively. Finally, release a random sample from each of these two identified intervals as the PP estimate of the population posterior quantiles $F^{-1}_{\theta^{(k)}|\{\x_i\}_{i=1}^n}\left(\frac{\alpha}{2}\right)$ and $F^{-1}_{\theta^{(k)}|\{\x_i\}_{i=1}^n}\left(1-\frac{\alpha}{2}\right)$. 

A key step in the PRECISE procedure is the construction of the  P$^3$  posterior histogram for the parameter of interest. To achieve this, we will first determine the sensitivity of the histogram given a certain number of posterior samples and then design a proper randomized mechanism to ensure its DP guarantees. 

\subsection{Global sensitivity of posterior histogram}\label{sec:GS}
It is important to note that \emph{sanitizing a histogram constructed from a set of posterior samples of $\theta^{(k)}$ given sensitive data $\x$ is fundamentally different and more complex than sanitizing a histogram $H(\x)$ of the sensitive data $\x$ itself}. Specifically, the DP definition pertains to changing one record in the sensitive dataset $\x$. Removing a record from $\x$ only affects one bin in $H(\x)$ and thus the global sensitivity of $H(\x)$, represented in the count, is 1 if the neighboring relation is removal and 2 if the neighboring relation is substitution. 
In contrast, our goal is to sanitize the histogram of a parameter $H(\theta|\x)$ given a set of posterior samples from $f(\theta|\x)$. Changing one record in $\x$ will alter the whole posterior distribution from $f(\theta|\x)$ to  $f(\theta|\x')$ and the influence is indirect and more complex compared to how it affects $H(\x)$, eventually complicating the calculation of the sensitivity of $H(\theta|\x)$. Figure  \ref{fig:toy} illustrates how the sensitivities of $H(\x)$ and $H(\theta|\x)$ differ using a toy example.
\begin{figure}[!h]
    \centering \vspace{-2.5in}
    \begin{picture}(400,400) 
        \put(-30,0){\includegraphics[width=0.6\textwidth]{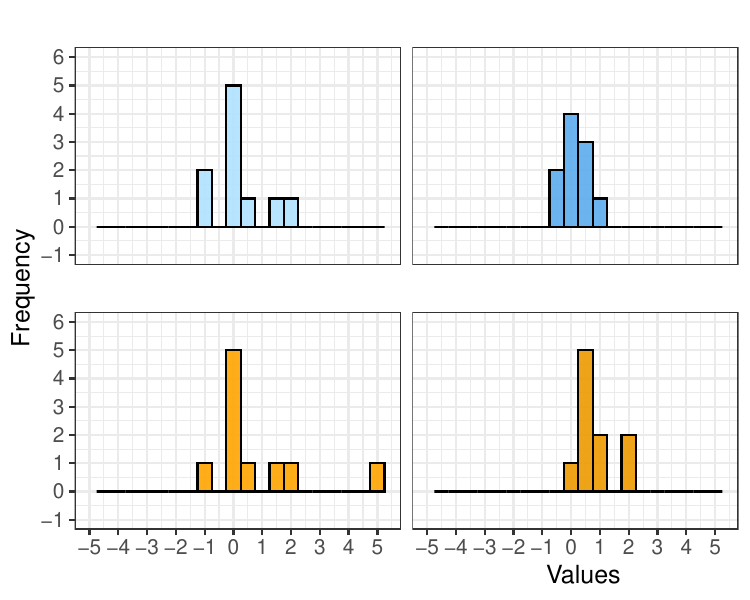}}
        \put(258,115){\includegraphics[width=0.288\textwidth, trim={0.25in 0.09in 0in, 0in},clip]{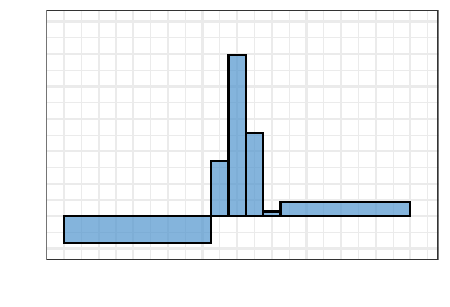}}
        \put(258,6.5){\includegraphics[width=0.286\textwidth, trim={0.25in 0.1in 0in, 0.31in},clip]{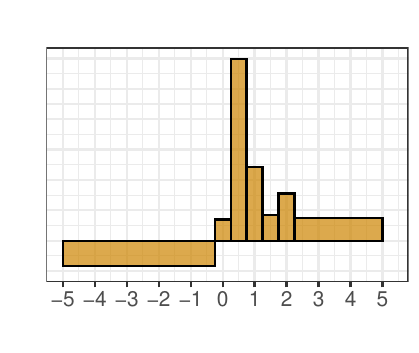}}
        \put(50, 212){\footnotesize{$H(\x)$}}
        \put(130, 212){\footnotesize{$H(\theta|x): \{\theta_j\}_{j=1}^{m=10} \sim f(\theta|\x)$}}
        \put(310, 211){\footnotesize{$H^*(\theta|\x)$}}
        \put(50, 113){\footnotesize{$H(\x')$}}
        \put(130, 113){\footnotesize{$H(\theta|\x'): \{\theta'_j\}_{j=1}^{m=10} \sim f(\theta|\x')$}}
        \put(310, 112){\footnotesize{$H^*(\theta|\x')$}}
    \end{picture}
    \captionsetup{justification=justified, singlelinecheck=false}\vspace{-9pt}
    \caption{A toy example to illustrate the difference in how alternating one individual in dataset $\x$ affects the histogram of data $\x$ (first column) and the histogram of posterior samples drawn from $f(\theta|\x)$ (second column). $\x=\{x_i\}_{i=1}^{10}\sim\mathcal{N}(0,1)$;  $L_{\x} = L = -5$, $U_{\x}=U=5$; the neighboring dataset $\x'$ is constructed by substituting the min($\x$) with $U_{\x}$.  P$^3$ histogram (third column) is obtained via the Laplace mechanism at $\varepsilon=1$.}
    \label{fig:toy}\vspace{-12pt}
\end{figure}

\begin{defn}[Global sensitivity of posterior distribution  $G(n)$]\label{def:Gn} 
For a scalar parameter $\theta$, the global sensitivity (GS) of its posterior distribution given data $\x$ of size $n$ is
\begin{align}\label{Eqn:Gn}
G(n) \triangleq\sup_{\theta\in \Theta, d(\x,\x')=1}|f(\theta|\x) \!-\! f(\theta|\x')| = \sup_{d(\x,\x')=1} \mbox{TVD}(f(\theta|\x),f(\theta|\x')),
\end{align}
where TVD stands for total variation distance.
\end{defn}

Given that the ``statistic'' in this case is a probability distribution/mass function, other divergence or distance measures between distributions can also be used to define GS, such as KL divergence, Hellinger distance, Wasserstein distance, etc, in addition to TVD. For example, the KL relates to TVD by Pinsker’s inequality and $2G(n)$ in Eq.~\eqref{Eqn:Gn} serves as a lower bound to the GS defined with the KL divergence; for Hellinger distance, given its relationship with TVD, $\sqrt{G(n)}$ upper bounds the Hellinger-distance-based GS. 
Compared to these potential alternative GS definitions, the TVD-based GS in  Definition \ref{def:Gn} offers several advantages. First, it does not involve integrals like the other metrics (KL, Hellinger,  Wasserstein), the computation of which poses significant analytical difficulties and often lacks closed-form solutions. Second, in cases where the posterior distribution does not have a closed-form expression and is instead approximated based on Monte Carlo posterior samples of $\theta$, the TVD-based definition naturally aligns with the GS based on the histogram, though every bin count can be affected, as illustrated in Figure \ref{fig:toy}. 

Definition \ref{def:Gn} suggests $G(n)$ is a function of sample size $n$, which is assumed to be fixed and known, especially for the sake of statistical inference.  WLOS, we examine the substitution neighboring relation, where $\x$ and $\x'$ are both of sample size $n$. Direct evaluation of $G(n)$ requires comparing the posterior densities constructed from all possible neighboring datasets of size $n$, an impossible task unless $f(\theta|\x)$ has a discrete, finite domain. A more practical alternative is to derive an analytical approximation or an upper bound for $G(n)$ that can be conveniently calculated in practical implementations. 
Theorem \ref{thm:G} provides such an approximation for when $n$ is large.

\begin{theorem}[Analytical approximation $G_0$ to $G(n)$]\label{thm:G} 
Assume that prior $f(\theta)$ is non-informative relative to the amount of data, let $\widehat{\theta}_n$ and $\widehat{\theta}_n'$ be the maximum a posteriori (MAP) estimates evaluated on two neighboring datasets $\x$ and $\x'$ with substitution relation, respectively.  If $\widehat{\theta}'_n - \widehat{\theta}_n \approx \mathcal{O}(n^{-1})$, 
then
\begin{align}
    G(n)  \approx \left(\frac{CI_{\theta_0}}{\sqrt{2\pi}}\right)e^{-\frac{1}{2}+\mathcal{O}(n^{-\frac{1}{2}})}+ \mathcal{O}(n^{-\frac{1}{2}})  \Longrightarrow G_0  = \frac{C\cdot I_{\theta_0}}{\sqrt{2e\pi}}\mbox{ as } n\rightarrow \infty,\label{eqn:G_uni}
\end{align}
where $C$ is a constant and $I_{\theta_0}$ is the Fisher information at the true parameter $\theta_0$ given a single data point $x$. 
\end{theorem}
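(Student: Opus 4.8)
The plan is to leverage the Bernstein--von Mises (BvM) phenomenon: when the prior is non-informative relative to the data, the posterior $f(\theta|\x)$ is asymptotically Gaussian, concentrated around the MAP estimate $\widehat\theta_n$ with variance $\approx (nI_{\theta_0})^{-1}$. First I would write $f(\theta|\x)\approx \varphi\!\left(\theta;\widehat\theta_n,(nI_{\theta_0})^{-1}\right)$ and $f(\theta|\x')\approx \varphi\!\left(\theta;\widehat\theta_n',(nI_{\theta_0})^{-1}\right)$, where $\varphi(\cdot;\mu,\sigma^2)$ is the Gaussian density, with an $\mathcal{O}(n^{-1/2})$ error term coming from the BvM approximation (the remainder in the local asymptotic expansion of the log-posterior). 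Substituting into Definition~\ref{def:Gn}, the GS becomes $\sup_\theta\big|\varphi(\theta;\widehat\theta_n,\sigma_n^2)-\varphi(\theta;\widehat\theta_n',\sigma_n^2)\big|+\mathcal{O}(n^{-1/2})$ with $\sigma_n^2=(nI_{\theta_0})^{-1}$.

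Next I would reduce the supremum of the difference of two equal-variance Gaussians that are shifted by a small amount $d_n\triangleq \widehat\theta_n'-\widehat\theta_n$. By a first-order Taylor expansion in the shift, $\varphi(\theta;\widehat\theta_n',\sigma_n^2)-\varphi(\theta;\widehat\theta_n,\sigma_n^2)\approx -d_n\,\partial_\mu\varphi = d_n\,\frac{\theta-\widehat\theta_n}{\sigma_n^2}\varphi(\theta;\widehat\theta_n,\sigma_n^2)$, and the supremum over $\theta$ of $\big|\frac{\theta-\widehat\theta_n}{\sigma_n^2}\varphi(\theta;\widehat\theta_n,\sigma_n^2)\big|$ is attained at $\theta-\widehat\theta_n=\pm\sigma_n$, giving value $\frac{1}{\sigma_n^2}\cdot\sigma_n\cdot\frac{1}{\sqrt{2\pi}\sigma_n}e^{-1/2}=\frac{e^{-1/2}}{\sqrt{2\pi}\,\sigma_n^2}$. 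Combining with $\sigma_n^2=(nI_{\theta_0})^{-1}$ yields $\sup_\theta|\cdot|\approx |d_n|\,\frac{nI_{\theta_0}}{\sqrt{2\pi}}e^{-1/2}$. Then I invoke the hypothesis $d_n=\widehat\theta_n'-\widehat\theta_n\approx\mathcal{O}(n^{-1})$, writing $|d_n|= C/n + o(1/n)$ for a constant $C$ (this $C$ is exactly the constant appearing in the statement, absorbing the leading coefficient of the influence-function-type expansion of the MAP estimate under a single-record substitution). This gives $G(n)\approx \frac{C\,I_{\theta_0}}{\sqrt{2\pi}}e^{-1/2}+\mathcal{O}(n^{-1/2})$, and carrying the lower-order BvM error through the exponent produces the stated form $\left(\tfrac{CI_{\theta_0}}{\sqrt{2\pi}}\right)e^{-\frac12+\mathcal{O}(n^{-1/2})}+\mathcal{O}(n^{-1/2})$; letting $n\to\infty$ collapses it to $G_0=\frac{CI_{\theta_0}}{\sqrt{2e\pi}}$.

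I expect the main obstacle to be making the BvM step rigorous enough to control all error terms uniformly in $\theta$ and simultaneously over the supremum in Definition~\ref{def:Gn}. In particular, the Gaussian approximation to the posterior is accurate in the bulk (within $\mathcal{O}(\sigma_n)$ of the MAP) but deteriorates in the tails, so I would need to argue that the tail contributions to the TVD are negligible — of order $\mathcal{O}(n^{-1/2})$ or smaller — which is where the additive $\mathcal{O}(n^{-1/2})$ remainder in \eqref{eqn:G_uni} originates. A second delicate point is justifying the Taylor-in-the-shift step when $d_n$ and $\sigma_n$ are of comparable-looking but actually different orders ($d_n=\mathcal{O}(n^{-1})$ versus $\sigma_n=\mathcal{O}(n^{-1/2})$, so $d_n/\sigma_n=\mathcal{O}(n^{-1/2})\to 0$), which is what makes the first-order expansion legitimate with a higher-order remainder; I would also note that the Fisher information in $\sigma_n^2$ should be the observed/expected information at the MAP, which converges to $I_{\theta_0}$, contributing another $\mathcal{O}(n^{-1/2})$-type fluctuation that gets folded into the exponent. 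The regularity conditions needed (smoothness of the log-likelihood, positivity and continuity of $I_{\theta_0}$, a bound on the per-record influence to guarantee $d_n=\mathcal{O}(n^{-1})$) would be stated as the standing assumptions; I would treat the $\mathcal{O}(n^{-1})$ behavior of the MAP shift as given by hypothesis rather than re-deriving it from M-estimation theory.
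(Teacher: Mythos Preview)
Your proposal is correct and follows the same overall strategy as the paper: invoke Bernstein--von Mises to replace both posteriors by Gaussians with common variance $\sigma_n^2=(nI_{\theta_0})^{-1}$ and means differing by $d_n=\widehat\theta_n'-\widehat\theta_n=C/n+o(n^{-1})$, then maximize the pointwise difference over $\theta$. The paper's execution is more laborious than yours: it carries the Taylor expansion of $g(\mu)=\exp(-n(\theta-\mu)^2/(2I_{\theta_0}^{-1}))$ to \emph{third} order in the shift, proves uniqueness of the maximizer via a log-concavity argument, and then explicitly solves a quadratic in the offset $d$ to find that the maximizer sits at $\theta-\widehat\theta_n\approx -C/(2n)\pm\sqrt{I_{\theta_0}^{-1}/n}$, i.e.\ at $\pm\sigma_n$ to leading order. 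Your shortcut---observe that $d_n/\sigma_n=\mathcal{O}(n^{-1/2})\to 0$, keep only the first-order term $d_n\,\partial_\mu\varphi$, and read off the maximizer of $|\partial_\mu\varphi|$ as $\theta-\widehat\theta_n=\pm\sigma_n$ directly---reaches the same leading constant $CI_{\theta_0}/\sqrt{2e\pi}$ with less machinery; the second-order remainder you discard is $\mathcal{O}(d_n^2/\sigma_n^3)=\mathcal{O}(n^{-1/2})$, consistent with the stated error. What the paper's longer route buys is an explicit $\mathcal{O}(n^{-1})$ correction to the location of the maximizer and a self-contained unimodality proof, neither of which is needed for the asymptotic statement in Theorem~\ref{thm:G}.
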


The detailed proof of Theorem \ref{thm:G} is provided in Appendix \ref{ape:G_proof_uni}\footnote{Theorem \ref{thm:G} can be extended to the multidimensional case $\bs\theta \!=\! (\theta^{(1)}, \ldots, \theta^{(p)})^{\top} \!\in \bs\Theta$ for $p>1$. We show that $G(n) \asymp n^{\frac{p-1}{2}}$ (see Appendix \ref{ape:G_proof_multi} for the proof), implying that $G(n)$ does not converge to a constant as $n\rightarrow \infty$ when $p > 1$. We focus on one-dimensional parameter $\theta$ in this work.}. In brief, based on the Bernstein-von Mises theorem, we approximate the two posteriors given the two neighboring datasets as Gaussian for a large $n$ and then apply the third-order Taylor expansion to approximate their difference and leverage the symmetry of the neighboring datasets to identify the maximizer of the absolute difference. 

Eq.~\eqref{eqn:G_uni} suggests that $G(n)$ converges to $G_0$ at a rate of $\mathcal{O}(n^{-1/2})$, implying that $G(n)$ can be approximated in practice by its limiting constant $G_0=CI_{\theta_0}/\sqrt{2e\pi}$ for a sufficiently large $n$. 
The constant $C$ and Fisher information $I_{\theta_0}$ are parameter- or model-dependent.  For example, consider the mean $\mu$ and variance $\sigma^2$ of a Gaussian likelihood. Assume a non-informative prior $p(\mu,\sigma^2)\propto (\sigma^2)^{-3/2}$, their MAP estimates are the sample mean and variance of data $\x$, respectively. Suppose  $\x=\{x_i\}_{i=1}^n$ and $\x'$ differ from $\x$ in the last observation WLOG, that is, $x_n$ is replaced by $x'_n$, then
\begin{align*}
C & \triangleq |x'_n-x_n|  \mbox { for }\hat{\mu} = \bar{x}; \\
C  & \triangleq |(x_n - \bar{x}_{n-1})^2 - (x_n' - \bar{x}_{n-1})^2|  \mbox { for } \hat{\sigma}^{2} = \textstyle n^{-1}\sum_{i=1}^n(x_i-\bar{x})^2. 
\end{align*}
To calculate the constant $C$ above, global bounds $(L_\x, U_\x)$ on $\x$ will need to be specified. In terms of $I_{\theta_0}$, it is $\sigma_0^{-2}$ for $\mu_0$ and  $\sigma_0^{-4}/2$ for $\sigma_0^2$, both of which involve the unknown $\sigma_0^{-2}$. To calculate $I_{\theta_0}$,  a lower global bound for $\sigma^2_0$ can be assumed, or an estimate of $\sigma_0^2$ can be plugged. In the latter case, the estimate needs to be sanitized before being plugged, incurring additional privacy costs. In many cases, $I_0$ may not have an analytically closed form like in this simple example, especially for uncommon likelihoods, high-dimensional data, and data with complex dependency structures. In such cases, numerical methods can be used.  

For small $n$, we recommend using an upper bound $\overline{G_0}$ to ensure DP guarantees. Despite its potential conservativeness, $\overline{G_0}$ may be a more preferable and practical choice than $G_0$ even when $n$ is large, as the analytical approximation of $G_0$ can be tedious if not challenging, as demonstrated above even for the simple Gaussian case.
Section \ref{sec:G0Gn} compares numerical approximation of $G(n)$,  analytical approximation $G_0$ from
Theorem~\ref{thm:G}, and upper bound $\overline{G_0}$ in some specific examples. 
 
After $G(n)$ is calculated numerically or approximated analytically or upper-bounded, we may proceed with computing the GS of the posterior distribution of a parameter. Because Bayesian interval estimation is typically obtained using posterior samples since many practical problems lack closed-form posteriors, even when the $f(\theta|\x)$ is available in closed form, we compute the GS of the histogram constructed from posterior samples of $\theta$ rather than for $f(\theta|\x)$ per se. The result is stated in Theorem \ref{thm:H}. 
\begin{theorem}[GS of posterior histogram]\label{thm:H}
Let $n$ be the sample size of data $\x$, $m$ be the number of posterior samples on parameter $\theta$ from $f(\theta|\x)$, and $H$ be the histogram based on the $m$ posterior samples with bin width $h$. The GS of $H$ is 
\begin{equation}\label{eqn:DeltaH}
   \Delta_{H}=2mhG(n). 
\end{equation}
\end{theorem}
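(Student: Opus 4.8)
The plan is to view $H$ as the vector of bin counts and to bound the $\ell_1$ distance between the count vectors produced by two neighboring datasets. Index the bins by $b$, each of width $h$, and write the count assigned to bin $I_b$ as $c_b(\x)=m\!\int_{I_b} f(\theta|\x)\,d\theta = m\Pr(\theta\in I_b|\x)$, so that $H(\x)=(c_1(\x),c_2(\x),\dots)$ with $\sum_b c_b(\x)=m$. Since the data histogram's sensitivity was already measured in $\ell_1$ on the counts (value $2$ under substitution), I compute $\Delta_H=\sup_{d(\x,\x')=1}\sum_b|c_b(\x)-c_b(\x')|$, restricting WLOG to the substitution relation so that $\x$ and $\x'$ have the same size $n$ (hence the same $G(n)$).

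The argument rests on two ingredients. First, a \emph{per-bin bound}: for every bin,
\[
|c_b(\x)-c_b(\x')| = m\Big|\!\int_{I_b}\!\big(f(\theta|\x)-f(\theta|\x')\big)\,d\theta\Big|\le m\!\int_{I_b}\!\big|f(\theta|\x)-f(\theta|\x')\big|\,d\theta\le m h\sup_{\theta}\big|f(\theta|\x)-f(\theta|\x')\big|\le mhG(n),
\]
the last step being Definition \ref{def:Gn}. Second, \emph{conservation of mass}: because $\sum_b c_b(\x)=\sum_b c_b(\x')=m$, the positive and negative parts of $c(\x)-c(\x')$ have equal total, so writing $\mathcal B^{+}=\{b:c_b(\x)\ge c_b(\x')\}$ and $S^{+}=\bigcup_{b\in\mathcal B^{+}}I_b$,
\[
\sum_b|c_b(\x)-c_b(\x')| = 2\!\!\sum_{b\in\mathcal B^{+}}\!\!\big(c_b(\x)-c_b(\x')\big) = 2m\!\int_{S^{+}}\!\big(f(\theta|\x)-f(\theta|\x')\big)\,d\theta .
\]
Thus it suffices to show that the net mass gained on $S^{+}$ is at most $hG(n)$; combined with the preceding display this yields $\Delta_H\le 2mhG(n)$, and tightness follows from the same mechanism that makes the data-histogram sensitivity $2$ sharp, namely a configuration in which essentially all of the transferred mass leaves one bin and enters one bin, each extreme differing by $mhG(n)$.

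The main obstacle is the last step: bounding $\int_{S^{+}}(f(\theta|\x)-f(\theta|\x'))\,d\theta$ by $hG(n)$, i.e.\ showing the region on which $f(\cdot|\x)$ dominates $f(\cdot|\x')$ contributes no more gained mass than a single bin's worth. Here I would use that $\x$ and $\x'$ are neighbors and invoke the regularity already exploited in Theorem \ref{thm:G} — Bernstein--von Mises normality of each posterior for large $n$, a unimodal limiting shape, and a MAP shift $\widehat\theta_n'-\widehat\theta_n=\mathcal O(n^{-1})$ — so that, up to the $\mathcal O(n^{-1/2})$ lower-order terms already present in \eqref{eqn:G_uni}, $S^{+}$ reduces to an interval no wider than $h$ on which the integrand is at most $G(n)$ in magnitude. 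The delicate point is that for a very fine histogram the two neighboring posteriors differ across many bins and the gained mass spreads out, so making this bound clean requires either working in the regime where $h$ is not negligible relative to the posterior spread, or replacing the crude per-bin $\sup$ estimate with a sharper one that exploits the shift structure of neighboring posteriors; when neither is convenient the conservative surrogate $\overline{G_0}$ can be substituted for $G(n)$ throughout without weakening the DP guarantee. Finally, since $\|\cdot\|_2\le\|\cdot\|_1$, the quantity $2mhG(n)$ also upper-bounds the $\ell_2$ sensitivity used by the Gaussian mechanism.
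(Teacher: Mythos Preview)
Your per-bin estimate $|c_b(\x)-c_b(\x')|\le mhG(n)$ via the integral bound matches the paper exactly (the paper phrases it through the mean value theorem for integrals, writing $p_b=f(\xi_b\mid\x)h$ with $\xi_b\in I_b$, but the content is the same). Where you depart from the paper is in how you pass from this per-bin bound to the global $\ell_1$ sensitivity.

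The paper does not use conservation of mass and never introduces or tries to control a set like your $S^{+}$. After recalling that $\|H_p-H'_p\|_1=2\,\mathrm{TVD}(H_p,H'_p)$ for the probability-normalized histograms, it asserts that this TVD equals $\sup_b|p_b-p'_b|$, so the per-bin bound gives $\Delta_{H_p}\le 2hG(n)$ in one line, and scaling to counts yields $\Delta_H=2mhG(n)$. That is the entire argument: $\ell_1$ distance $=2\times$(largest single-bin discrepancy) $\le 2hG(n)$.

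The obstacle you flag --- that for a fine grid the two neighboring posteriors differ across many bins, so the gained mass on $S^{+}$ need not be confined to a region of width $h$ --- is genuine, and the Bernstein--von Mises sketch you offer does not close it (the set $\{f(\cdot\mid\x)>f(\cdot\mid\x')\}$ for two nearby Gaussians is a half-line, not an interval of width $h$). But this difficulty is created by the conservation-of-mass decomposition you chose; the paper never confronts it because it works with the supremum over individual bins rather than the sum. If your goal is to reproduce the paper's proof, drop the $S^{+}$ analysis entirely and combine the per-bin mean-value estimate directly with the $\mathrm{TVD}=\sup_b|p_b-p'_b|$ identification the paper invokes.
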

The detailed proof of Theorem \ref{thm:H} is provided in Appendix \ref{ape:DPproof}. Briefly, the main proof idea is to 
upper bound the TVD between two discretized distributions (the posterior histograms given two neighboring datasets) using the mean value theorem for integrals. 

Users can pre-specify $m$. Eq.~\eqref{eqn:DeltaH} suggests that $\Delta_{H}$ increases linearly with $m$, which makes sense as releasing more posterior samples implies more information in the data $\x$ is also leaked, requiring a larger scale parameter of the randomized mechanism to ensure DP at a preset privacy loss.  A more user-friendly usage of Eq.~\eqref{eqn:DeltaH} is to fix $\Delta_{H}$ at a constant -- a convenient choice would be $1$ -- and back-calculate $m$. That is, 
\begin{equation}\label{eqn:DeltaH1}
   \Delta_{H}=1=2mhG(n) \quad\Longrightarrow\quad  m=\frac{1}{2hG(n)}. 
\end{equation}
Given the GS of $H$, one can sanitize the bin counts in $H$ to obtain a Privacy-preserving posterior (P$^3$) histogram $H^*$ via a proper randomized mechanism. Denote the vector of $B$ bin counts by $\mathbf{c}=\{c_1,\ldots,c_B\}$, where $\sum^B_{b=1} c_b=m$; then
\begin{align}
    H^* &= \M(H) =\mathbf{c}+\mathbf{e}, \mbox{ where } \mathbf{e}=\{e_1,\ldots,e_B\}\mbox{ and } e_j\mbox{ for } j=1,\ldots, B\sim\notag\\
    &\begin{cases}
        \mbox{Laplace$(0,\Delta_H/\epsilon)$ if the Laplace mechanism of $\epsilon$-DP is used}\\
    \mbox{$\mathcal{N}(0,\Delta^2_{H}/\mu^2)$ if the Gaussian mechanism of $\mu$-GDP is used.}
    \end{cases}\label{eqn:M}
\end{align}


\subsection{Construction of P$^3$ Histogram with DP guarantees}
Algorithm \ref{alg:P3_hist} lists the steps for constructing a univariate P$^3$ histogram. Based on some mild regularity conditions listed in Assumption \ref{con:DP}, which are readily satisfied as long as $h$ is not too small,  Alg. \ref{alg:P3_hist} adheres to DP guarantees (Theorem \ref{thm:DP}).
\begin{algorithm}[h]
\caption{Construction of P$^3$ Histogram}\label{alg:P3_hist}
\SetAlgoLined
\SetKwInOut{Input}{input}
\SetKwInOut{Output}{output}
\Input{posterior distribution $f(\theta|\x)$  (up to a constant), global bounds $(L,U)$ for $\theta$, bin width $h$, DP mechanism $\M$,  privacy loss $\bs\eta$,  P$^3$ histogram version, $G(n)$,
collapsing thresholds $(\tau_L,\tau_U)$}
\Output{P$^3$ histogram $H^*$.}
Calculate the number of posterior sample $m$ (Eq.~\eqref{eqn:DeltaH1}) \label{step:m}\;
Draw posterior samples $\{\theta_j\}_{j=1}^m \overset{\text{iid}}{\sim} f(\theta|\x)$\label{step:sample}\;
Form a histogram $H$ with bin width $h$ (number of bins $B\!=\!(U-L)/h$) based on the $m$ samples. Denote the bins by 
$\Lambda_b \!=\! [L + (b-1)h, \;L + b\cdot h)$ and the bin counts by  $c_b = \textstyle{\sum_{j=1}^m }\mathbbm{1}(\theta_j \!\in\! \Lambda_b) \text{ for }b=1,\ldots,B$ \label{step:hist}\;
Set $b_L \leftarrow \arg \min_{b\in \{1, \ldots, B\}}\{c_b>\tau_L\}$ and $b_U \leftarrow \arg \max_{b\in \{1, \ldots, B\}}\{c_b>\tau_U\}$, where $(\tau_L,\tau_U)$ are non-negative integers; or $b_L \leftarrow \arg \min_{b\in \{1, \ldots, B\}}\{b/B\ge\tau_L\}$ and $b_U \leftarrow \arg \max_{b\in \{1, \ldots, B\}}\{b/B\ge\tau_U\}$, where $(\tau_L,\tau_U)\in[0,1]$ are small constants\; \label{step:identify} 
Set $\Lambda_0 \leftarrow \cup_{b< b_L}\{\Lambda_b\}$ and $\Lambda_{B'+1}\leftarrow\cup_{b> b_U}\{\Lambda_b\}$, where $B'= b_U-b_L+1$\; 
Re-index uncollapsed bins using indices $1$ to $B'$\;\label{step:reindex}
\For{$b = 0, \ldots, B'\!+\!1$}{\label{step:sanitization}
    \textbf{if} P$^3$ histogram version $==``+"$, \textbf{then} $c^*_b \leftarrow \max\{0, \M(c_b,\bs\eta)\}$ (Eq.~\eqref{eqn:M}) \label{step:DP}\;
    \textbf{if}  P$^3$ histogram version $==``-"$, \textbf{then} $c^*_b \leftarrow \M(c_b,\bs\eta)$  (Eq.~\eqref{eqn:M}); \label{step:DP*} 
} \label{step:endsanitization}
Return $H^*$ with sanitized bin counts $\{c^*_b\}_{b=0}^{B'+1}$ for bins $\{\Lambda_b\}_{b=0}^{B'+1}$.
\end{algorithm}

\vspace{-6pt}\begin{con}\label{con:DP}
Let $F^{-1}_{\theta|\x}(q)=\inf \{\theta: F(\theta|\x)\geq q\}$ for $0<q<1$. Assume
\begin{itemize}
    \item [(a)] $f(\theta|\x)\ge0$ in its support $\Theta$ and the corresponding cumulative distribution function (CDF) $F(\theta|\x)$ is continuous on any closed interval $\Lambda_b$ for $b\in\{1,\ldots,B\}$.  
    \item [(b)] $\sum_{b=1}^{b_L}f(\xi_b|\x) \le G(n)$, where $\xi_b\in \Lambda_b$ for $b\in\{1,\ldots,b_L\}$, and $\sum_{b=b_U}^B f(\xi_b|\x)\le G(n)$, where $\xi_b\in \Lambda_b$ for $b\in\{b_U,\ldots,B\}$.
\end{itemize}
\end{con}
\begin{theorem}[DP Guarantee of P$^3$ Histogram]\label{thm:DP}
Under Assumption~\ref{con:DP}, the P$^3$ histogram output by Algorithm \ref{alg:P3_hist} satisfies $\bs\eta$-DP when $m =(2hG(n))^{-1}$, where $G(n)$ is as defined in Eq.~\eqref{Eqn:Gn}.
\end{theorem}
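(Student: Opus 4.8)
The plan is to view Algorithm~\ref{alg:P3_hist} as the composition of a deterministic data map, a single noise-injection step, and a deterministic post-processing step, and then to combine the sensitivity bound of Theorem~\ref{thm:H} with immunity to post-processing. Concretely, I would treat lines~\ref{step:sample}--\ref{step:reindex} (draw and summarize the posterior samples, identify $b_L,b_U$, collapse the tails, re-index) as a map $\x\mapsto(\mathbf{c},\text{bin structure})$ with $\mathbf{c}=(c_0,c_1,\dots,c_{B'},c_{B'+1})$; lines~\ref{step:sanitization}--\ref{step:endsanitization} as exactly the Laplace mechanism of $\epsilon$-DP (resp. the Gaussian mechanism of $\mu$-GDP) from Eq.~\eqref{eqn:M} applied coordinatewise to $\mathbf{c}$; and the clamping $\max\{0,\cdot\}$ of version ``$+$'' together with the final assembly of $H^*$ as a deterministic function of the sanitized counts and the already-released bin structure. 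If the pre-sanitization output has $\ell_1$ sensitivity at most $\Delta_H=2mhG(n)$, then the choice $m=(2hG(n))^{-1}$ makes $\Delta_H=1$, so the noise scales in Eq.~\eqref{eqn:M} are precisely those calibrated for $\epsilon$-DP / $\mu$-GDP (Definitions~\ref{def:epsdelta} and~\ref{def:mu}), and immunity to post-processing then yields $\bs\eta$-DP for $H^*$.

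For the sensitivity, I would first invoke Theorem~\ref{thm:H} on the uncollapsed $B$-bin histogram, using Assumption~\ref{con:DP}(a): continuity of $F(\theta\mid\x)$ on each $\Lambda_b$ justifies the mean-value-theorem argument behind $\Delta_H=2mhG(n)$, so over neighbors $\x,\x'$ the full count vectors satisfy $\sum_{b=1}^B|c_b-c_b'|\le\Delta_H$. Next I would argue that the tail collapsing does not inflate this and that the released bin boundaries carry no extra distinguishing power, which is where Assumption~\ref{con:DP}(b) enters: it bounds the posterior mass of the left-collapsed region by $h\sum_{b\le b_L}f(\xi_b\mid\x)\le hG(n)$, whence $c_0=\sum_{b<b_L}c_b\le mhG(n)=\tfrac12$, and since $c_0$ is a non-negative integer, $c_0=0$, and likewise $c_{B'+1}=0$ (and the same under $\x'$). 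Thus every bin absorbed into $\Lambda_0$ or $\Lambda_{B'+1}$ is empty, so the collapsed histogram is informationally equivalent to the full $B$-bin histogram (its boundaries only record how many empty leading/trailing bins were merged, which is recoverable from the full histogram); and since each collapsed count is a sum of original counts, padding both collapsed vectors to the common finest index set with zeros shows the collapsed count vector inherits $\ell_1$ sensitivity $\le\Delta_H$, hence also $\ell_2$ sensitivity $\le\Delta_H$ as $\|\cdot\|_2\le\|\cdot\|_1$. For the data-independent collapse rule (the $b/B$ option in line~\ref{step:identify}) this step is trivial, since $b_L,b_U$ do not depend on $\x$ at all.

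With $\Delta_H=1$ in hand, lines~\ref{step:DP}--\ref{step:DP*} add $\mathrm{Laplace}(0,1/\epsilon)$ (resp. $\mathcal{N}(0,1/\mu^2)$) to a vector of $\ell_1$ (resp. $\ell_2$) sensitivity $\le1$, which is $\epsilon$-DP by Definition~\ref{def:epsdelta} (resp. $\mu$-GDP by Definition~\ref{def:mu} and the Gaussian-mechanism calibration recalled in Section~\ref{sec:prelim}). The clamping and the assembly of $H^*$ are data-independent post-processing of the sanitized counts, with the bin structure having been accounted for above, so by immunity to post-processing $H^*$ is $\bs\eta$-DP, which is the claim.

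The main obstacle is the collapsing step: in the count-threshold version of line~\ref{step:identify} the collapse points depend on the sensitive data, so the released bin boundaries are themselves an un-noised, data-dependent output, and the whole argument hinges on showing rigorously — via Assumption~\ref{con:DP}(b) — that under $m=(2hG(n))^{-1}$ the collapsed tails are provably empty, so that this output leaks nothing beyond the (noised) full histogram. A secondary subtlety is that the analysis, like Theorem~\ref{thm:H}, is conducted at the level of the exact/expected posterior histogram rather than the finite sample of $m$ random draws; I would state this convention explicitly and note that it is inherited from the sensitivity result being invoked.
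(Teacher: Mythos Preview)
Your decomposition—sensitivity bound from Theorem~\ref{thm:H}, calibration $m=(2hG(n))^{-1}\Rightarrow\Delta_H=1$, application of the Laplace/Gaussian mechanism from Eq.~\eqref{eqn:M}, then post-processing immunity—is exactly the paper's route; the appendix proof labeled for Theorem~\ref{thm:DP} is in fact just the mean-value-theorem derivation of $\Delta_H=2mhG(n)$, after which the DP claim is treated as immediate from the mechanism calibrations recalled in Section~\ref{sec:prelim}.

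Where you go further than the paper is the collapsing step. The paper does \emph{not} prove that the count-threshold (hence data-dependent) collapse points are DP-safe; Section~\ref{sec:collap} simply asserts that ``the collapsing does not affect global sensitivity $\Delta_H$ under Assumption~\ref{con:DP}'' and leans on the facts that the underlying $B$ breakpoints are data-independent and that every bin (including the merged tails) is sanitized. Your attempt—using Assumption~\ref{con:DP}(b) to get $c_0\le mhG(n)=\tfrac12$ and then integrality to force $c_0=0$—is more explicit than anything the paper offers, but the implication does not quite close: under the expected-count convention that Theorem~\ref{thm:H} uses (and that you correctly flag at the end), $c_0$ is not an integer, so $c_0\le\tfrac12\not\Rightarrow c_0=0$; under the random-sample convention $c_0$ is an integer but the bound $c_0\le\tfrac12$ holds only in expectation. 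The paper sidesteps this tension by not attempting a rigorous argument at all, so your identification of the collapsing step as ``the main obstacle'' and of the expected-versus-sampled convention as the relevant subtlety is apt; just be aware that the specific ``$c_0=0$'' step does not survive either reading as written. Your observation that $\|\cdot\|_2\le\|\cdot\|_1$ handles the Gaussian-mechanism case is also more careful than the paper, which states $\Delta_H$ without distinguishing the norm.
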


Per the discussion in Section \ref{sec:GS} regarding the calculation of $G(n)$, one may approximate $G(n)$ numerically, replace it with the analytical approximate $G_0$ in Eq.~\eqref{eqn:G_uni} when $n$ is large or with its upper bound $\overline{G_0}$ for a conservative $G(n)$ regardless of $n$. 

\subsubsection{Two versions of P$^3$ histogram}We provide two versions of P$^3$ histogram in Algorithm \ref{alg:P3_hist}, depending on whether non-negativity correction is applied to the bin counts of the sanitized posterior histogram ($+$ representing Yes vs.~$-$ for No; lines \ref{step:sanitization} to \ref{step:endsanitization}). Since counts are inherently non-negative, the correction ($+$ versions) is more intuitive but overestimates the original bin counts.

\subsubsection{Choice of bin width $h$} 
A key hyperparameter that users need to specify in Algorithm \ref{alg:P3_hist} is the histogram bin width $h$. A large $h$ would result in a coarse histogram estimate of $f(\theta|\x)$, leading to biased quantile estimates of $F^{-1}_{\theta|\x}(q)$ from the subsequent histogram-based PRECISE procedure (even in the absence of  DP). Conversely, a small $h$, while reducing the global sensitivity $\Delta_H$ or leading to a large $m$ value for a fixed $\Delta_H$, would result in a large number of bins and thus a sparse histogram with numerous empty or low-count bins, which would also compromise the utility of the histogram\footnote{A similar narrative exists for $m$ when it is not back-calculated by fixing $\Delta_H$; a smaller $m$ implies lower $\Delta_H$ thus less DP noise but also leads to worse quantile estimation due to the data sparsity issue; and a higher $m$ implies richer information about the posterior distribution and more accurate quantile estimation but higher $\Delta_H$ and thus more DP noise, which counteracts the accuracy gains from the larger $m$.}.  

Furthermore, the choice of $h$ is critical in establishing the MSE consistency of the PP quantiles estimates based on the P$^3$ histogram (see Section \ref{sec:MSE_RAP} and Theorem \ref{thm:MSE_RAP}). In particular, if $h$ is too large, the discretization error can dominate the overall MSE -- regardless of data size and privacy level -- thereby undermining estimation accuracy. Conversely, if $h$ is too small, the sanitized bin index identified by the subsequent PRECISE procedure in Algorithm~\ref{alg:PRECISE} may fail to converge to the correct bin. More theoretical justification and requirements on $h$ and trade-offs are discussed in detail in Section \ref{sec:MSE_RAP}.

\subsubsection{Effects of bounds $(L, U)$ on P$^3$ utility and bin collapsing} \label{sec:collap}
Unknown parameters in a statistical model may be naturally bounded (e.g., proportions $\in[0,1]$) or unbounded, such as Gaussian mean $\in(-\infty,\infty)$, or bounded on one end (e.g., variance $\in(0,\infty)$). In the DP framework, bounds on numeric quantities, whether statistics or parameters, are necessary in many cases to design or apply a mechanism to achieve DP guarantees. Though this may be regarded as a strong assumption from the statistical theory perspective, real-life data and scenarios often support bounding on data or parameters, justifying bounding for practical applications. 

The global bounds $(L,U)$ for $\theta$ impact PRECISE's performance on PP quantile estimation based on the P$^3$ histogram, as they affect the amount of noise required to reach the preset DP guarantee level -- wider bounds often imply more noise. 
On the other hand, it is important not to impose unreasonably tight bounds to the extent that they cause significant bias or information loss. As a result, in practice, $(L,U)$ are often wide regardless of $n$.  

As $n$ increases, $f(\theta|\x)$ becomes increasingly concentrated around the underlying ``population'' parameter $\theta_0$. If $(L,U)$  are static and remain wide regardless of $n$, they may become unnecessarily conservative and degrade the privacy-utility trade-off. In Algorithm \ref{alg:P3_hist},  wide $(L,U)$ can lead to many empty or low-count bins in the histogram when $n$ is large, particularly in the tails. This, in turn, causes the P$^3$ histogram $H^*$ to be heavily perturbed with excessive noise added to the bin counts. Tighter bounds should be considered instead to leverage the increasing concentration of  $f(\theta|\x)$ as $n$ grows, reducing information loss and improving the accuracy of $H^*$. However,  even though it is theoretically possible to characterize the rate at which the interval width $U(n)-L(n)$ decreases with increasing $n$, this alone is insufficient for implementing a DP mechanism, which requires explicit values for $L(n)$ and $U(n)$ individually.  Determining these values would depend on the unknown true parameter $\theta_0$, posing a practical challenge to proposing analytical bounds $(L(n), U(n))$. 

To address this, we incorporate a subroutine in Algorithm \ref{alg:P3_hist} (lines~\ref{step:identify} to \ref{step:reindex}) to allow the procedure to adjust overly conservative global bounds $(L,U)$ by collapsing empty or small bins at the two tails of the posterior histogram before adding DP noise. This collapsing step effectively reduces excessive noise injection. There are two types of collapsing thresholds: $(\tau_L,\tau_U)$ can be thresholds on bin counts or proportions of bins to be collapsed, to be pre-set at the discretion of the data curator. In the former, they can be 0 or small positive integers such as 1, meaning the empty or bins with counts $\le1$ are kept collapsing until a bin with count $\ge1$ or $\ge2$ is encountered; in the latter, they are values close to 0, such as 2\% of bins on the left and 3\% on the right tail; suppose $B = 100$, this would correspond to collapsing 2 bins on the left and 3 bins on the right, regardless of the bin counts.  Note that the collapsing subroutine does not incur additional privacy loss for several reasons. First, the collapsing does not affect global sensitivity $\Delta_H$ under Assumption~\ref{con:DP}. Second, the bin breakpoints are data-independent and determined by pre-specified $(U, L, h)$. Third, the counts for all bins are sanitized, including the newly formed bins from collapsing. Fourth, $H^*$, the output from  Algorithm \ref{alg:P3_hist},  is an intermediate product; it will not be released but rather serves as input to the PRECISE procedure in Algorithm \ref{alg:PRECISE} (Section \ref{sec:PRECISE}) to generate PP quantiles by uniformly sampling from sanitized bins. Consequently, it is not possible to infer how many bins were collapsed at either end based solely on the released PP quantiles. Ultimately, the data curator may opt not to share the values of $(\tau_L,\tau_U)$ with the public for absolute assurance, as they are of no use for users who are interested in PP quantiles or PPIE.

\subsection{PRECISE based on P$^3$ Histogram}\label{sec:PRECISE}
After obtaining the P$^3$ histogram $H^*$ for parameter $\theta$, we can derive the PP quantile estimates for $\theta$ from $H^*$ via the PRECISE procedure in Algorithm \ref{alg:PRECISE} with the same DP guarantees per immunity to post-processing of DP.

\subsubsection{Four versions of PRECISE}
PRECISE has four versions $\{+m, +m^*, -m, -m^*\}$. $+$ and $-$ are inherited from the P$^3$ histogram procedure in Algorithm \ref{alg:P3_hist}. The choice between $m$ and $m^*$ depends on whether the cumulative density function estimate based on $H^*$ is normalized by the pre-specified total $m$ of posterior samples of $\theta$, or by the sum of sanitized bin counts  $m^*=\sum_{i=0}^{B'\!+1}c^*_i$.  While using $m$ leverages the fact that it is a known constant and enhances the stability of the output from  Algorithm \ref{alg:PRECISE}, it at the same time introduces intra-inconsistency for normalized $H^*$ as the individual bin counts in $H^*$ are sanitized and their sum is highly unlikely to be equal to $m$ in actual implementations. In fact, the sum equals to $m$ only by expectation for the $-$ version of P$^*$ and is biased upward for the $+$ version. The simulation studies in Section \ref{sec:simu} compare the performance of the four versions of PRECISE.

\begin{algorithm}[!h]
\caption{PRECISE}\label{alg:PRECISE}
\SetAlgoLined
\SetKwInOut{Input}{input}
\SetKwInOut{Output}{output}
\Input{P$^3$ histogram $H^*$ from Algorithm \ref{alg:P3_hist} with bins $\{\Lambda_b\}_{b=0}^{B'+1}$ and sanitized bin counts $\{c^*_b\}_{b=0}^{B'+1}$, confidence level $1\!-\!\alpha\!\in\!(0,1)$, PRECISE version} 
\Output{PPIE at level $1-\alpha$ for $\theta_0$: $\big(\theta^*_{\alpha/2}, \theta^*_{1-\alpha/2}\big)$.}
\textbf{if} PRECISE version $==+m^*$ or $-m^*$, \textbf{then} obtain the indices per\label{step:normalization*}
\[b^*_{\alpha/2} = \min\Bigg\{\underset{b \in \{0, \ldots, B' + 1\}}{\arg\min}\Big|\!\sum_{i=0}^bc^*_i- \frac{\alpha}{2}\!\sum_{i=0}^{B'\!+1}c^*_i\Big|\Bigg\},\; b^*_{1-\alpha/2} = \min\Bigg\{\underset{b \in \{0, \ldots, B' + 1\}}{\arg\min}\Big|\!\sum_{i=b}^{B'+1}c^*_i- \frac{\alpha}{2}\!\sum_{i=0}^{B'\!+1}c^*_i\Big|\Bigg\}.\]\\
\textbf{if} PRECISE version $==+m$ or $-m$, \textbf{then} obtain  the indices per\label{step:normalization}
\[b^*_{\alpha/2} = \min\Bigg\{\underset{b \in \{0, \ldots, B' + 1\}}{\arg\min}\Big|\!\sum_{i=0}^bc^*_i\!-\! \frac{\alpha}{2}m\Big|\Bigg\}, \;b^*_{1-\alpha/2} = \min\Bigg\{\underset{b \in \{0, \ldots, B' + 1\}}{\arg\min}\Big|\!\sum_{i=b}^{B'+1}c^*_i\!-\! \frac{\alpha}{2}m\Big|\Bigg\}.\]\\
Draw $\theta^*_{\alpha/2}$ uniformly from $I_{b^*_{\alpha/2}}$, and  $\theta^*_{1-\alpha/2}$ uniformly from $I_{b^*_{1-\alpha/2}}$.\label{step:uniform}
\end{algorithm}

\subsubsection{Rationale for PRECISE}
The key to any valid PP inference -- PPIE included -- is to acknowledge and account for the additional source of variability introduced by DP sanitation, on top of the sampling variability of the data. Ignoring the former would lead to invalid inference, and in the context of PPIE, potential under-coverage and failure to satisfy Definition \ref{defn:CI}.
PRECISE accounts for both sources of variability. Rather than taking the route of explicitly quantifying the uncertainty for a PP estimate of $\theta$ and then calculating the half-width for its PP interval estimate based on the quantified uncertainty, PRECISE instead reframes the interval estimation for $\theta$ as a point estimation problem, leveraging the Bayesian principles.

Specifically, the central posterior interval with level of $(1-\alpha)\times 100\%$ is formulated as $\big(F^{-1}_{\theta|\x}(\alpha/2),F^{-1}_{\theta|\x}(1-\alpha/2)\big)$ by definition.  PRECISE first identifies an index $b^*$ (lines~\ref{step:normalization*} and \ref{step:normalization} in Algorithm \ref{alg:PRECISE}) by minimizing the absolute difference between the  ``empirical'' cumulative counts of samples  at $\alpha/2$ vs.~the expected cumulative counts out of a total of $m^*$ (or $m$) from both ends of the distribution; that is, $\sum_{i\leq b}c^*_i$ and  $\alpha m^*/2$ (or $\alpha m/2$), and  $\sum_{i\geq b}c^*_i$ and $\alpha m^*/2$  (or $\alpha m/2$). A random sample of $\theta$ is then drawn from the identified bin $I_{b^*}$ as the $\alpha/2\times100\% $ PP quantile estimate; similarly for the  $(1-\alpha/2)\times100\%$ PP quantile estimate. The two PP quantile estimates can then be plugged in directly to form PPIE $\big(\theta^*_{\alpha/2}, \theta^*_{1-\alpha/2}\big)$ for $\theta$. Section \ref{sec:MSE_RAP} shows that the PP quantile outputs from PRECISE are consistent estimators for the true posterior quantiles as the sample size $n$ or privacy loss goes to $\infty$.


\subsubsection{MSE consistency of PRECISE}\label{sec:MSE_RAP}
We establish the MSE consistency for the pointwise PPIE from PRECISE ($+m^*$) in Theorem \ref{thm:MSE_RAP} with $\varepsilon$-DP. Results for the other three PRECISE variants ($+m,-m^*,-m$) and other DP notation variants (e.g., $\mu$-GDP) can be similarly proved. 

\begin{theorem}[MSE consistency of PRECISE ($+m^*$)]\label{thm:MSE_RAP}
Given i.i.d. data $\mathbf{x} \!=\! \{x_i\}_{i=1}^n\!\sim \!f(X|\theta)$ and prior $f(\theta)$ that is non-informative relative to the amount of data; let $\{\theta_j\}_{j=1}^m$ denote the set of samples drawn from the posterior distribution $f(\theta|\mathbf{x})$. Under Assumption~\ref{con:DP} and the constraint $m = o\left(e^{\varepsilon\sqrt{n}/2} n^{1/4} \varepsilon^{1/2} \right)$, the $q^{th}$ posterior quantile $\theta^*_{(q)}$ released by $\mathcal{M}$: PRECISE $(+m^*)$ with $\varepsilon$-DP as a PP estimate for $F^{-1}_{\theta|\mathbf{x}}(q)$ (the true $q^{th}$ from the posterior $f(\theta|\x)$ satisfies
\begin{align}
&\mathbb{E}_{\boldsymbol{\theta}}\mathbb{E}_{\mathcal{M}|\boldsymbol{\theta}}\big(\theta^*_{(q)} - F^{-1}_{\theta|\mathbf{x}}(q)\big)^2 \notag\\
\leq \; &\underbrace{\mathcal{O}\left(m^{-2}\right)}_{T_0}+\underbrace{\mathcal{O}\left(\frac{1}{\sqrt{n}}e^{-\varepsilon\sqrt{n}/2}\right)}_{T_1} + 
\underbrace{\mathcal{O}\left(\frac{1}{mn}\right)}_{T_2}\label{eqn:thm2}\\
=\: & \begin{cases}
    T_0 = \mathcal{O}(m^{-2})      &\text{if } m = o(n)\\
    T_2 = \mathcal{O}(m^{-1}n^{-1}) &\text{if } m \in \left(\Omega(n),  \mathcal{O}(e^{\varepsilon\sqrt{n}/2} n^{-1/2})\right)\\
    T_1 = \mathcal{O}\left(\frac{1}{\sqrt{n}}e^{-\varepsilon\sqrt{n}/2}\right) &\text{if } m \in \left(\Omega(e^{\varepsilon\sqrt{n}/2}n^{-1/2}), o(e^{\varepsilon\sqrt{n}/2}n^{1/4}\varepsilon^{1/2})\right).
\end{cases}
\end{align}
\end{theorem}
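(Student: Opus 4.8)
The plan is to write the released quantile minus its target as a sum of three increments, bound each in mean square, and combine via $(A_1+A_2+A_3)^2\le 3(A_1^2+A_2^2+A_3^2)$. Fix the data $\mathbf{x}$; let $\Lambda_{b_0}$ be the uncollapsed bin containing the true posterior quantile $F^{-1}_{\theta|\mathbf{x}}(q)$, let $\tilde b$ be the bin index that Algorithm~\ref{alg:PRECISE} would return if it operated on the \emph{un-sanitized} histogram $H$, let $b^*$ be the index it actually returns from the P$^3$ histogram $H^*$, and write $t_b$ for the centre of $\Lambda_b$. Then
\begin{equation*}
\theta^*_{(q)}-F^{-1}_{\theta|\mathbf{x}}(q)=A_1+A_2+A_3,
\end{equation*}
where $A_1=\theta^*_{(q)}-t_{b^*}$ is the within-bin uniform draw, $A_2=t_{b^*}-t_{\tilde b}=h(b^*-\tilde b)$ is the DP-induced shift of the selected bin, and $A_3=t_{\tilde b}-F^{-1}_{\theta|\mathbf{x}}(q)$ is the Monte Carlo plus discretization error of the non-private procedure (all expectations below are taken over both $\boldsymbol{\theta}$ and $\mathcal{M}$). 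Since $\theta^*_{(q)}$ is drawn uniformly from $\Lambda_{b^*}$, $|A_1|\le h/2$ deterministically; because $h=(2mG(n))^{-1}$ by Eq.~\eqref{eqn:DeltaH1} and $G(n)\to G_0$ by Theorem~\ref{thm:G}, we have $h\asymp m^{-1}$, hence $\mathbb{E}[A_1^2]=\mathcal{O}(m^{-2})=T_0$. It remains to control $\mathbb{E}[A_3^2]$ and $\mathbb{E}[A_2^2]$.

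\textbf{The non-private term $A_3$.} Writing $A_3=h(\tilde b-b_0)+(t_{b_0}-F^{-1}_{\theta|\mathbf{x}}(q))$, the second summand has absolute value at most $h/2$. For the first, since the prior is non-informative relative to the data, the Bernstein--von Mises theorem lets me approximate $f(\theta|\mathbf{x})$ by a Gaussian with posterior standard deviation of order $n^{-1/2}$, so the posterior density at the fixed quantile $F^{-1}_{\theta|\mathbf{x}}(q)$ is of order $n^{1/2}$. The only reason $\tilde b$ differs from $b_0$ is the Monte Carlo fluctuation of the empirical CDF of the $m$ posterior draws, which is $\mathcal{O}_P(m^{-1/2})$ by the DKW inequality; dividing this by the order-$n^{1/2}$ density converts it to an $\mathcal{O}_P((mn)^{-1/2})$ displacement in $\theta$, i.e.\ an $\mathcal{O}_P((m/n)^{1/2})$ displacement in bin units. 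Squaring and multiplying by $h^2\asymp m^{-2}$ gives $\mathbb{E}[A_3^2]=\mathcal{O}(m^{-1}n^{-1})=T_2$.

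\textbf{The DP term $A_2$.} Let $S_b=\sum_{i\le b}c_i$ and $S^*_b=\sum_{i\le b}c^*_i=S_b+N_b$, where $N_b=\sum_{i\le b}e_i$ is a partial sum of i.i.d.\ $\mathrm{Laplace}(0,\Delta_H/\varepsilon)$ noises with $\Delta_H=1$ (Theorem~\ref{thm:H} with $m=(2hG(n))^{-1}$). For PRECISE $(+m^*)$ the selected index minimises $|S^*_b-qm^*|$ with $m^*=\sum_i c^*_i$. The event $\{|b^*-\tilde b|\ge\ell\}$ forces the accumulated noise over the $\ell$ bins separating $\tilde b$ from $b^*$ to overcome the corresponding accumulated ``count margin''; by the Gaussian approximation the per-bin count in a neighbourhood of the quantile is $mhf(F^{-1}_{\theta|\mathbf{x}}(q))$, which is of order $n^{1/2}$, so this margin is of order $\ell n^{1/2}$. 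A Laplace/sub-exponential tail bound then gives $\Pr(|b^*-\tilde b|\ge\ell)\le C_1\exp(-c_1\varepsilon\sqrt{n}\,g(\ell))$ for an increasing $g$ with $g(1)\asymp 1$, with a still faster decay once $\ell$ reaches the near-empty tail bins. Bounding the displacement by $h\ell$ in the bulk and by the constant $U-L$ for a catastrophic jump, forming $\mathbb{E}[A_2^2]=h^2\sum_{\ell\ge1}(2\ell-1)\Pr(|b^*-\tilde b|\ge\ell)$, and using $h\asymp m^{-1}$ together with $B\le(U-L)/h\asymp m$ for the union bound yields $\mathbb{E}[A_2^2]=\mathcal{O}(n^{-1/2}e^{-\varepsilon\sqrt{n}/2})=T_1$. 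This is precisely where the hypothesis $m=o(e^{\varepsilon\sqrt n/2}n^{1/4}\varepsilon^{1/2})$ is consumed: it keeps the $\mathcal{O}(m)$-term union bound and the polynomial prefactors from the Laplace tails below the stated order, and keeps the random normaliser $m^*$ from dominating the count margins, since $m^*-m$ is itself a sum of (truncated) Laplace variables that concentrates around $0$, so replacing $qm$ by $qm^*$ only perturbs the argmin by a quantity already absorbed into $T_1$.

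\textbf{Assembling and the main obstacle.} Combining the three bounds gives $\mathbb{E}_{\boldsymbol{\theta}}\mathbb{E}_{\mathcal{M}|\boldsymbol{\theta}}(\theta^*_{(q)}-F^{-1}_{\theta|\mathbf{x}}(q))^2\le T_0+T_1+T_2$, and comparing $m^{-2}$, $m^{-1}n^{-1}$ and $n^{-1/2}e^{-\varepsilon\sqrt n/2}$ over the admissible range of $m$ produces the three-case statement ($T_0$ dominates for $m=o(n)$, $T_2$ on the intermediate range, $T_1$ just below the threshold). I expect the analysis of $A_2$ to be the crux: $b^*$ is the argmin of a non-smooth, noise-perturbed objective whose competing bins carry very different margins -- of order $n^{1/2}$ near the quantile but essentially zero in the tails -- so the tail bound must be calibrated bin-by-bin and the union bound controlled carefully, which is exactly what the upper restriction on $m$ buys. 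A secondary but necessary point is to confirm that the bin-collapsing subroutine of Algorithm~\ref{alg:P3_hist} leaves $\Delta_H=1$ and preserves enough CDF continuity (Assumption~\ref{con:DP}(a)--(b)) for the argmin over the re-indexed histogram to be well behaved; given that, the terms $A_1$ and $A_3$ are comparatively routine via $h\asymp m^{-1}$, Bernstein--von Mises, and DKW.
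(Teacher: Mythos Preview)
Your plan is sound and would give the stated bound, but it is \emph{not} the route the paper takes; the two differ in both the decomposition and the mechanism used to control the DP error.

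\textbf{Decomposition.} You split into three increments via bin centres and the non-private argmin $\tilde b$. The paper uses only two terms, inserting the \emph{sample} quantile $\theta_{([qm])}$ (not a bin centre) and then applying Cauchy--Schwarz:
\[
\mathbb{E}\big(\theta^*_{(q)}-F^{-1}_{\theta|\mathbf{x}}(q)\big)^2
\le \mathbb{E}\big(\theta^*_{(q)}-\theta_{([qm])}\big)^2+\mathbb{E}\big(\theta_{([qm])}-F^{-1}_{\theta|\mathbf{x}}(q)\big)^2+2\sqrt{(\cdot)(\cdot)}.
\]
The second piece is handled by the asymptotic normality of sample quantiles (Walker 1968), which gives $T_2$ directly rather than through DKW plus a density inversion; the content is the same as your $A_3$.

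\textbf{DP-induced error.} This is where the methods really diverge. You propose Laplace/sub-exponential tail bounds on $|b^*-\tilde b|$ and a tail sum for the second moment. The paper instead computes the \emph{exact} first and second moments of the truncated Laplace $c_b^*=\max\{0,c_b+e_b\}$, obtaining closed forms such as $\mathbb{E}[c_b^*]=c_b+e^{-\varepsilon c_b}/(2\varepsilon)$. These are fed into an $L^2$ comparison of the two objective functions $g(b)=|\sum_{i\le b}c_i-qm|$ and $g^*(b)=|\sum_{i\le b}c_i^*-qm^*|$, yielding the rate $\mathcal{O}(\varepsilon^{-2}+m^2e^{-\sqrt n\varepsilon}/(\sqrt n\varepsilon))$ after inserting $B'=(u-l)/h\asymp m/\sqrt n$ from Bernstein--von Mises and bin collapsing. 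The paper then invokes Lipschitz continuity of $g$ and uniqueness of the minimiser to pass from objective convergence to $\mathbb{E}[(b^*-\hat b)^2]\to 0$. Finally, writing both $\theta^*_{(q)}$ and $\theta_{([qm])}$ as bin edge plus an independent $\mathrm{Unif}[0,h]$ gives $\mathbb{E}(\theta^*_{(q)}-\theta_{([qm])})^2=h^2\mathbb{E}(b^*-\hat b)^2+h^2/6$, which packages $T_0$ and $T_1$ together.

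\textbf{What each buys.} The paper's exact-moment route is tailored to the $(+)$ truncation: the bias term $e^{-\varepsilon c_b}/(2\varepsilon)$ is tracked explicitly, which is why the $e^{-\varepsilon\sqrt n}$ factors appear with the right prefactors and why the $m^*$ normaliser causes no trouble. Your tail-bound approach is conceptually more direct for the argmin and avoids the somewhat informal ``Lipschitz $+$ uniqueness $\Rightarrow$ argmin convergence'' step the paper uses, but in exchange you must adapt the sub-exponential bounds to the \emph{positively biased} truncated variables and carry the randomness of $m^*$ through the union bound; you correctly flag this as the crux.
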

The detailed proof is provided in Appendix \ref{proof:PRECISE}. The core idea is to use the Cauchy-Schwarz inequality to decompose the MSE between the privatized posterior quantile $\theta^*_{(q)}$ and the population posterior quantile $F^{-1}_{\theta|\mathbf{x}}(q)$ into two components: 1) the MSE between $\theta^*_{(q)}$ and posterior quantile $\theta_{(q)}$ based on $m$ posterior samples, which accounts for both the histogram discretization error and DP sanitization error (terms $T_0$ and $T_1$); 2) the MSE between $\theta_{(q)}$ and $F^{-1}_{\theta|\mathbf{x}}(q)$ to account for the posterior sampling error as compared to the analytical quantile $F^{-1}_{\theta|\mathbf{x}}(q)$ (term $T_2$). The first MSE  component involving $\theta^*_{(q)}$ and $\theta_{(q)}$ is further analyzed in two steps. First, we show that the sanitized bin index identified in line~\ref{step:normalization*} of Algorithm \ref{alg:PRECISE} converges to the bin containing $\theta_{(q)}$, which requires a necessary upper bound on the number of posterior samples $m\!=\!o\left(e^{\varepsilon\sqrt{n}/2} n^{1/4} \varepsilon^{1/2} \right)$. Then, conditional on the bin being correctly identified,  the error due to discretization  and the subsequent uniform sampling from the identified bin (line~\ref{step:uniform}) is quantified as
 $T_0$, and the DP-induced error is captured by $T_1$.

Theorem~\ref{thm:MSE_RAP} reveals a critical trade-off governed by the number of posterior samples $m$, which relates to the bin width $h=1/(2mG(n))$ of the posterior histogram with DP guarantees. The MSE contains three components -- discretization error $T_0$, DP-induced error $T_1$, and posterior sampling error $T_2$, and its practical interpretation depends on which term dominates for different ranges of $m$. If $m$ is too small (and $h$ is consequently too large), the discretization error $T_0 = \mathcal{O}(m^{-2})$ dominates, resulting in an lower bound on the overall MSE regardless of how large the data size $n$ or privacy loss parameter $\varepsilon$ is. Therefore, to leverage large $n$ or more permissive privacy settings so to achieve more accurate PP quantile estimation, Theorem~\ref{thm:MSE_RAP} highlights the necessity for $m$ to grow with $n$ and  $\varepsilon$ as along as it does go beyond the upper bound $m\!=\!o\left(e^{\varepsilon\sqrt{n}/2} n^{1/4} \varepsilon^{1/2} \right)$ required for the correct bin identification.

Based on Theorem \ref{thm:MSE_RAP}, we show the PPIE from PRECISE asymptotically satisfies Definition \ref{defn:CI} and achieves the nominal coverage as $n$ or $\varepsilon$ increases in the proposition below. 
\begin{proposition}[Asymptotical nominal coverage of PRECISE]\label{prop:coverage}
    Under the conditions of Theorem \ref{thm:MSE_RAP}, as $n\rightarrow \infty$ or $\varepsilon \rightarrow \infty$,
    the PPIE via PRECISE for the true parameter $\theta_0$ at level $(1-\alpha)$ satisfies $\Pr(\theta^*_{\alpha/2}\leq \theta_0 \leq \theta^*_{1-\alpha/2}| \mathbf{x})\rightarrow 1-\alpha$.
\end{proposition}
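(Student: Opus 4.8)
\textbf{Proof proposal for Proposition \ref{prop:coverage}.}

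The plan is to reduce the coverage statement to the MSE consistency already established in Theorem~\ref{thm:MSE_RAP}, using the fact that the non-private central posterior interval $\big(F^{-1}_{\theta|\mathbf{x}}(\alpha/2), F^{-1}_{\theta|\mathbf{x}}(1-\alpha/2)\big)$ has posterior coverage exactly $1-\alpha$ by construction, together with posterior consistency of the non-private interval for $\theta_0$ under the stated non-informative-prior regularity conditions. First I would observe that by Theorem~\ref{thm:MSE_RAP}, each endpoint satisfies $\mathbb{E}(\theta^*_{(q)} - F^{-1}_{\theta|\mathbf{x}}(q))^2 \to 0$ as $n\to\infty$ or $\varepsilon\to\infty$ (with $m$ chosen in the admissible range so that all of $T_0, T_1, T_2$ vanish), hence $\theta^*_{q} - F^{-1}_{\theta|\mathbf{x}}(q) \xrightarrow{p} 0$ for $q\in\{\alpha/2, 1-\alpha/2\}$ by Markov's inequality, where the probability is over both the data and $\mathcal{M}$.

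Next I would show $\Pr(\theta^*_{\alpha/2}\le\theta_0\le\theta^*_{1-\alpha/2}\mid\mathbf{x}) \to \Pr(F^{-1}_{\theta|\mathbf{x}}(\alpha/2)\le\theta_0\le F^{-1}_{\theta|\mathbf{x}}(1-\alpha/2)\mid\mathbf{x})$. The key step here is that under the Bernstein--von Mises regime invoked in Theorem~\ref{thm:G}, the posterior $f(\theta|\mathbf{x})$ concentrates at rate $n^{-1/2}$ around a point converging to $\theta_0$, so the posterior density in a neighborhood of the quantiles $F^{-1}_{\theta|\mathbf{x}}(\alpha/2)$ and $F^{-1}_{\theta|\mathbf{x}}(1-\alpha/2)$ is bounded away from $0$ and $\infty$ (after the natural $\sqrt n$-rescaling). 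Consequently a vanishing perturbation of the endpoints changes the indicator $\mathbbm{1}(\theta^*_{\alpha/2}\le\theta_0\le\theta^*_{1-\alpha/2})$ only on an event of vanishing probability: writing $|\theta^*_{q}-F^{-1}_{\theta|\mathbf{x}}(q)|\le\gamma_n$ with $\gamma_n\xrightarrow{p}0$, the symmetric difference between the two events is contained in $\{\theta_0$ within $\gamma_n$ of an endpoint$\}$, whose probability is $O(\gamma_n \cdot \sup f)$ and tends to $0$. Finally, since the non-private central interval has posterior coverage exactly $1-\alpha$ by definition of the quantile function (with the usual caveat that for continuous $F(\theta|\mathbf{x})$ equality is exact, and in general it is at least $1-\alpha$), taking the limit gives $\Pr(\theta^*_{\alpha/2}\le\theta_0\le\theta^*_{1-\alpha/2}\mid\mathbf{x})\to 1-\alpha$.

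The main obstacle I anticipate is making the second step rigorous: the MSE bound controls the endpoints in an \emph{unconditional} $L^2$ sense (averaging over $\mathbf{x}$ and $\mathcal{M}$), whereas Proposition~\ref{prop:coverage} states a \emph{conditional} statement given $\mathbf{x}$. Bridging this requires either strengthening the conclusion of Theorem~\ref{thm:MSE_RAP} to hold conditionally on $\mathbf{x}$ for almost every data sequence (which the proof in Appendix~\ref{proof:PRECISE} plausibly supports, since the error terms $T_0, T_1, T_2$ come from discretization, DP noise, and Monte Carlo sampling, the first two of which are data-conditional), or reinterpreting the probability in the proposition as also averaging over $\mathbf{x}$. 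A secondary technical point is ensuring the posterior density near the target quantiles does not degenerate along a non-negligible set of sample paths; this is handled by the same Bernstein--von Mises and Fisher-information regularity assumptions already in force for Theorems~\ref{thm:G} and~\ref{thm:MSE_RAP}, so no new assumptions should be needed.
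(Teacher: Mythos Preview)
Your proposal is correct and follows essentially the same route as the paper: establish that the non-private central posterior interval has posterior coverage exactly $1-\alpha$ by definition of the quantile function, then invoke the MSE consistency of Theorem~\ref{thm:MSE_RAP} to conclude the private endpoints converge to the non-private ones and hence the coverage converges. The paper's proof is in fact considerably terser than yours---it simply asserts the limit without your symmetric-difference argument or your discussion of the conditional-versus-unconditional subtlety---so the technical concerns you flag are real but are glossed over in the paper rather than resolved differently.
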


\subsubsection{Other usage and extensions of the PRECISE Procedure}

The PRECISE procedure can also be used to release a PP quantile estimate for $F_{\theta|\x}^{-1}(q)$ from the posterior distribution of $\theta$ given sensitive data $\x$ and $q\in(0,1)$ (e.g., median, Q1, Q3, etc)\footnote{PRECISE should not be used to sanitize sample quantiles of the sensitive data $\x$ itself, which is a well-studied problem; for that, users may use existing procedures like PrivateQunatile \citep{smith2011privacy} and JointExp \citep{gillenwater2021differentially}.}, in addition to constructing PP interval estimates as demonstrated above.  If users opt to collapse bins on the tails, PRECISE may not be accurate when $q$ is very close to 0 or 1, such as the minimum and maximum. 

In the multivariate case of $\bs\theta = (\theta^{(1)}, \ldots, \theta^{(p)})^{\top}$, both  P$^3$ in Algorithm \ref{alg:P3_hist} and PRECISE in Algorithm \ref{alg:PRECISE} can be utilized to generate \emph{pointwise} PPIE for each dimension $\theta^{(k)}$ where $1\leq k \leq p$. This can be achieved by obtaining posterior samples from the marginal posterior distribution of $\theta^{(k)}$, while allocating the privacy budget $\bs\eta$ across all $p$ dimensions according to the privacy loss composition principle of the specific DP notion being employed. Note that obtaining marginal posterior samples on $\theta^{(k)}$ from the posterior distribution does not mean that the sampling has to come directly from $f(\theta^{(k)}|\x)$; one can sample from the joint posterior from $f(\bs\theta|\x)$ if it is easier, but only retain the samples on $\theta^{(k)}$ after sampling for the P$^3$ histogram construction and PP quantile release on $\theta^{(k)}$.  The vast majority of interval estimation problems in practice rely on pointwise interval estimations even in the non-private setting. In cases where there is an interest in obtaining joint or simultaneous PPIE for $\bs\theta$ when $p\!\ge\!2$, Algorithms \ref{alg:P3_hist} and \ref{alg:PRECISE} can still be used in principle, but the GS in Theorem \ref{thm:G} no longer applies and $G(n)$ for the \emph{joint} posterior $f(\bs\theta|\x)$ must be derived for Algorithm \ref{alg:P3_hist}, which can be analytically or numerically hard.  Though the entire privacy budget $\bs\eta$ can be devoted to sanitizing a single multi-dimensional $H^*$ without being split among the $p$ dimensions, given the potentially large $G(n)$ when $p\ge2$ and the well-known curse of dimensionality associated with histograms, the simultaneous PPIE can be of low utility. This is compounded by the fact that joint intervals encode additional dependency information across the $p$ dimensions, necessitating greater noise injection to maintain DP guarantees at $\bs\eta$.

\subsubsection{An alternative to PRECISE}\label{sec:PP}
We also develop an alternative approach to PRECISE that is based on the exponential mechanism for privately estimating the posterior quantile. The approach, termed \emph{Private Posterior quantile estimator (PPquantile)}, is detailed in Algorithm \ref{alg:DPP}.  PPquantile is inspired by the PrivateQuantile procedure\footnote{We establish the MSE consistency of PrivateQuantile towards population quantiles in Theorem \ref{thm:PQ} for interested readers, the first result on that to our knowledge.} \citep{smith2011privacy} (Algorithm \ref{alg:PQ} in the appendix) for releasing private sample quantiles of data $\x$, but is also fundamentally different from the latter. PrivateQuantile outputs a sanitized sample quantile directly from the sensitive data $\x$, whereas PPquantile, like Algorithm \ref{alg:P3_hist}, outputs sanitized posterior quantiles for parameter $\theta$, which, as discussed in Section \ref{sec:GS}, is a significantly more complex problem to design a DP randomized mechanism for. Lines~\ref{step:identify} to \ref{step:reindex} in Algorithm \ref{alg:P3_hist} are unique and necessary for PPquantile, highlighting the fundamental differences from PrivateQuantile.  

\begin{algorithm}[!htbp]
\caption{PPquantile}\label{alg:DPP}
\SetAlgoLined
\SetKwInOut{Input}{input}
\SetKwInOut{Output}{output}
\Input{posterior distribution $f(\theta|\x)$,  quantile $q \!\in\! (0,\!1)$,  privacy loss $\varepsilon$, global bounds $(L,U)$ for $\theta$, number of posterior samples $m$.}
\Output{PP estimate $\theta^*_{(q)}$ of the population posterior quantile $F^{-1}_{\theta|\mathbf{x}}(q)$.}
Generate posterior samples $\{\theta_j\}_{j=1}^m \overset{\text{iid}}{\sim} f(\theta|\x)$\;
\vspace{1pt}
Replace $\theta_j \!<\! L$ with $L$ and $\theta_j \!>\! U$ with $U$\;
\vspace{1pt}
Sort $\theta_i$ in ascending order as $L=\theta_{(0)} \leq \theta_{(1)}\leq \ldots \leq \theta_{(m)}\leq \theta_{(m+1)} = U$\; 
\vspace{1pt}
Set $k = \arg\min_{j\in \{0,1, \ldots, m+1\}}{|\theta_{(j)}-F^{-1}_{\theta|\mathbf{x}}(q)|}$ \;
\vspace{-3pt}
For $j = 0, 1, \ldots, m$, set $y_j \!=\! (\theta_{(j+1)}\!-\!\theta_{(j)}) \cdot \mbox{exp}\left(-\frac{\varepsilon|j-k|}{2(m+1)}\right)$\;
\vspace{-2pt}
Sample an integer $j^* \!\in\! \{0, 1, 2, \dots, m\}$ with probability $y_{j^*}/\sum_{j=0}^{m} y_{j}$\;\label{step:exp_norm}
\vspace{-1pt}
Draw $\theta^*_{(q)} \sim \mbox{Uniform}(\theta_{(j^*)}, \theta_{(j^*+1)})$.
\end{algorithm}


\vspace{-6pt}\begin{theorem}[Utility guarantees for PPquantile in Algorithm \ref{alg:DPP}]\label{thm:utility_PP}
Assume $f(\theta|\mathbf{x})$ is continuous at $F^{-1}_{\theta|\mathbf{x}}(q)$ for $q\in(0,1)$. Let $m$ be the number of posterior samples on $\theta$ from its posterior distribution $f(\theta|\mathbf{x})$, $(U,L)$ be the global bounds on $\theta$, $\xi = e^{-\frac{\varepsilon}{2(m+1)}}$, $s \!=\! \min_{j\in \{0, 1, \ldots, m\}}(\theta_{(j+1)} \!-\! \theta_{(j)})$, and $p_{\min} \!=\! \inf_{|\tau-F^{-1}_{\theta|\mathbf{x}}(q)|\leq 2\eta} f_{\theta|\mathbf{x}}(\tau)$ for $\eta\!>\!0$. The PP  $q^{th}$ quantile $\theta^*_{(q)}$ from PPquantile of $\varepsilon$-DP in Algorithm \ref{alg:DPP}  satisfies
\begin{align}\label{eqn:PPquantile}
    \Pr\left(\Big|\theta^*_{(q)}-\theta_{(k)}\Big|>2u\right)
    \leq &\;\frac{U-L - 4u}{s}\cdot\frac{1-\xi}{1+\xi-\xi^{k+1}-\xi^{m-k+1}}\cdot\exp\left(-\frac{\varepsilon u p_{\min}}{4}\right)\\
    &\;+ \frac{2\eta}{u}\exp\left(-\frac{(m+1)u p_{\min}}{8}\right) + 2\exp\left(-\frac{m\eta^2p^2_{\min}}{12(1-q)}\right)\mbox{ for } 0\!\leq\! u\!\leq \!\eta.\notag
\end{align}
\end{theorem}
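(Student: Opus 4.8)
The plan is to bound the tail probability $\Pr(|\theta^*_{(q)} - \theta_{(k)}| > 2u)$ by decomposing the event into three sources of error, mirroring the structure of the three terms on the right-hand side of Eq.~\eqref{eqn:PPquantile}. First, I would condition on the order statistics $\theta_{(0)} \le \cdots \le \theta_{(m+1)}$ and analyze the exponential mechanism step (lines~\ref{step:exp_norm}--7 of Algorithm~\ref{alg:DPP}). The key observation is that the sampled index $j^*$ is selected with probability proportional to $(\theta_{(j+1)} - \theta_{(j)}) \exp(-\varepsilon|j-k|/(2(m+1)))$, so the probability that $j^*$ lands far from $k$ decays geometrically in $|j^* - k|$. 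To translate ``$j^*$ far from $k$ in index'' into ``$\theta^*_{(q)}$ far from $\theta_{(k)}$ in value,'' I would invoke a local density lower bound: since $f_{\theta|\x}$ is continuous at $F^{-1}_{\theta|\x}(q)$ and bounded below by $p_{\min}$ in a $2\eta$-neighborhood, the event $|\theta^*_{(q)} - \theta_{(k)}| > 2u$ forces $j^*$ to differ from $k$ by enough index-steps that the corresponding probability mass is at least roughly $u p_{\min}$ worth of posterior probability, i.e.\ at least $\sim (m+1) u p_{\min}$ order statistics. The normalizing constant $\sum_{j=0}^m y_j$ can be lower-bounded by the $j=k$ term (times at least $s$, the minimum gap), and upper-bounded by $(U-L)\sum_j \xi^{|j-k|} = (U-L)(1+\xi - \xi^{k+1} - \xi^{m-k+1})/(1-\xi)$ via the geometric series; combining these yields the first term, with the $\exp(-\varepsilon u p_{\min}/4)$ factor arising from summing $\xi^{|j^*-k|}$ over the indices at distance $\ge (m+1) u p_{\min}/2$ from $k$.

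Second, I would handle the error between the ``ideal'' index $k$ (the order statistic closest to the true posterior quantile $F^{-1}_{\theta|\x}(q)$) and the true quantile itself — this is a purely statistical concentration question about the empirical CDF of the $m$ posterior samples. Using a DKW-type or Bernstein-type bound on $|\hat F_m(\tau) - F_{\theta|\x}(\tau)|$ in the $2\eta$-neighborhood of $F^{-1}_{\theta|\x}(q)$, together with the density lower bound $p_{\min}$ to convert CDF-closeness into quantile-closeness, gives that $|\theta_{(k)} - F^{-1}_{\theta|\x}(q)| > \eta$ happens with probability at most $2\exp(-m\eta^2 p_{\min}^2/(12(1-q)))$ — this is the third term. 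The middle term, $\frac{2\eta}{u}\exp(-(m+1)u p_{\min}/8)$, I expect to come from a more careful accounting inside the exponential-mechanism analysis: when $u \le \eta$, there is a ``near'' regime (indices within $O((m+1)\eta p_{\min})$ of $k$) where the geometric-series upper bound on the normalizer is too crude, and one instead bounds the number of bad indices directly by $2\eta/s$ (roughly) and the per-index probability by $\exp(-(m+1)u p_{\min}/8)/(\text{normalizer})$, with the $s$ cancelling against the $s$ in the denominator of the normalizer lower bound.

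The main obstacle, I expect, is the geometric translation between index-distance and value-distance under the exponential mechanism — i.e.\ correctly counting how many order statistics lie in an interval of width $u$ around $\theta_{(k)}$. This requires the local density lower bound $p_{\min}$ and a concentration argument simultaneously, and getting the constants ($4$, $8$, $12$) and the interplay between the $\eta$ and $u$ scales to line up with the stated bound is delicate; one must be careful that $p_{\min}$ is defined on a $2\eta$-neighborhood while the claim is for $u \le \eta$, leaving a safety margin so that $\theta_{(k)} \pm 2u$ stays inside the region where the density bound is valid. A secondary technical point is that the order statistics are themselves random (so $s$, $k$, and the normalizer are random), which is why the final bound is stated conditionally in spirit but must be argued to hold via a union bound over the ``good'' event (density bounded below, empirical CDF close) and the ``bad'' event handled by the third term. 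I would structure the write-up to first establish the good event, then do the exponential-mechanism analysis conditionally on it, then union-bound.
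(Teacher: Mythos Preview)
Your overall architecture is right and matches the paper's: define good events, analyze the exponential mechanism conditionally on those events, then union-bound over their failures. You correctly identify the first term as the exponential-mechanism error and the third term as sample-quantile concentration. The gap is in your account of the middle term.

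The term $\frac{2\eta}{u}\exp\!\big(-(m+1)up_{\min}/8\big)$ is \emph{not} a refinement of the exponential-mechanism analysis; notice it contains no $\varepsilon$. In the paper it is the failure probability of the event
\[
A \;=\; \Big\{\text{every sub-interval of width $u$ in }[\theta_{(k)}-\eta,\,\theta_{(k)}+\eta]\text{ contains at least }(m{+}1)\,u\,p_{\min}/2\text{ posterior samples}\Big\},
\]
bounded by a Chernoff inequality per block (each block has expected count $\ge (m{+}1)\,u\,p_{\min}$) followed by a union bound over the $2\eta/u$ blocks. Event $A$ is exactly what makes your index-to-value translation rigorous: on $A$, whenever $|\theta_{(j^*)}-\theta_{(k)}|>2u$ there is at least one full $u$-block between them, hence $|j^*-k|\ge (m{+}1)\,u\,p_{\min}/2$, and only then does the $\exp(-\varepsilon u p_{\min}/4)$ bound in the first term apply. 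So the middle term is a sampling-concentration cost (randomness in the $\theta_j$'s), not a ``near-regime'' piece of the mechanism (randomness in $j^*$). Your third paragraph hints at a second good event beyond quantile closeness, but you never carve it out; once you do, the three terms fall out as $\Pr(\,\cdot\mid A\cap B)$, $\Pr(A^c)$, and $\Pr(B^c)$ respectively.

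One smaller correction on the first term: to upper-bound the bad-$j^*$ probability you need a \emph{lower} bound on the normalizer, and the paper gets it by replacing every gap with $s$ and summing the geometric series, giving $\sum_j y_j \ge s\cdot\frac{1+\xi-\xi^{k+1}-\xi^{m-k+1}}{1-\xi}$. The numerator $\sum_{\text{bad }j^*}(\theta_{(j^*+1)}-\theta_{(j^*)})\,\xi^{|j^*-k|}$ is then upper-bounded by pulling out $\xi^{(m+1)up_{\min}/2}=e^{-\varepsilon u p_{\min}/4}$ and telescoping the remaining gaps to at most $U-L-4u$. Your description swapped the roles of these two bounds.
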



The detailed proof is provided in Appendix \ref{proof:PPQ}. In brief, per the Bernstein-von Mises theorem, $\theta_{(m)} -\theta_{(1)} \asymp n^{-1/2}$ as $n \rightarrow \infty$, implying $s=\mathcal{O}(n^{-1/2}m^{-1})$. Under  regularity condition that $p_{\min}$ is constant for $\eta \asymp  n^{-1/2}$, we let $\{\varepsilon\asymp m, m \asymp n^{2}, u \!\asymp\! n^{-1}\}$ and set $U-L \asymp n^{-5/2}$, then the right hand side of Eq.~\eqref{eqn:PPquantile} $\to1$ and $\theta^*_{(q)} \overset{p}{\rightarrow} \theta_{([qm])}$ asymptotically.

Theorem \ref{thm:utility_PP} suggests that PPquantile can return an asymptotically accurate posterior quantile estimate with a high probability. The probability depends on multiple hyperparameters $(U,L,u,\eta, p_{\text{min}})$ and assumption regarding their relationship (e.g. how $L,U$ individually shrink with $n$) that can be challenging to verify, making it difficult for practical implementation and also leading to potential under-performance in finite-sample scenarios (e.g., if overly conservative bounds $L,U$ are used). We will continue to explore ways to enhance the practical application of the PPquantile procedure, given that it is theoretically sound.

Algorithm \ref{alg:DPP} and Theorem \ref{thm:utility_PP}  are presented for achieving $\varepsilon$-DP guarantees. Although they can be extended to achieve other DP guarantees, such as zCDP or GDP, optimally quantifying the privacy loss of the exponential mechanism under relaxed DP frameworks remains challenging. For example, the classical conversion from
\citet{bun2016concentrated} suggests that $\varepsilon$-DP implies $\varepsilon^2/2$-zCDP, but this bound is often too loose for practical applications. More refined analyses rely on the concept of bounded range (BR) \citep{durfee2019practical}, under which an exponential mechanism of $\varepsilon$-BR is shown to satisfy $(\varepsilon^2/8 + O(\varepsilon^4), \varepsilon^2/8)$-zCDP \citep{pmlr-v132-cesar21a,dong2020optimal} -- a notable improvement, though still suboptimal. \citet{gopi2022private} derives optimal GDP bounds of $G/\sqrt{\mu}$ for the exponential mechanism when the utility functions are assumed to be $\mu$-strongly convex and the perturbations are $G$-Lipschitz continuous. However, these assumptions are often difficult to verify or calibrate in practice, which limits their applicability.

\section{Experiments}\label{sec:exp}
We evaluate our methods (4 versions of PRECISE) for generating PPIE through extensive simulation studies (Section \ref{sec:simu}), where we also compare the methods to some existing PPIE procedures, and two real-world data applications (Section \ref{sec:case}).

\subsection{Simulation studies}\label{sec:simu}
The goals of the simulation studies are 1) to validate that PRECISE achieves nominal coverage across various inferential tasks in data of different sample sizes at varying privacy loss; 2) to showcase the improved performance of PRECISE over the existing PPIE methods, i.e., narrower intervals while maintaining correct coverage.  \emph{The experiment results presented below suggest that both goals have been attained.}

We simulate and examine several common data and inferential scenarios -- Gaussian mean and variance, Bernoulli proportion, Poisson mean, and linear regression. We choose these inferential tasks because they are commonly studied by the existing PPIE methods  (see Section \ref{sec:related_work}); focusing on these tasks enables a more comprehensive and fair comparison between PRECISE and these methods.
The performance of PRECISE in comparison with some existing PPIE methods is summarized in Table \ref{tab:methods}.

\begin{table}[!htb]
\caption{PPIE methods examined in the experiments and performance summary}\label{tab:methods}\vspace{-3pt}
\centering
\setlength{\tabcolsep}{1pt}
\resizebox{1\textwidth}{!}{
\begin{tabular}{lccl}
\toprule
Method & Experiments & Applicable & \multicolumn{1}{c}{(nominal CP achieved ?) Performance summary}\\
&& DP & \\
\midrule
PRECISE ($+$) & all & all & $(\checkmark)$ $+m^*$ is the best, $+m$ is worse than $+m^*\!,-m^*\!,-m$\\
PRECISE ($-$) & all & all & $(\times$) $-m^*,-m$ similar, under-coverage when $n\varepsilon\leq100$\\
MS \citep{liu2016model} & all & all & ($\checkmark$) fast and flexible, wide intervals\\
PB \citep{ferrando2022parametric}  & all & all & ($\checkmark$) fast and flexible, wide intervals \\
deconv \citep{wang2022differentially}  & Gaus. & $\mu$-GDP & ($\checkmark$) the widest intervals\\
repro \citep{awan2023simulation}  & Gaus. & $\mu$-GDP & ($\checkmark$) slow computation for large $n$, wide intervals\\
BLBquant \citep{chadha2024resampling} & Gaus., Bern., Pois. & $\varepsilon$-DP & ($\times$) slow for large $n$, under-coverage due to narrow width \\
Aug.MCMC \citep{ju2022data}  & linear regression &  $\varepsilon$-DP & ($\checkmark$) slow computation \& convergence, sensitive to prior\\
\midrule
\multicolumn{4}{p{1.3\linewidth}}{$^\dagger$ There are PPIE methods \citep{karwa2017finite, covington2021unbiased, evans2023statistically, d2015differential} designed specifically for Gaussian means, they are not evaluated in this work as previous studies \citep{du2020differentially,ferrando2022parametric} have demonstrated that they are inferior to those listed in the table.}\\
\multicolumn{4}{p{1.3\linewidth}}{We opt to exclude the approaches in \citet{avella2023differentially} and \citet{Wang_Kifer_Lee_2019}. 
Both are procedurally complicated and would be excessive for the inferential tasks in our experiments with closed-form estimators.}\\ 
\bottomrule
\end{tabular}} \vspace{-3pt}
\end{table}

\subsubsection{Simulation settings}\label{sec:settings} \vspace{-3pt}
We examine a wide range of sample size $n\in(100, 500, 1000, 5000, 10000, 50000)$ and privacy loss $\varepsilon\in(0.1, 0.5, 1, 2, 5, 10, 50)$ for the Laplace mechanism of $\varepsilon$-DP and $\mu \in (0.1, 0.5, 1, 5)$ for the Gaussian mechanism of $\mu$-GDP. The very large values for $\varepsilon$ or $\mu$ are used to demonstrate whether the  PPIE converges to the original inferences as the privacy loss increases.

We simulate Gaussian data  $\x\!\sim\!\mathcal{N}(\mu\!=\!0, \sigma^2\!=\!1)$,  Poisson data  $\x\!\sim\!\mbox{Pois}(\lambda\!=\!10)$, and Bernoulli data $\x\!\sim\!\mbox{Bern}(p\!=\!0.3)$. When implementing PRECISE, we use prior $f(\mu, \sigma^2)\!\propto\!(\sigma^2)^{-1}$ for the Gaussian data, the corresponding marginal posteriors are $f(\mu|\x) = t_{n-1}(\bar{x}, s^2/n)$ and $f(\sigma^2|\x) = \text{IG}((n\!-\!1)/2, (n\!-\!1)s^2/2)$, respectively, 
where $s^2$ is the sample variance and IG($\alpha,\beta$) is the inverse-gamma distribution with shape parameter $\alpha$ and scale parameter $\beta$. For the Poisson data,  $f(\lambda)\!=\!\text{Gamma}(\alpha \!=\! 0.1, \beta\!=\!0.1)$ and the corresponding posterior is $f(\lambda|\x) \!=\!\text{Gamma}(\alpha \!+\! \sum_{i=1}^n x_i, \beta \!+\! n)$. For the Bernoulli data, $f(p)\!=\!$ beta$(\alpha \!=\! 1,\beta \!=\! 1)$ and the corresponding posterior is  $f(p|\x)\!=\!$ beta($\alpha+\sum_{i=1}^n \!x_i, \beta +n \!-\!\sum_{i=1}^n\!x_i$). For the linear model $\x\! =\! \beta_0 + \beta_1\mathbf{z} + \mathcal{N}(0,\sigma^2\!=\!0.25^2)$, where $\beta_0 \!=\! 1, \beta_1 \!=\! 0.5$ and $\mathbf{z} \!\sim \!\mathcal{N}(0,1)$, we use prior $f(\beta_0, \beta_1,\sigma^2)\!\propto\! \sigma^{-2}$; and the marginal posterior is $\beta_1|\mathbf{z}, \mathbf{x} \sim t_{n-2}\left(\hat{\beta}_1, \frac{\hat{\sigma}^2}{\sum_{i=1}^n(z_i-\bar{z})^2}\right)$, where $\hat{\sigma}^2 = \frac{\sum_{i=1}^n(x_i-\mathbf{z}_i\widehat{\boldsymbol{\beta}})^2}{n-2}$ and $\widehat{\boldsymbol{\beta}} \!=\! (\mathbf{z}^{\top}\mathbf{z})^{-1}\mathbf{z}^{\top}\x$\footnote{Though conjugate priors are employed in all the experiments that have closed-form posterior distributions are easy to sample from, this is not a requirement for PRECISE, which can be coupled with all posterior sampling methods such as MCMC to construct PPIE.}.  
Other implementation details, including the choice of the hyperparameters for the P$^3$ and PRECISE algorithms and for the comparison methods in Table \ref{tab:methods}, are provided in Appendix \ref{ape: hyper}.

\subsubsection{Results on PPIE validity}\label{sec:results}
The inferential results are summarized by coverage probability (CP) and widths of $95\%$ interval estimates over 1,000 repeats in each simulation setting. Due to space limitations, we present the results with $\varepsilon$-DP guarantees in Figures \ref{fig:Normal} and \ref{fig:Bern+Pois} (the results with $\mu$-DP  are available in Appendix \ref{ape:exp_results}, the findings on the relative performance across the methods are consistent with those with $\varepsilon$-DP). 

\begin{figure}[!htbp]
 \centering
  \footnotesize{\textbf{Gaussian mean}}\\[2pt]
  \includegraphics[width = 0.9\textwidth, trim={0.05in 0.1in 0.05in, 0in},clip]{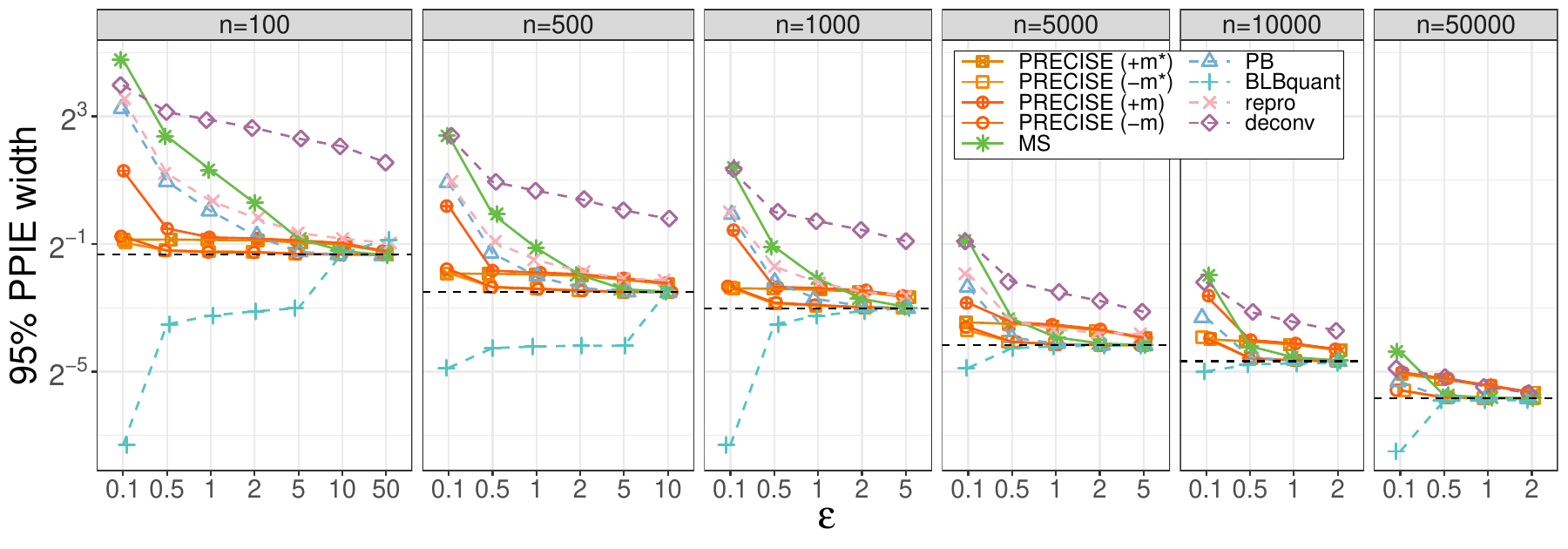}
  \includegraphics[width = 0.9\textwidth, trim={0.1in 0.1in 0.05in, 0.05in},clip]{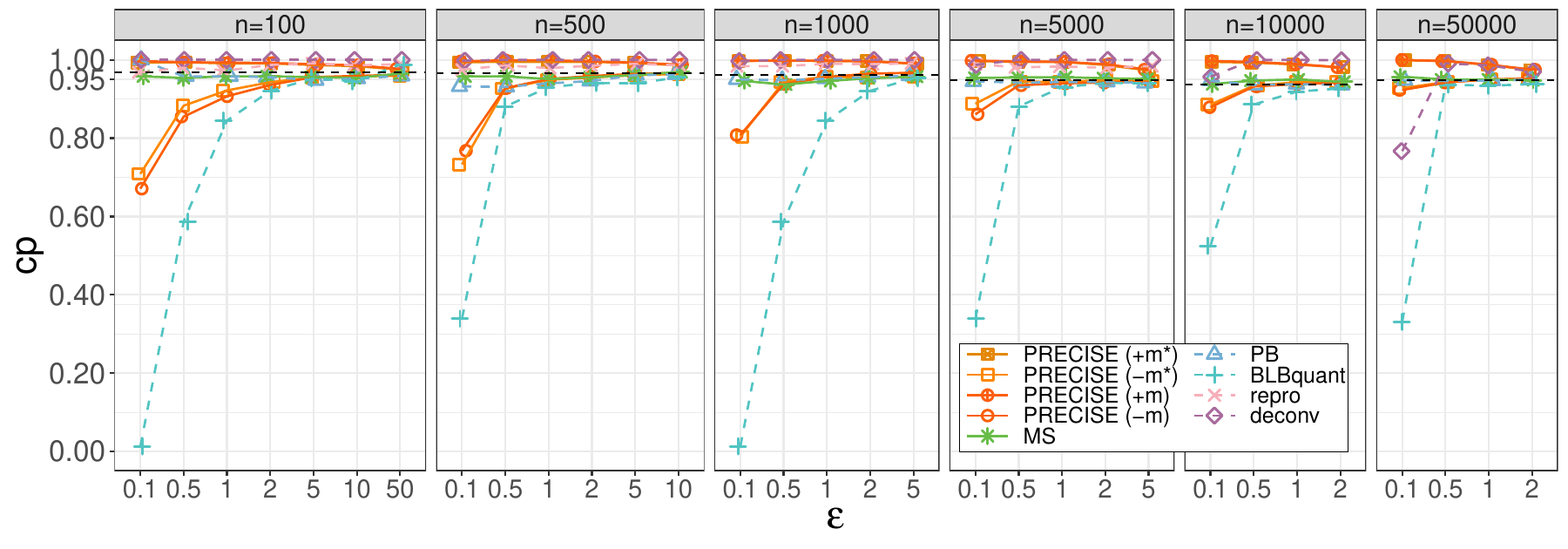}\vspace{-3pt}
  \footnotesize{\textbf{Gaussian variance}}\\[2pt]
  \includegraphics[width = 0.9\textwidth, trim={0.05in 0.1in 0.05in, 0in},clip]{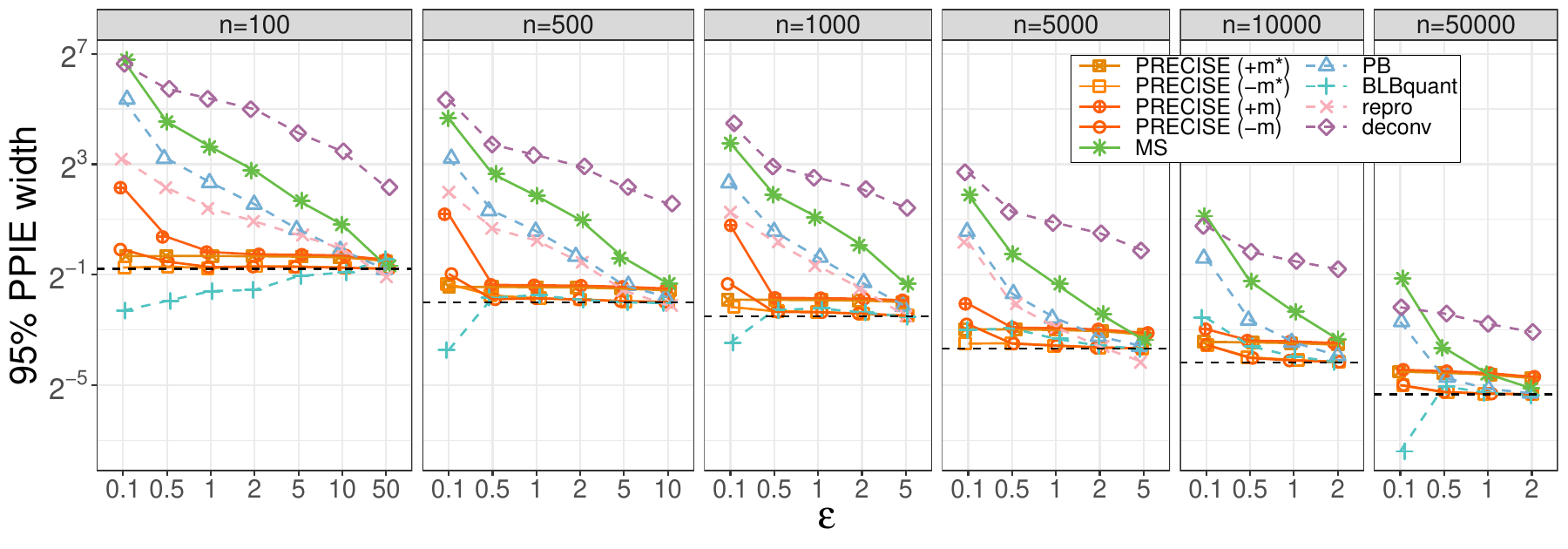}
  \includegraphics[width = 0.9\textwidth, trim={0.1in 0.1in 0.05in, 0.05in},clip]{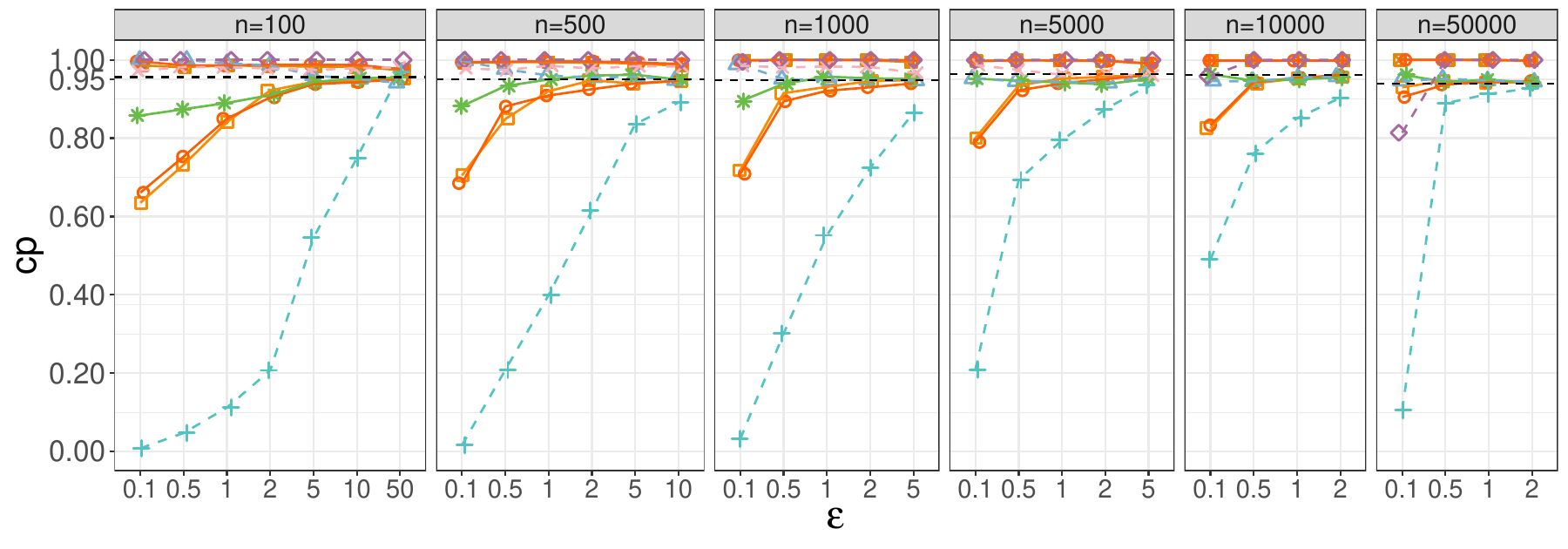}
  \vspace{-6pt}
\captionsetup{justification=raggedright, singlelinecheck=false}
\caption{Comparisons of PPIE width and CP for Gaussian mean and variance. All methods use the Laplace mechanism of $\varepsilon$-DP except for deconv and repro that are designed for $\mu$-GDP; $\mu$ is calculated given $\varepsilon$ and $\delta\!=\!1/n$ per Lemma \ref{lemma:mu_epsdelta}.   Black dashed lines represent the original non-private results. The results on $\mu$-GDP are in the appendix.}\label{fig:Normal}\vspace{-12pt}
\end{figure}

\begin{figure}[!htbp]
  \centering
  \footnotesize{\textbf{Poisson mean}}\\[4pt]
  \includegraphics[width = 0.9\textwidth, trim={0.05in 0.1in 0.05in, 0in},clip]{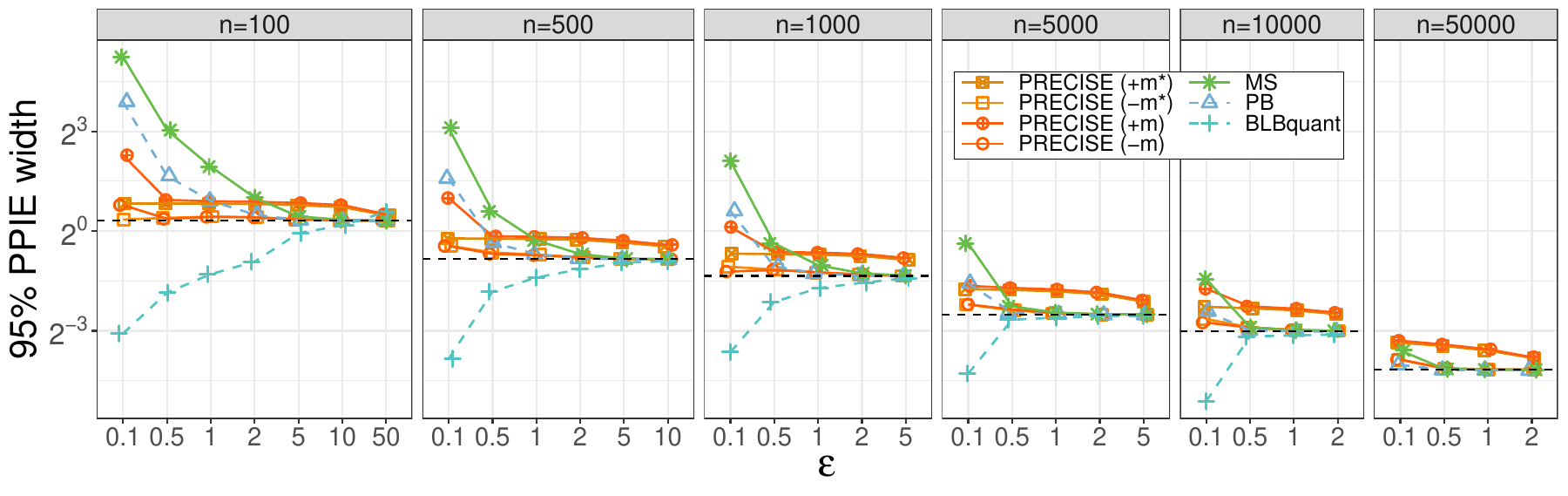}
  \includegraphics[width = 0.9\textwidth, trim={0.1in 0.1in 0.05in, 0.05in},clip]{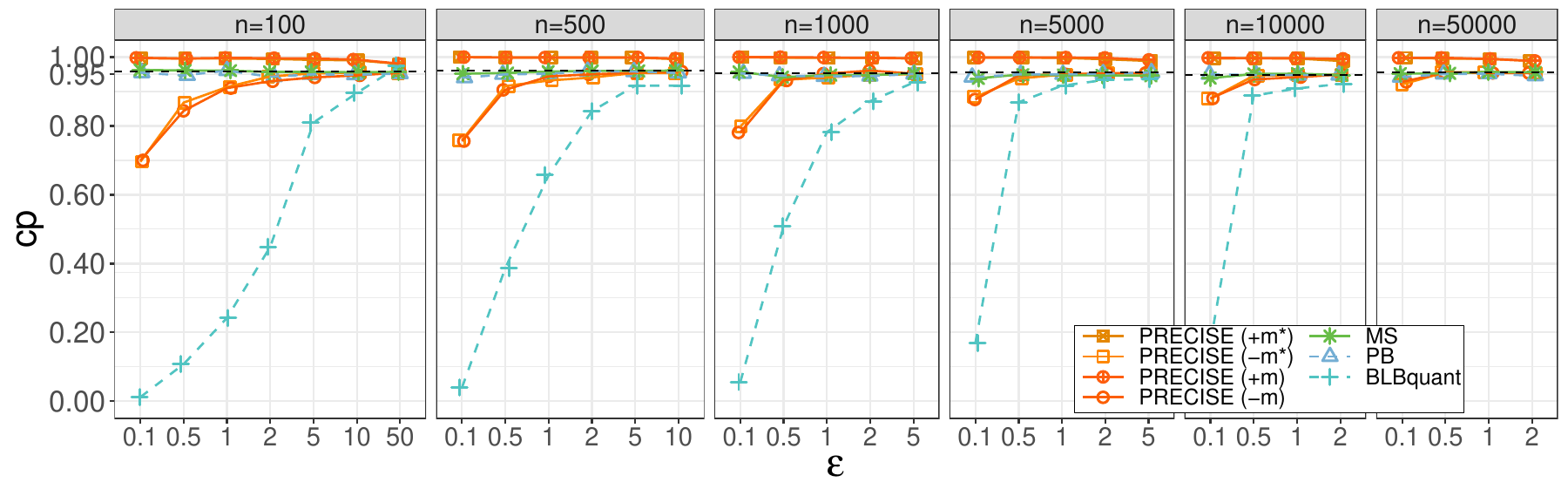}
  \footnotesize{\textbf{Bernoulli proportion}}\\[4pt]
  \includegraphics[width = 0.9\textwidth, trim={0.05in 0.1in 0.05in, 0in},clip]{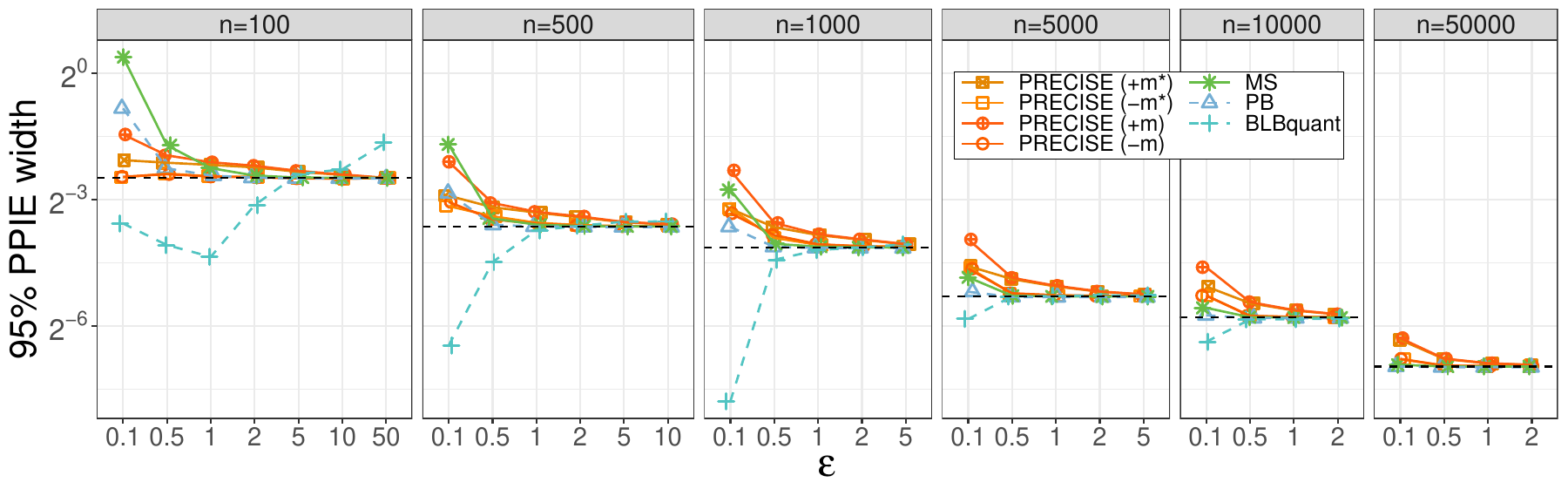}
  \includegraphics[width = 0.9\textwidth, trim={0.1in 0.1in 0.05in, 0.05in},clip]{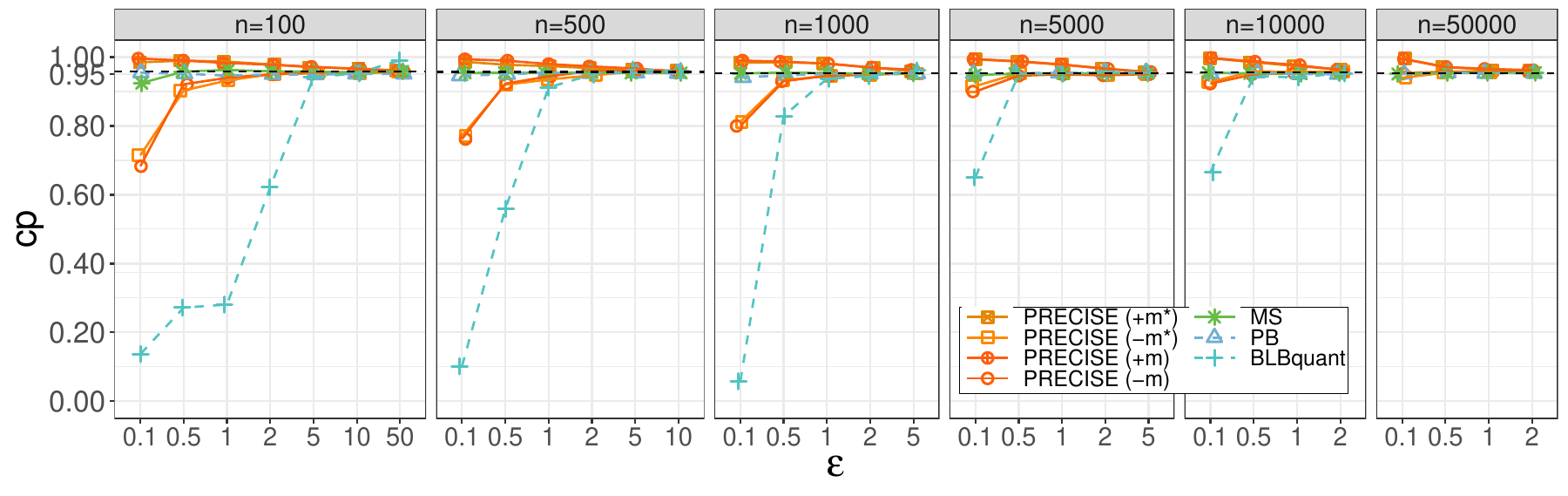}
  \vspace{-4pt}
\captionsetup{justification=raggedright, singlelinecheck=false}
\caption{Comparisons of PPIE width and CP for Poisson mean and Bernoulli proportion PPIE with $\varepsilon$-DP. Black dashed lines represent the original non-private results. The results on $\mu$-GDP are in the appendix.}\label{fig:Bern+Pois} \vspace{-12pt}
\end{figure}

\begin{figure}[!htb]
\centering
  \includegraphics[width = 0.85\textwidth, trim={0.05in 0.1in 0.05in, 0in},clip]{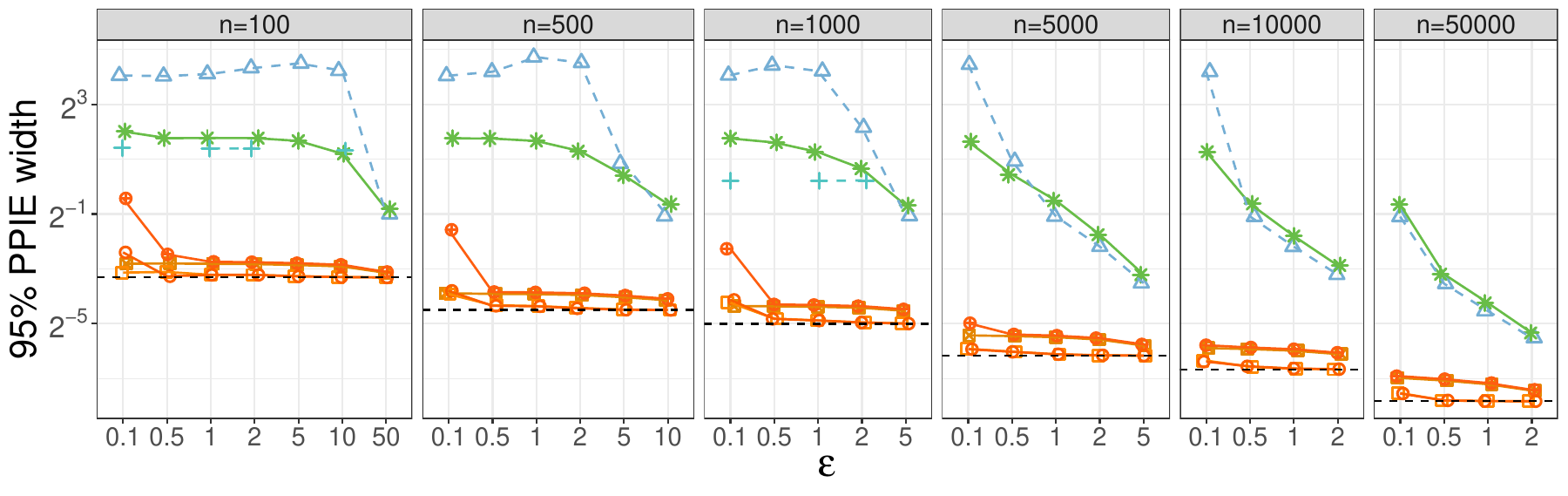}
  \includegraphics[width = 0.85\textwidth, trim={0.05in 0.1in 0.05in, 0in},clip]{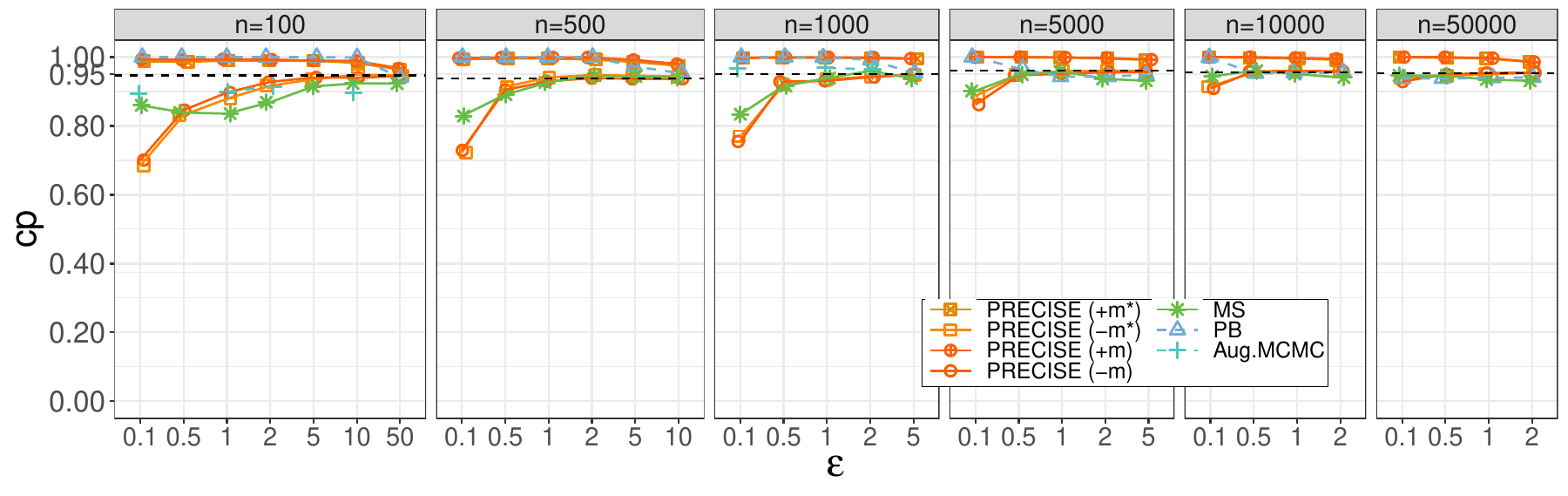}
  
  \captionsetup{justification=raggedright, singlelinecheck=false}
  \caption{Comparisons of PPIE width and CP for the linear regression slope with $\varepsilon$-DP. Black dashed lines represent the original non-private results. The results on $\mu$-DP are in the appendix.}   \vspace{-12pt}
\end{figure}

In summary, \emph{PRECISE ($+m^*$) offers nominal coverage and notably narrower interval estimates} in all $n$ scenarios, data types, inferential tasks, and for both DP types, and outperforms all competing methods examined in the experiments.

Specifically, among the four versions of PRECISE, the two with non-negativity correction  ($+m^*$ and $+m$) achieve the nominal coverage for all $\varepsilon$ and $n$. While $+m$ generates the widest PPIE intervals among the four variants of PRECISE for low privacy loss, they are still the \textit{narrowest} compared to the existing methods.  PRECISE without non-negativity correction  ($-m^*$ and $-m$) exhibits similar performances. Though both have notable under-coverage when $n\varepsilon\leq 500$, they converge quickly to the original as $\varepsilon$ increases. The differences in the results among the four PRECISE versions imply that non-negativity correction have a stronger and more lasting impact on the performance than the intra-consistency correction, but both are important for robust PP inference.

The CP and interval width for all the examined PPIE methods in Figures \ref{fig:Normal} and \ref{fig:Bern+Pois} converge toward the original metrics as $\varepsilon$ or $n$ increases. MS and PB capture the extra variability from the DP sanitization  -- as reflected by their nominal coverage, but they also output wide intervals, especially for small $\varepsilon$. BLBquant suffers from under-coverage when $n$ or $\varepsilon$ is small for every inference task due to not accounting for the sanitization variability in addition to the sampling variability, leading to invalid PPIE.   For the Gaussian mean and variance estimation, the interval widths follow the order of PRECISE($-$) $<$  PRECISE($+m^*$) $<$ PRECISE($+m$) $<$ repro $<$ PB $<$ MS $<$ deconv. For Bernoulli proportion, PB and PRECISE $(-)$ are the best performers when $\varepsilon\geq0.05$ and $n\geq 1000$  and converge to the original faster than PRECISE ($+$) as $\varepsilon$ or $n$ increases. 

For the linear regression in Figure \ref{fig:SLR}, the hybrid PB method \citep{ferrando2022parametric} designed for OLS estimation achieves the nominal coverage at the cost of wide intervals. The  Aug.MCMC method is sensitive to hyperparameter specification and also computationally costly (one MCMC chain of 10,000 iterations took about 1.5 mins for $n\!=\!100$ and 16 mins for $n\!=\!1000$). Aug.MCMC yields under-coverage with reasonably wide intervals.  Even with the help of carefully tuned hyperparameters, it still converges to the original much slower than other methods.

\begin{figure}[!htb]
    \centering
    \hspace{0.2in} \footnotesize{\textbf{Bernoulli}} \hspace{2.3in} \footnotesize{\textbf{Poisson}} \hspace{1.5in}\\
    \vspace{3pt}
    \includegraphics[width=0.42\linewidth]{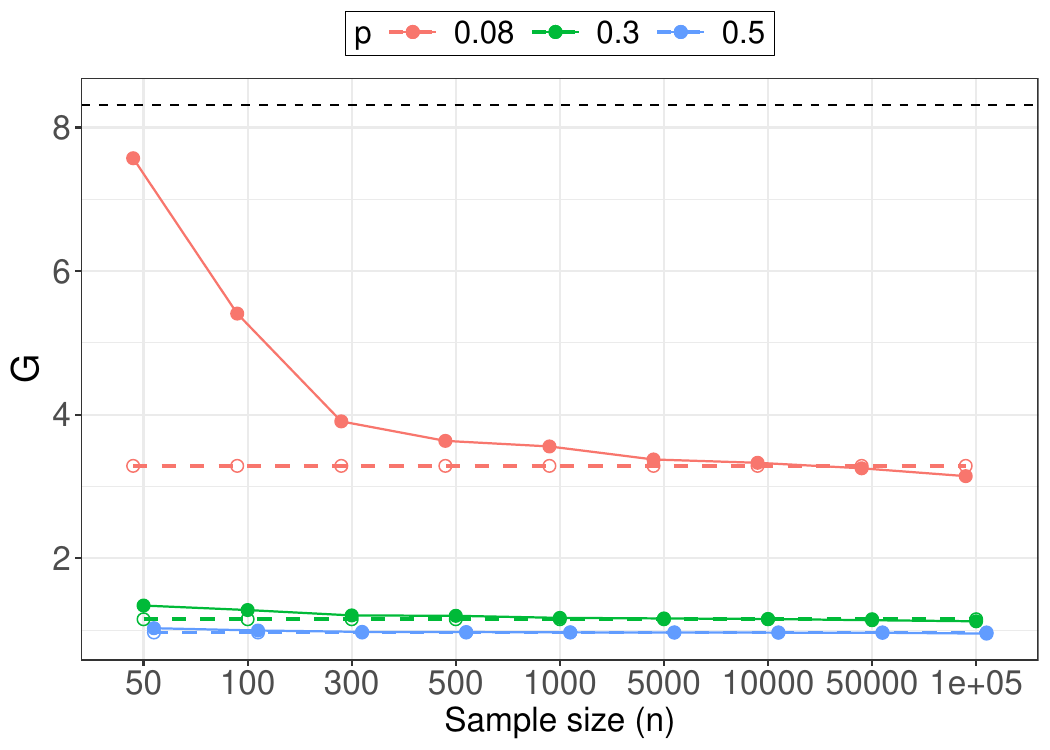}
    \includegraphics[width=0.42\linewidth]{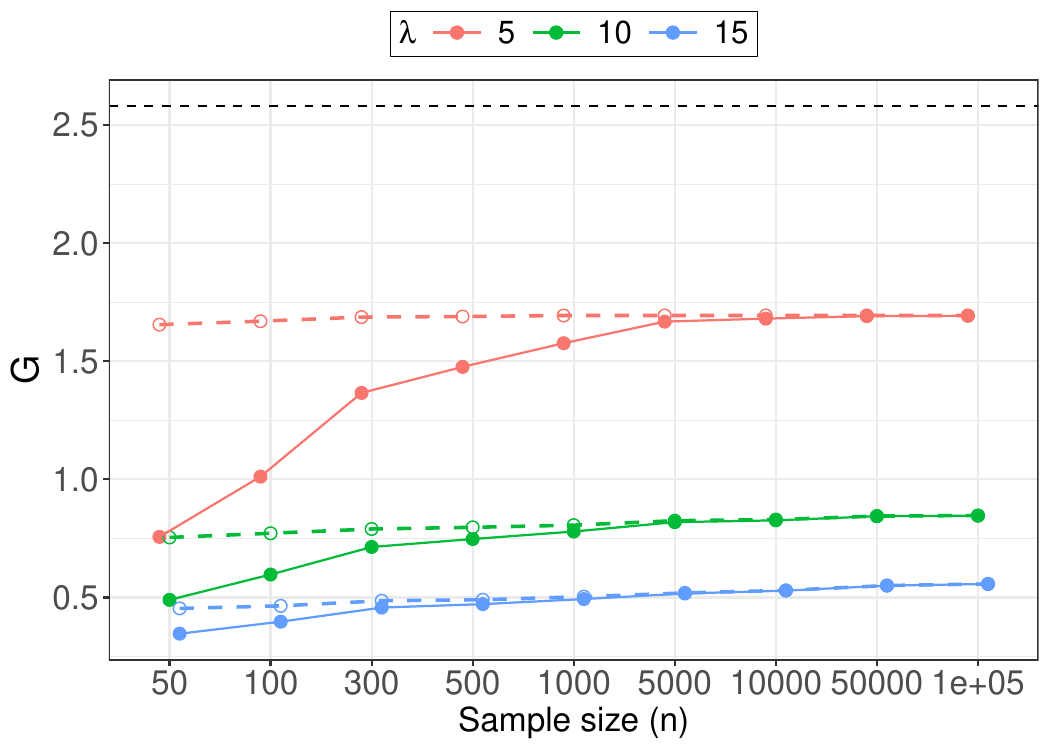}\\
    \hspace{0.4in} \footnotesize{\textbf{Gaussian mean}} \hspace{1.6in} \footnotesize{\textbf{Gaussian variance}} \hspace{1in}\\
    \vspace{3pt}
    \includegraphics[width=0.42\linewidth]{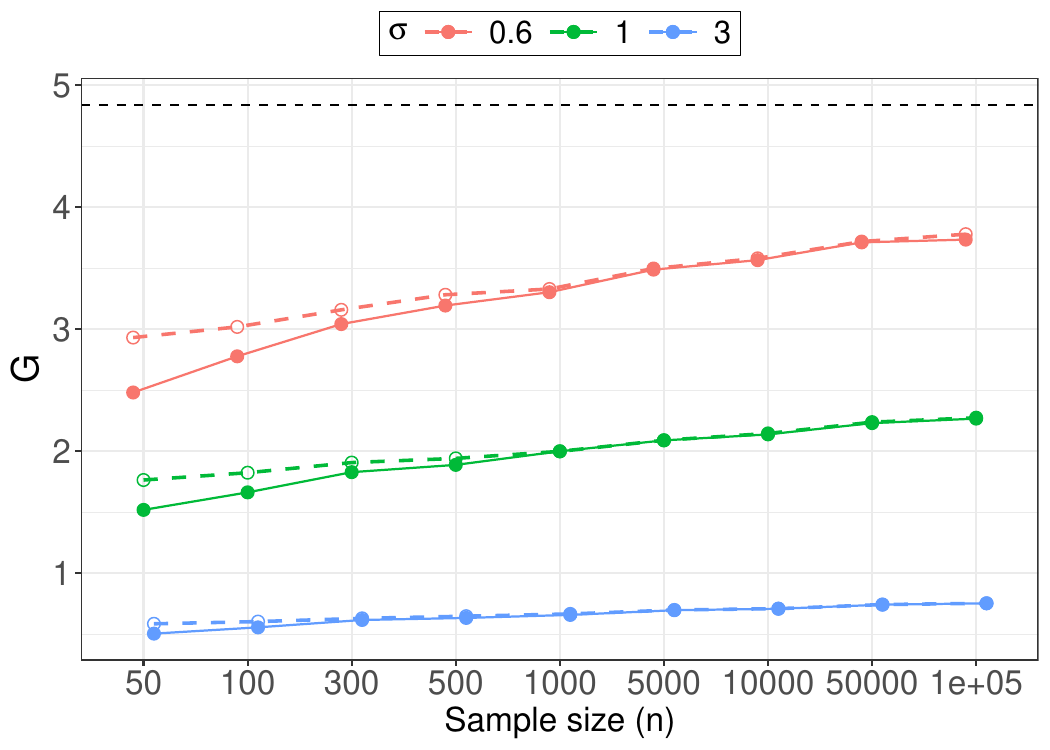}
    \includegraphics[width=0.42\linewidth]{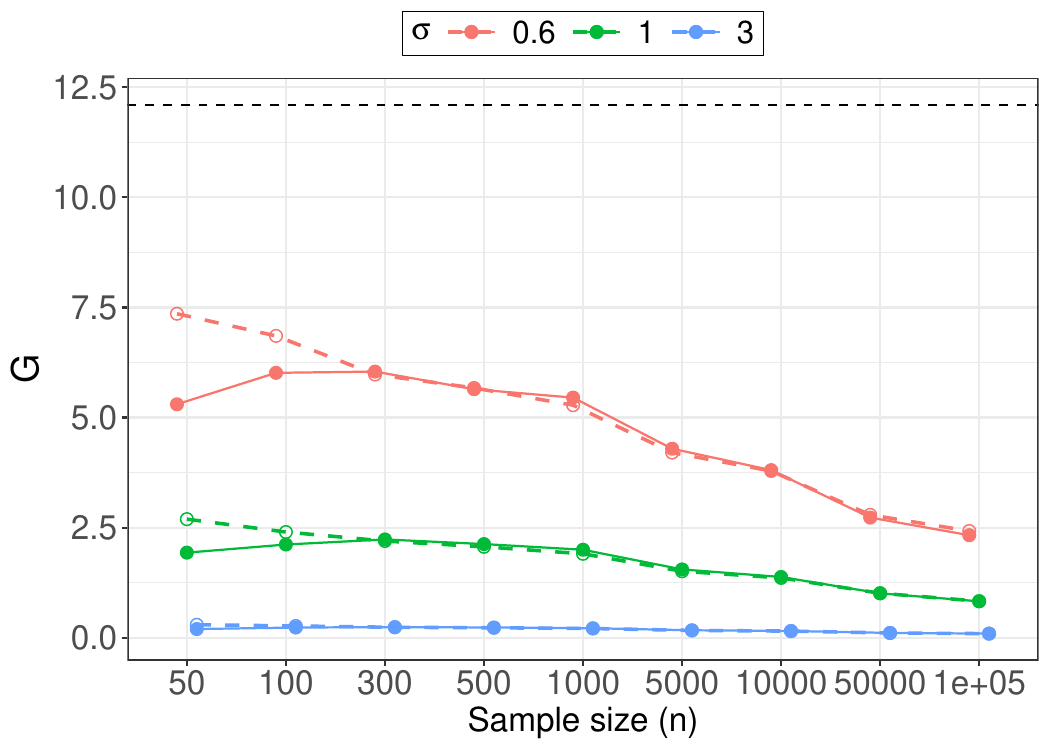}
    \captionsetup{justification=raggedright, singlelinecheck=false}
    \caption{Comparison of numerical approximation to $G(n)$ (solid lines; averaged over 100 runs), analytical approximation to $G_0$ (colored dashed lines) in Theorem \ref{thm:G}, and upper bound $\overline{G_0}$ (black dashed lines) for different true parameter values and sample sizes $n$. }
    \label{fig:G}
\end{figure}

\subsubsection{$G_0$ as a proxy for $G(n)$}\label{sec:G0Gn}
The value of $G(n)$ is used to determine the posterior sample size $m = [2G(n)h]^{-1}$ (Eq.~\ref{eqn:DeltaH1}) in the PRECISE procedure. In the simulation studies, since the true distributions $f(x|\theta)$ are known, we could simulate many pairs of neighboring datasets and perform a grid search over $\theta\in \Theta$ to numerically approximate  $G(n)=\sup_{\theta\in \Theta, d(\x,\x')=1}|f_{\theta|\x}(\theta) \!-\! f_{\theta|\x'}(\theta)|$ for a given $n$. However, since it is infeasible or impossible to exhaust the search space of $d(\x,\x')=1$ especially when $\x$ is high-dimensional or continuous, the numerical $G(n)$ is only approximate at best. Furthermore,  the numerical approach no longer applies in the real world because it may be impossible to define the search space without strong assumptions. 

In practice, $G_0$ in Eq.~\eqref{eqn:G_uni} (Theorem~\ref{thm:G}) can be used as an approximation for a sufficiently large $n$. Since $G_0$ depends on Fisher information that involves unknown parameters, we provide two ways to calculate $G_0$: 1) replace the unknown parameters with their PP estimate; 2) use an upper bound $\overline{G_0}$ for $G_0$.
Though the latter approach is more conservative from a privacy perspective, it not only conserves privacy budget by eliminating the need to sanitize additional parameter estimates using the data but also avoids the extra effort otherwise required to obtain those estimates and to develop and apply a randomized mechanism in the first place.
$\overline{G_0}$ is often informed by prior knowledge of the parameter range $(L,U)$, combined with the global bounds $(L_{\x},U_{\x})$ for data $\x$. In the simulation studies, we adopted the second approach. The hyperparameters used in determining $\overline{G_0}$ are provided in Table \ref{tab:G}; the proofs are provided in Appendix \ref{ape:eg}.

We compare the three methods to obtain $G(n)$  --  numerical approximation,  analytical approximation $G_0$, and upper bound $\overline{G_0}$ -- at different $n$ for various true parameter values in  Figure \ref{fig:G}. The results show that the numerical $G(n)$ converges rapidly to the asymptotic $G_0$ as $n$ increases; the two values are similar even for small $n$ in most cases. These findings provide reassuring evidence that $G_0$ can be reliably used in place of $G(n)$ (with $\overline{G_0}$ serving as a conservative alternative) in practical application of PRECISE.

\begin{table}[!htb]
    \centering
    \caption{Hyperparameters in the $\overline{G_0}$ calculation in the simulation studies}
    \label{tab:G}\vspace{-3pt}
    \setlength{\tabcolsep}{8pt}
    \resizebox{1\textwidth}{!}{
    \begin{tabular}{@{}c@{\hspace{3pt}}|@{\hspace{3pt}}c@{\hspace{3pt}}c@{\hspace{3pt}}c@{\hspace{3pt}}c@{}}
    \toprule
          & $G_0$ from Eq.~\eqref{eqn:G_uni} & $(L,U)$ & $(L_{\x},U_{\x})^\dagger$ & $\overline{G_0}$\\
    \midrule
         Bern($p$) & $\frac{|x'_n-x_n|}{\sqrt{2e\pi} p(1-p)}$ & $(0.03, 0.97)$ & $(0,1)$ &  $\frac{1}{\sqrt{2e\pi}\min\{L(1-L), U(1-U)\}}$\\
         Poisson($\lambda$) & $\frac{|x'_n-x_n|e^{-\frac{1}{2}}}{\sqrt{2e\pi} \lambda}$ & $(3, 35)$ & $(0,35)$ & $\frac{(U_{\x}-L_{\x})}{\sqrt{2e\pi} L}$\\
         $\mu$ in $\mathcal{N}(\mu, \sigma^2)$ &  $\frac{|x'_n-x_n|}{\sqrt{2e\pi} \sigma^2}$ & $(\mu\!-\!k\sigma, \mu\!+\!k\sigma)$ & $(\mu\!-\!k\sigma, \mu\!+\!k\sigma)$ & $\frac{\sqrt{2}k}{\sqrt{e\pi} \sigma}\leq \frac{\sqrt{2}k}{\sqrt{e\pi L_{\sigma}} }^\ddagger$\\
         $\sigma^2$ in $\mathcal{N}(\mu, \sigma^2)$ &  $\frac{|(x_n - \bar{x}_{n-1})^2 - (x_n' - \bar{x}_{n-1})^2|}{\sqrt{2e\pi}\cdot 2\sigma^4}$ & $(0.25, 25)$ & $(\mu\!-\!k\sigma, \mu\!+\!k\sigma)$ & $\frac{k^2}{2\sqrt{2e\pi} \sigma^2}\leq \frac{k^2}{2\sqrt{2e\pi} L}$\\
    \hline
    \multicolumn{5}{p{1.0\linewidth}}{\small $^\dagger$
    conservatively set to satisfy $\Pr(\x\!\notin\!(L_{\x},U_{\x}))< 10^{-5}$; $^\ddagger$: $k=5, L_\sigma = 0.25$.}\\
    \bottomrule
    \end{tabular}}
\end{table}

\subsubsection{Computational cost} 
We summarize the computational time for one run of each method to generate PPIE for the Gaussian mean in Table \ref{tab:time}. PRECISE and MS are very fast.  The computation time for repro and BLBquant increases substantially as $n$ grows, whereas the time for PRECISE, MS,  and PB remains roughly stable with $n$. Additionally, deconv shows a notable increase in time with $n$, though this increase is less pronounced compared to repro and BLBquant.
\begin{table}[!htb]
  \caption{Computational time$^\ddagger$ in one repeat for Gaussian mean PPIE ($\varepsilon=8$).}\label{tab:time}\vspace{-3pt}
  \centering
  \setlength{\tabcolsep}{8pt}
  \resizebox{0.9\textwidth}{!}{
    \begin{tabular}{ccccccc}
    \toprule
    Sample size $n$  & PRECISE & MS & PB  & deconv$^\dagger$ & repro$^\dagger$ &  BLBquant \\
    \midrule
    100 & 0.03 sec& 0.01 sec& 0.10 sec& 4.75 sec & 8.50 sec& 0.04 sec\\
    5000 & 0.05 sec &0.01 sec& 0.31 sec& 10.22 sec&1.44 min& 14.19 sec\\
    50000 & 0.05 sec & 0.01 sec& 2.02 sec& 1.26 min& 17.41 min& 8.11 min\\
    \hline
    \multicolumn{7}{p{1.0\linewidth}}{\footnotesize $^\dagger$ converted to $\mu$-GDP from $(\varepsilon, \delta\!=\!1/n)$-DP; $\mu \!=\! 2.45, 1.91, 1.71$ for $n\!=\!100, 5k, 50k$ respectively.}\\
    \multicolumn{7}{p{1.0\linewidth}}{\footnotesize  $^\ddagger$ On a MacBook Pro with an Apple M3 Mac chip (16‑core CPU, 40‑core GPU) and 64GB of unified memory. All computations were performed on a single CPU thread without GPU acceleration.}\\
    \bottomrule
  \end{tabular}}
\end{table}

\vspace{-3pt}\subsection{\textcolor{black}{Real-world case studies}}\label{sec:case}

We apply PRECISE to two real-world datasets. The first is the UCI adult dataset \citep{becker1996adult} and the goal is to obtain PPIE for the proportion $p$ of an individual's annual income $>50K$ in a randomly sampled subset of size $n=500$.  
The second is the UCI Cardiotocography dataset \citep{cardiotocography_193}, which consists of three fetal state classes (Normal, Suspect, Pathologic) with a sample size of $n=2126$; and the goal is to obtain the PPIE for the proportions of the three classes ($p_1, p_2,  p_3$).  

For the adult data, we use $f(p)=\text{Beta}(1,1)$, $\overline{G_0}= e^{-\frac{1}{2}}/(\sqrt{2\pi}L(1-L))$ with $L=0.03$, $h=2.2\times 10^{-3}$, resulting in $m =269$.
For the Cardiotocography data, we use $\mathbf{p}=(p_1,p_2,p_3)\sim$ Dirichlet$(1,1,1)$ and leverage domain knowledge to choose $L = (0.5, 0.05, 0.02)$ for $p_1,p_2,p_3$ respectively. $\overline{G_0}=e^{-\frac{1}{2}}/(\sqrt{2\pi}L(1-L))$ as we examine each proportion marginally; setting $h=(5, 0.95, 0.39)\times 10^{-4}$ for $p_1,p_2,p_3$, respectively, leads to $m= 1033$ sets of posterior samples drawn from posterior distribution  $\mathbf{p}\sim$Dirichlet$(1+n_1,1+n_2,1+n_3)$, where $n_1, n_2, n_3 = (1655, 295, 176)$ are the observed class counts. 

We run both case studies with $\varepsilon$-DP guarantees at $\varepsilon = 0.1$ and  $0.5$ 100 times to measure the stability of the methods. For the Cardiotocography data, the marginal posterior histogram of each element in $\mathbf{p}$ is sanitized with a privacy budget of $\varepsilon/3$.
The results are presented in Table \ref{tab:case}. For the adult data,
PRECISE $+m^*$ yields tighter and more stable PPIE at $\varepsilon=0.1$, while $-m^*$ performs the best at $\varepsilon=0.5$. For the Cardiotocography data, the relative performance among the  PRECISE versions is $+m^* > -m^* \approx -m > +m$ and $+m^*$ produces the PPIE closest to the original with the lowest SD, highlighting its stability\footnote{We also run MS, PB, BLBQuant in the adult data.  Consistent with the observations in the simulation studies, PRECISE  outperforms MS and PB, with narrower PPIE widths, more stable intervals,  and more accurate point estimates. BLBQuant produces invalid narrow intervals leading to under-coverage.}.
\begin{table}[!htb]
  \caption{Average PPIE width (SD) over 100 runs in two real datasets and an example on private point estimation (95\% PPIE)  from a randomly selected single repeat}\label{tab:case}\vspace{-3pt}
  \centering
  \setlength{\tabcolsep}{4pt}
  \footnotesize{\textbf{(a) Adult dataset}}\\
  \vspace{2pt}
  \resizebox{1\textwidth}{!}{
    \begin{tabular}{c|ccccc}
    \toprule
   $\varepsilon$ & original &\multicolumn{4}{c}{PRECISE} \\
    \cline{3-6}
     && $+m^*$ & $-m^*$ & $+m$ & $-m$\\
    \midrule
   & \multicolumn{5}{c}{average 95\% PPIE widths (SD) over 100 runs ($\times 10^{-2}$)}\\
    \midrule
    0.1&7.21& 9.87 (1.02)& 10.41 (12.50) & 25.04 (18.95) & 9.58 (10.58)\\ 
     0.5&7.21& 9.82 (1.06) & 7.54 (2.02)& 10.85 (3.73) & 7.76 (2.03) \\ 
     \midrule
       & \multicolumn{5}{c}{an example on posterior median (95\% PPIE) from one repeat ($\times 10^{-2}$)}\\
      \midrule
    0.1 & 21.60  (17.99, 25.21)& 22.32 (17.87, 27.37) & 21.34 (19.98, 25.37)  & 18.28 (13.72, 27.66) & 22.23 (19.33, 25.74)\\
 0.5 & 21.60  (17.99, 25.21) & 21.60 (17.93, 26.33) & 21.10 (18.04, 23.31)  & 20.41 (18.02, 26.77) & 21.46 (17.99, 26.82)\\
    \bottomrule
  \end{tabular}}\\
  \vspace{6pt}
  \footnotesize{\textbf{(b) Cardiotocography dataset}}\\
  \vspace{2pt}
  \resizebox{1\textwidth}{!}{
    \begin{tabular}{c|cccccc}
    \toprule
     \multirow{2}{*}{$\varepsilon$} & & \multirow{2}{*}{original} &\multicolumn{4}{c}{PRECISE} \\
    \cline{4-7}
     &&& $+m^*$ & $-m^*$ & $+m$ & $-m$\\
       \midrule
    \multicolumn{7}{c}{average 95\% PPIE widths (SD) over 100 runs ($\times 10^{-2}$)}\\
    \midrule
    \multirow{3}{*}{0.1} & Normal & 3.5 &11.1 (9.7)&6.1 (6.8)&20.3 (11.5)&7.5 (8.4)\\
     & Suspect & 2.9 &4.6 (0.4)&5.5 (12.0)&18.0 (23.0)&7.3 (16.0)\\
     & Pathologic & 2.3 &3.7 (0.3)&5.9 (14.4)&12.9 (19.8)&5.0 (9.6)\\
    \midrule
    \multirow{3}{*}{0.5} & Normal & 3.5 &5.2 (0.5)&4.8 (4.3)&5.6 (2.1)&4.1 (1.2)\\
     & Suspect & 2.9 &4.6 (0.3)&3.3 (0.8)&5.2 (3.2)&3.5 (0.9)\\
     & Pathologic & 2.3 &3.6 (0.3)&2.7 (0.7)&3.9 (0.7)&2.8 (0.7)\\
    \midrule
    \multicolumn{7}{c}{example posterior median (95\% PPIE) from one repeat ($\times 10^{-2}$)}\\
    \midrule
    \multirow{3}{*}{0.1} & Normal & 77.8 (76.1,79.6) &77.4 (74.9, 80.8)&77.2 (75.8, 80.9)&80.4 (50.0, 80.6)&78.7 (75.1, 79.1)\\
     & Suspect & 13.9 (12.4, 15.3) &14.0 (11.6, 16.2)&14.5 (11.6, 16.2)&12.4(5.3, 16.2)&14.0 (11.6, 16.2)\\
     & Pathologic & 8.3 (7.1, 9.4) & 8.6 (6.8, 10.5)&8.3 (7.2, 10.5)&7.2 (6.5, 10.6)&7.3 (6.8, 10.0)\\
    \midrule
    \multirow{3}{*}{0.5} & Normal & 77.8 (76.1, 79.6) &77.8 (75.1, 80.5)&77.4 (75.7, 79.7)&79.6 (75.4, 80.9)&77.5 (75.6, 80.1)\\
     & Suspect & 13.9 (12.4, 15.3) &13.9 (11.5, 16.2)&14.2 (12.9, 16.3)&13.2 (11.3, 16.4)&14.2 (11.9, 14.7)\\
     & Pathologic & 8.3 (7.1, 9.4) &8.3 (6.6, 10.0)&8.4 (7.8, 9.7)&7.3 (6.5, 10.1)&8.2 (7.2, 9.9)\\
    \bottomrule
  \end{tabular}}
\end{table}


\section{Discussion}\label{sec:discuss}
Uncertainty quantification and interval estimation are critical to scientific data interpretation and making robust decisions in the real world. 
This work provides a promising procedure -- PRECISE -- to practitioners who seek to release interval estimates from sensitive datasets without compromising the privacy of individuals who contribute personal information to the datasets. PRECISE is general-purpose and model-agnostic with theoretically proven MSE consistency. Our extensive simulation studies suggested that  PRECISE outperformed all the other examined PPIE methods, offering nominal coverage, notably narrower interval estimates, and fast computation. 

While PRECISE is theoretically applicable to any inferential task that fits in a Bayesian framework and allows for posterior sampling, its practical performance can be influenced by factors such as the dimensionality of the estimation problem, the number of posterior samples, the sample size, and the PRECISE hyperparameters. Addressing these considerations will be the focus of our upcoming work, especially the scalability of PRECISE to high-dimensional settings and its adaptability to more complex inferential tasks, such as regularized regressions and prediction intervals.

In conclusion,  with the theoretically proven statistical validity, guaranteed privacy, along with demonstrated superiority in utility and computation over other PPIE methods, we believe that PRECISE provides a practically promising and effective procedure for releasing interval estimates with DP guarantees in real-world applications, fostering trust in data collection and information sharing across data contributors, curators, and users.

\subsection*{Data and Code}
The data and code in the simulation and case studies are available at [url] (will be open after the paper has been finalized).

\vspace{12pt}

\bibliographystyle{apalike}
\setstretch{1}
\setlength\parskip{0pt}
\bibliography{ref.bib}

\newpage
\appendix
\renewcommand{\thefigure}{S.\arabic{figure}}
\renewcommand{\thetable}{S.\arabic{table}}
\renewcommand{\thealgocf}{S.\arabic{algocf}}
\renewcommand{\thelemma}{S.\arabic{lemma}}
\renewcommand{\thetheorem}{S.\arabic{theorem}}
\renewcommand{\thecon}{S.\arabic{condition}}
\setcounter{page}{1}
\setcounter{figure}{0}
\setcounter{table}{0}
\setcounter{algocf}{0}
\setcounter{theorem}{0}
\setcounter{equation}{0}

\setcounter{tocdepth}{0} 
\addtocontents{toc}{\protect\setcounter{tocdepth}{3}} 
\section*{Appendix}
\renewcommand{\contentsname}{}
\vspace{-20pt}
\newcommand{\appendixtoc}{
  \begingroup
  \setlength{\cftbeforesubsecskip}{10pt}
  \tableofcontents
  \endgroup
}
\appendixtoc 
\setlength{\parskip}{6pt}
\setlength{\parindent}{0pt}
\setstretch{1}

\clearpage
\section{Proofs}

\subsection{Proof of Theorem \ref{thm:G}}\label{ape:G_proof}

\subsubsection{A single parameter $\theta$}\label{ape:G_proof_uni}
Let $G \overset{\scalebox{0.6}{$\triangle$}}{=}\sup_{\theta\in \Theta, d(\x,\x')=1}|f_{\theta|\x}(\theta) - f_{\theta|\x'}(\theta)|$, where the parameter $\theta \in \Theta$ is a scalar. By the Bernstein-von Mises theorem \citep{van2000asymptotic}, as $n \rightarrow \infty$,
\begin{align}
||f(\theta |\x)-{\mathcal {N}}({\hat {\theta }}_{n},n^{-1}I^{-1}_{\theta_0})||_{\mathrm {TVD} }= \mathcal{O}(n^{-1/2})
\end{align}

where $\x = \{x_1, x_2, \ldots, x_n\}, \hat{\theta}_n$ is the MAP based on $\x$, and $I_{\theta_0}$ is the Fisher information matrix evaluated at the true population parameter $\theta_0$. Since we assume a non-informative prior $f(\theta)$ relative to the amount of data, we will use MLE and MAP exchangeable in this proof as they converge to the same value asymptotically; the same applies to other proofs if applicable.

Assume the neighboring datasets $\x$ and $\x'$ differ by the last element, and $\hat{\theta}'_n$ denotes the MLE based on $\x'$. Assume $\hat{\theta}'_n - \hat{\theta}_n \approx \frac{C}{n} + o(n^{-1})$ as $n\rightarrow \infty$.
Per the triangle inequality,
\begin{align}
\left| f_{\theta|\x}(\theta) - f_{\theta|\x’}(\theta) \right|
&\leq \left| f_{\theta|\x}(\theta) - \phi(\theta; \hat{\theta}_n, n^{-1} I_{\theta_0}^{-1}) \right|+\left| \phi(\theta; \hat{\theta}_n, n^{-1} I_{\theta_0}^{-1}) - \phi(\theta; \hat{\theta}'_n, n^{-1} I_{\theta_0}^{-1}) \right| \nonumber \\
&\quad + \left| \phi(\theta; \hat{\theta}'_n, n^{-1} I_{\theta_0}^{-1}) - f_{\theta|\x’}(\theta) \right|. \label{eqn:tvd-triangle}
\end{align}
As $n \to \infty$, the first and third terms in Eq.~\eqref{eqn:tvd-triangle} on the right-hand side vanish uniformly in $\theta$ at the rate of $\mathcal{O}(n^{-1/2})$, due to total variation convergence of the posterior to its asymptotic Gaussian approximation. Thus, 

\paragraph{Substitution neighboring relation}

\begin{align}
    |f_{\theta|\x}(\theta) - f_{\theta|\x'}(\theta)| 
    &\rightarrow  \frac{\sqrt{n}}{\sqrt{2\pi I^{-1}_{\theta_0}}}\Bigg|\exp\left(-\frac{n(\theta-\hat{\theta}_n)^2}{2I^{-1}_{\theta_0}}\right)-\exp\left(-\frac{n(\theta-\hat{\theta}'_n)^2}{2I^{-1}_{\theta_0}}\right)\Bigg|\\
    &\overset{\triangle}{=}\frac{\sqrt{n}}{\sqrt{2\pi I^{-1}_{\theta_0}}}|g(\hat{\theta}_n)-g(\hat{\theta}'_n)|.\label{eqn: uni_objective}
\end{align}
Per Taylor expansion of $g(x)$ around $x_0$: $g(x)\approx g(x_0) + g'(x_0)(x-x_0) + \frac{g''(x_0)}{2!}(x-x_0)^2+\cdots$. 
\begin{align}
    &g(\hat{\theta}'_n)- g(\hat{\theta}_n)\notag\\
    \approx\;& \exp\left(-\frac{n(\theta-\hat{\theta}_n)^2}{2I^{-1}_{\theta_0}}\right) \Bigg[\frac{n(\theta-\hat{\theta}_n)}{I^{-1}_{\theta_0}}(\hat{\theta}'_n-\hat{\theta}_n)+\frac{(\hat{\theta}'_n-\hat{\theta}_n)^2}{2}\left(\frac{n^2(\theta-\hat{\theta}_n)^2}{I^{-2}_{\theta_0}}-\frac{n}{I^{-1}_{\theta_0}}\right)\notag\\
    &+\frac{(\hat{\theta}'_n-\hat{\theta}_n)^3}{3!}\left(\frac{n^3(\theta-\hat{\theta}_n)^3}{I^{-3}_{\theta_0}}-\frac{3n^2(\theta-\hat{\theta}_n)}{I^{-2}_{\theta_0}}\right)\Bigg]\\
    \rightarrow\;&g(\hat{\theta}_n)\Bigg[\frac{C(\theta\!-\!\hat{\theta}_n)}{I^{-1}_{\theta_0}} \!+\!\frac{C^2}{2}\left(\frac{(\theta\!-\!\hat{\theta}_n)^2}{I^{-2}_{\theta_0}}\!-\!\frac{1}{nI^{-1}_{\theta_0}}\right) \!+\!\frac{C^3}{3!\!}\left(\frac{(\theta\!-\!\hat{\theta}_n)^3}{I^{-3}_{\theta_0}}\!-\!\frac{3(\theta\!-\!\hat{\theta}_n)}{nI^{-2}_{\theta_0}}\right)\Bigg],\label{eqn:uni_taylor}\\
    &\mbox{ where }C=\hat{\theta}'_n-\hat{\theta}_n.
\end{align}
To obtain $G(n)$, we aim to solve for $\theta$ value that maximizes $|g(\hat{\theta}'_n)- g(\hat{\theta}_n)|$. Toward that end, we
take the first derivative of Eq.~\eqref{eqn:uni_taylor} with respect to $\theta$. 
\begin{align}
    &\frac{\partial(g(\hat{\theta}'_n)- g(\hat{\theta}_n))}{\partial\theta}\notag\\
    =\;&g(\hat{\theta}_n)\left(-\frac{n(\theta-\hat{\theta}_n)}{I^{-1}_{\theta_0}}\right)\Bigg[\frac{C(\theta\!-\!\hat{\theta}_n)}{I^{-1}_{\theta_0}} +\frac{C^2}{2}\left(\frac{(\theta\!-\!\hat{\theta}_n)^2}{I^{-2}_{\theta_0}}\!-\!\frac{1}{nI^{-1}_{\theta_0}}\right)\!+\!\frac{C^3}{3!}\left(\frac{(\theta\!-\!\hat{\theta}_n)^3}{I^{-3}_{\theta_0}}\!-\!\frac{3(\theta\!-\!\hat{\theta}_n)}{nI^{-2}_{\theta_0}}\right)\Bigg]\notag\\
    &+g(\hat{\theta}_n)\Bigg[\frac{C}{I^{-1}_{\theta_0}}+ \frac{C^2(\theta-\hat{\theta}_n)}{I^{-2}_{\theta_0}} +\frac{C^3}{2}\left(\frac{(\theta-\hat{\theta}_n)^2}{I^{-3}_{\theta_0}}-\frac{1}{nI^{-2}_{\theta_0}}\right)\Bigg].\label{eqn:uni_deriv}
\end{align}
WLOG, assume $C \!\geq\! 0$ so $\mathcal{N}(\hat{\theta}'_n, I_{\theta_0}^{-1})$ is shifted to the right of $\mathcal{N}(\hat{\theta}_n, I_{\theta_0}^{-1})$, with a single intersection point $\Tilde{\theta} = \frac{\hat{\theta}_n+\hat{\theta}'_n}{2} \in (\hat{\theta}_n, \hat{\theta}'_n)$; and $g(\hat{\theta}'_n)- g(\hat{\theta}_n)\le 0$  for $\theta \leq \Tilde{\theta}$, and $g(\hat{\theta}'_n)- g(\hat{\theta}_n)\ge 0$ for $\theta \geq \Tilde{\theta}$. Due to symmetry, $|g(\hat{\theta}_n)-g(\hat{\theta}'_n)|$ achieve its maximum at two $\theta$ values; that is, there exists a constant $d\ge 0$ such that $\frac{\partial(g(\hat{\theta}'_n)- g(\hat{\theta}_n))}{\partial\theta}|_{\theta = \hat{\theta}_n-d}=0$ and $\frac{\partial(g(\hat{\theta}'_n)- g(\hat{\theta}_n))}{\partial\theta}|_{\theta = \hat{\theta}'_n+d}=0$. Thus, it suffices to show that $g(\hat{\theta}'_n)-g(\hat{\theta}')$ is unimodal has a unique maximizer when $\theta \leq \Tilde{\theta}$. 
\begin{align*}
    g(\hat{\theta}'_n)- g(\hat{\theta}_n)= g(\hat{\theta}_n)\left(\frac{g(\hat{\theta}'_n)}{g(\hat{\theta}_n)}-1\right) \Rightarrow 
    \log( g(\hat{\theta}'_n)- g(\hat{\theta}_n)) = \log(g(\hat{\theta}_n))+ \log\left(\frac{g(\hat{\theta}'_n)}{g(\hat{\theta}_n)}-1\right).
\end{align*}
If we show $ \log( g(\hat{\theta}'_n)- g(\hat{\theta}_n))$ is concave and has a unique maximizer, the same maximizer applies to $ g(\hat{\theta}'_n)- g(\hat{\theta}_n)$ due the monotonicity of the log transformation. First, 
\begin{align*}
\frac{\partial^2 \log(g(\hat{\theta}_n))}{\partial \theta^2} & = \frac{\partial (-\frac{n(\theta-\hat{\theta}_n)}{I^{-1}_{\theta_0}})}{\partial \theta} = -\frac{n}{I^{-1}_{\theta_0}} < 0; \mbox{then }\\
 z(\theta) &= \log \left(\frac{g(\hat{\theta}'_n)}{g(\hat{\theta}_n)}-1\right) =\log\left( \exp\left(-\frac{n(\theta-\hat{\theta}'_n)^2}{2I^{-1}_{\theta_0}} +\frac{n(\theta-\hat{\theta}_n)^2}{2I^{-1}_{\theta_0}}\right)-1\right),\\
    \frac{\partial z(\theta)}{\partial \theta} &= \frac{-\frac{n(\theta-\hat{\theta}'_n)}{I^{-1}_{\theta_0}}g(\hat{\theta}'_n)g(\hat{\theta}_n) +\frac{n(\theta-\hat{\theta}_n)}{I^{-1}_{\theta_0}}g(\hat{\theta}_n)g(\hat{\theta}'_n)}{g^2(\hat{\theta}_n)\left(\frac{g(\hat{\theta}'_n)}{g(\hat{\theta}_n)}-1\right)} = \frac{g(\hat{\theta}'_n)}{g(\hat{\theta}'_n)-g(\hat{\theta}_n)}\cdot\frac{n(\hat{\theta}'_n-\hat{\theta}_n)}{I^{-1}_{\theta_0}}\\
    \frac{\partial^2 z(\theta)}{\partial \theta^2} &= \frac{n(\hat{\theta}'_n-\hat{\theta}_n)}{I^{-1}_{\theta_0}}\cdot \frac{g(\hat{\theta}'_n)(g(\hat{\theta}'_n)-g(\hat{\theta}_n))\frac{-n(\theta-\hat{\theta}'_n)}{I^{-1}_{\theta_0}}-g(\hat{\theta}'_n)(-\frac{n(\theta-\hat{\theta}'_n)}{I^{-1}_{\theta_0}}g(\hat{\theta}'_n)+\frac{n(\theta-\hat{\theta}_n)}{I^{-1}_{\theta_0}}g(\hat{\theta}_n))}{(g(\hat{\theta}'_n)-g(\hat{\theta}_n))^2} \\
    &=\frac{n(\hat{\theta}'_n-\hat{\theta}_n)}{I^{-1}_{\theta_0}}\cdot \frac{g'(\hat{\theta}_n)\left(\frac{-n(\theta-\hat{\theta}'_n)}{I^{-1}_{\theta_0}}g(\hat{\theta}'_n)+\frac{n(\theta-\hat{\theta}'_n)}{I^{-1}_{\theta_0}}g(\hat{\theta}_n))+\frac{n(\theta-\hat{\theta}'_n)}{I^{-1}_{\theta_0}}g(\hat{\theta}'_n)-\frac{n(\theta-\hat{\theta}_n)}{I^{-1}_{\theta_0}}g(\hat{\theta}_n)\right)}{(g(\hat{\theta}'_n)-g(\hat{\theta}_n))^2} \\
    & = -\left(\frac{n(\hat{\theta}'_n-\hat{\theta}_n)}{I^{-1}_{\theta_0}}\right)^2\cdot \frac{g(\hat{\theta}'_n)g(\hat{\theta}_n)}{(g(\hat{\theta}'_n)-g(\hat{\theta}_n))^2}< 0,
\end{align*}
Therefore, both  $g(\hat{\theta}_n)>0$ and $\frac{g(\hat{\theta}'_n)}{g(\hat{\theta}_n)}-1>0$ are log-concave. Given both $g(\hat{\theta}_n)>0$ and $\frac{g(\hat{\theta}'_n)}{g(\hat{\theta}_n)}-1>0$, per the product of log-concave functions  is also log-concave, $g(\hat{\theta}'_n)- g(\hat{\theta}_n) = g(\hat{\theta}_n)(\frac{g(\hat{\theta}'_n)}{g(\hat{\theta}_n)}-1)$ is also log-concave, thus unimodal for $\theta \geq \Tilde{\theta}$.

Now that we have shown  $g(\hat{\theta}'_n)-g(\hat{\theta}_n)$ has a unique maximum, the next step is to derive the maximizer. Let Eq.~\eqref{eqn:uni_deriv} equal to 0, we have 
\footnotesize
\begin{equation}
    \frac{n(\theta\!-\!\hat{\theta}_n)}{I^{-1}_{\theta_0}}\Bigg[(\theta-\hat{\theta}_n)+\frac{C}{2}\!\!\left(\!\frac{(\theta\!-\!\hat{\theta}_n)^2}{I^{-1}_{\theta_0}}\!-\!\frac{1}{n}\right) \!+\! \frac{C^2}{3!}\!\!\left(\!\frac{(\theta\!-\!\hat{\theta}_n)^3}{I^{-2}_{\theta_0}}\!-\!\frac{3(\theta\!-\!\hat{\theta}_n)}{nI^{-1}_{\theta_0}}\!\right)\!\Bigg]\! \!=\! 1 + \frac{C(\theta\!-\!\hat{\theta}_n)}{I^{-1}_{\theta_0}} +\frac{C^2}{2}\!\left(\!\frac{(\theta\!-\!\hat{\theta}_n)^2}{I^{-2}_{\theta_0}}\!-\!\frac{1}{nI^{-1}_{\theta_0}}\!\right)\notag
\end{equation}
\normalsize
Rearranging the terms, we have
\begin{align}
1-\frac{C^2}{2nI^{-1}_{\theta_0}} = & \frac{n(\theta-\hat{\theta}_n)^2}{I^{-1}_{\theta_0}}-\frac{3(\theta-\hat{\theta}_n)C}{2I^{-1}_{\theta_0}}-\frac{C^2(\theta-\hat{\theta}_n)^2}{I^{-2}_{\theta_0}}+ \frac{nC(\theta-\hat{\theta}_n)^3}{2I^{-2}_{\theta_0}}+ \frac{nC^2(\theta-\hat{\theta}_n)^4}{6I^{-3}_{\theta_0}}.\label{eqn:uni_deriv=0}
\end{align}
Substituting $\hat{\theta}_n-d$ and $\hat{\theta}'_n+d \approx \hat{\theta}_n+\frac{C}{n}+d+o(n^{-1})$ for $\theta$ in Eq.~\eqref{eqn:uni_deriv=0}, its RHS is
\begin{align}
    \text{RHS}|_{\theta = \hat{\theta}_n-d}& = \frac{nd^2}{I^{-1}_{\theta_0}} + \frac{3dC}{2I^{-1}_{\theta_0}}-\frac{C^2d^2}{I^{-2}_{\theta_0}}-\frac{nCd^3}{2I^{-2}_{\theta_0}}+\frac{nd^4C^2}{6I^{-3}_{\theta_0}}\label{eqn:uni_RHS-d}\\
     \text{RHS}|_{\theta = \hat{\theta}'_n+d}& \approx \frac{n(\frac{C}{n}+d+o(n^{-1}))^2}{I^{-1}_{\theta_0}}-\frac{3(\frac{C}{n}+d+o(n^{-1}))C}{2I^{-1}_{\theta_0}}-\frac{C^2(\frac{C}{n}+d+o(n^{-1}))^2}{I^{-2}_{\theta_0}}\notag\\
    &\;\quad +\frac{nC(\frac{C}{n}+d+o(n^{-1}))^3}{2I^{-2}_{\theta_0}}+\frac{n(\frac{C}{n}+d+o(n^{-1}))^4C^2}{6I^{-3}_{\theta_0}},\label{eqn:uni_RHS+d}
\end{align}
respectively. Taking the difference between Eq.~\eqref{eqn:uni_RHS+d} and Eq.~\eqref{eqn:uni_RHS-d} leads to
\begin{align}
    0 & = \text{RHS}|_{\theta = \hat{\theta}'_n+d}-\text{RHS}|_{\theta = \hat{\theta}_n-d}=\;-\frac{(3\frac{C}{n}+6d)C}{2I^{-1}_{\theta_0}} + \frac{C(2\frac{C}{n}+4d)}{2I^{-1}_{\theta_0}}-\frac{C^2((\frac{C}{n}+d)^2-d^2)}{I^{-2}_{\theta_0}}\notag\\
    &\qquad\qquad+\frac{nC((\frac{C}{n}+d)^3+d^3)}{2I^{-2}_{\theta_0}}+\frac{n((\frac{C}{n}+d)^4-d^4)C^2}{6I^{-3}_{\theta_0}}\\
    \Rightarrow&\frac{\cancel{(\frac{C}{n}+2d)}}{2}+\frac{\cancel{(\frac{C}{n}+2d)}\frac{C^2}{n}}{I^{-1}_{\theta_0}}\\
    &=\frac{n\cancel{(\frac{C}{n}+2d)}((\frac{C}{n}+d)^2-(\frac{C}{n}+d)d + d^2)}{2I^{-1}_{\theta_0}}+\frac{((\frac{C}{n}+d)^2+d^2)C^2\cancel{(\frac{C}{n}+2d)}}{6I^{-2}_{\theta_0}}\notag\\
   \Rightarrow & \frac{1}{2} + \frac{C^2}{nI^{-1}_{\theta_0}} = \frac{n((\frac{C}{n}+d)^2-(\frac{C}{n}+d)d + d^2)}{2I^{-1}_{\theta_0}}+\frac{((\frac{C}{n}+d)^2+d^2)C^2}{6I^{-2}_{\theta_0}}\notag\\
    &\qquad \qquad\; =\frac{C^2}{2nI^{-1}_{\theta_0}} +\frac{Cd}{2I^{-1}_{\theta_0}} +\frac{nd^2}{2I^{-1}_{\theta_0}}+\frac{((\frac{C}{n})^2+2\frac{C}{n}d+2d^2)C^2}{6I^{-2}_{\theta_0}}\\
    \Rightarrow& \frac{1}{2} + \frac{C^2}{2nI^{-1}_{\theta_0}} -\frac{C^4}{6n^2I^{-2}_{\theta_0}}= \frac{Cd}{2I^{-1}_{\theta_0}} +\frac{nd^2}{2I^{-1}_{\theta_0}}+\frac{(\frac{C}{n}d+d^2)C^2}{3I^{-2}_{\theta_0}}\\
    \Rightarrow& \frac{1}{2} + \frac{C^2}{2nI^{-1}_{\theta_0}} -\frac{C^4}{6n^2I^{-2}_{\theta_0}}= \left(\frac{C}{2I^{-1}_{\theta_0}}+\frac{C^3}{3nI^{-2}_{\theta_0}}\right)d +\left(\frac{n}{2I^{-1}_{\theta_0}}+\frac{C^2}{3I^{-2}_{\theta_0}}\right)d^2\notag\\
    \Rightarrow& \underbrace{I^{-1}_{\theta_0} + \frac{C^2}{n} -\frac{C^4}{3n^2I^{-1}_{\theta_0}}}_{=-c}= \underbrace{\left(C+\frac{2C^3}{3nI^{-1}_{\theta_0}}\right)}_{=b}d +\underbrace{\left(n+\frac{2C^2}{3I^{-1}_{\theta_0}}\right)}_{=a}d^2\notag\\
    \Rightarrow & ad^2+bd+c=0.
\end{align}
Given the quadratic equation with respect to $d$, its roots  can be obtained analytically
\begin{align}
    &\Delta = b^2-4ac=\left(\!C+\frac{2C^3}{3nI^{-1}_{\theta_0}}\!\right)^2\!\!+4\left(\!n+\frac{2C^2}{3I^{-1}_{\theta_0}}\!\right)\left(I^{-1}_{\theta_0} + \frac{C^2}{n} -\frac{C^4}{3n^2I^{-1}_{\theta_0}}\right)\notag\\
    =\;& 4I^{-1}_{\theta_0}n + \left(1+4+\frac{8}{3}\right)C^2 + \frac{C^4}{nI_{\theta_0}^{-1}}\left(\frac{4}{3}-\frac{4}{3}+\frac{8}{3}\right)+ \frac{C^6}{n^2I_{\theta_0}^{-2}}\left(\frac{4}{9}-\frac{8}{9}\right) \notag\\
    =\; & 4I^{-1}_{\theta_0}n + \frac{23}{3}C^2 + \frac{8}{3}\cdot \frac{C^4}{nI_{\theta_0}^{-1}}-\frac{4}{9}\cdot \frac{C^6}{n^2I_{\theta_0}^{-2}}\\
    d =\; & \frac{-b\pm\sqrt{\Delta}}{2a}=\frac{-C -\frac{2C^3}{3nI^{-1}_{\theta_0}} \pm \sqrt{4I^{-1}_{\theta_0}n + \frac{23}{3}C^2 + \frac{8}{3}\cdot \frac{C^4}{nI_{\theta_0}^{-1}}-\frac{4}{9}\cdot \frac{C^6}{n^2I_{\theta_0}^{-2}}}}{2(n+\frac{2C^2}{3I^{-1}_{\theta_0}})}\notag\\
    \approx\; & \frac{-C}{2n}\pm\sqrt{\frac{I^{-1}_{\theta_0}}{n}}\asymp n^{-1/2}.\label{eqn:uni_d}
\end{align}
Plugging $d$ from Eq.~\eqref{eqn:uni_d} into Eq.~\eqref{eqn:uni_taylor}, we have
\begin{align}
    &g(\hat{\theta}'_n)- g(\hat{\theta}_n)|_{\theta = \hat{\theta}_n-d}\notag\\
    \approx& \exp\left(-\frac{nd^2}{2I^{-1}_{\theta_0}}\right)\!\Bigg[\frac{-dC}{I^{-1}_{\theta_0}} \!+\!\frac{C^2}{2}\left(\frac{d^2}{I^{-2}_{\theta_0}}\!-\!\frac{1}{nI^{-1}_{\theta_0}}\right) \!+\!\frac{C^3}{3!\!}\left(\frac{-d^3}{I^{-3}_{\theta_0}}\!+\!\frac{3d}{nI^{-2}_{\theta_0}}\right)\Bigg]\\
    = & \exp\left(-\frac{n(\frac{I^{-1}_{\theta_0}}{n}+\frac{C^2}{4n^2}-\frac{C}{n}\sqrt{\frac{I^{-1}_{\theta_0}}{n}})}{2I^{-1}_{\theta_0}}\right)\Bigg[\frac{(\frac{C}{2n}-\sqrt{\frac{I^{-1}_{\theta_0}}{n}})C}{I^{-1}_{\theta_0}} + \mathcal{O}(n^{-1})\Bigg]\\
    = & \exp\left(-\frac{1}{2}+\mathcal{O}(n^{-1/2})\right)\Bigg[\frac{-C}{\sqrt{nI^{-1}_{\theta_0}}}+\mathcal{O}(n^{-1})\Bigg]\label{eqn:uni_taylor_plug}
\end{align}
Finally, $G(n)$ can be derived by plugging Eq.~\eqref{eqn:uni_taylor_plug} into Eq.~\eqref{eqn: uni_objective}
\begin{align}
    G(n) &= \frac{\sqrt{n}}{\sqrt{2\pi I^{-1}_{\theta_0}}}|g(\hat{\theta}_n)-g(\hat{\theta}'_n)|\Big|_{\theta = \hat{\theta}_n-d}\\
    &\approx \frac{\sqrt{n}}{\sqrt{2\pi I^{-1}_{\theta_0}}}e^{-\frac{1}{2}+\mathcal{O}(n^{-\frac{1}{2}})}\Bigg[\frac{C}{\sqrt{nI^{-1}_{\theta_0}}}+\mathcal{O}(n^{-1})\Bigg] = \frac{Ce^{-\frac{1}{2}+\mathcal{O}(n^{-\frac{1}{2}})}}{\sqrt{2\pi }I^{-1}_{\theta_0}}+ \mathcal{O}(n^{-\frac{1}{2}})\label{eqn:uni_G_res}
\end{align}

\paragraph{Removal neighboring Relation} Our work so far suggested that a generic formulation on $G(n)$ is analytically challenging to derive in the case of removal neighboring relation. We will continue to investigate this problem in the future.

\subsubsection{Specific Cases}\label{ape:eg}
In this section, we derive $C$  for some specific cases, including the cases in the simulation and case studies. $G(n)$ can be obtained by plugging in $C$ into  Eq.\eqref{eqn:uni_G_res}. Unless mentioned otherwise, all neighboring relations are assumed to be substitution. 

1. If $\hat{\theta}_n = \bar{x}$, then $C = |x'_n-x_n|$. Note that this applies to cateogrical data; for example, for binary data, where $\hat{\theta}_n$ is the proportion of a level, $x\in\{0,1\}$, then $C=1$

2. If $\hat{\theta}_n = n^{-1}\sum_{i=1}^n(x_i-\bar{x})^2$, then $ \hat{\sigma}'^{2}-\hat{\sigma}^2$ 
\vspace{-9pt}
\begin{align*}
& =\textstyle  n^{-1} \left( \sum_{i=1}^n (x_i^2 - x_i'^2) - n \left( \bar{x}^2 - \bar{x}'^2 \right) \right) \\
    &= \textstyle n^{-1} \left( x_n^2 - x_n'^2 - n^{-1}(x_n - x_n')(2 \sum_{i=1}^{n-1} x_i + x_n + x_n') \right) \\
    &= \textstyle n^{-1} (x_n - x_n') (x_n + x_n' - n^{-1} (2 \sum_{i=1}^{n-1} x_i + x_n + x_n')) \\
    &= \textstyle n^{-1} (x_n - x_n') \left( (1 - n^{-1}) x_n + (1 - n^{-1}) x_n' - 2n^{-1} \sum_{i=1}^{n-1} x_i \right) \\
    &= \textstyle n^{-1}(1 - n^{-1}) (x_n^2 - x_n'^2) - 2n^{-2}(n-1) (x_n - x_n') \bar{x}_{n-1}, \textstyle \text{ where } \bar{x}_{n-1} = (n-1)^{-1} \sum_{i=1}^{n-1} x_i \\
    &=\frac{n-1}{n^2}\left(x_n^2 - x_n'^2-2 (x_n - x_n') \bar{x}_{n-1}\right)= \frac{n-1}{n^2}\left( (x_n - \bar{x}_{n-1})^2 - (x_n' - \bar{x}_{n-1})^2 \right),
\end{align*}
leading to  $C = (x_n - \bar{x}_{n-1})^2 - (x_n' - \bar{x}_{n-1})^2$

3. For linear regression $\mathbf{y} = \x\bs\beta + \varepsilon$
\begin{align*}
    f(\sigma^2, \bs\beta | \x, \mathbf{y}) & = f(\sigma^2 | \x, \mathbf{y}) f(\bs\beta |\sigma^2,  \x, \mathbf{y}), \text{ where}\\
   f(\sigma^2 | \x, \mathbf{y}) &= \mbox{IG}\left(\frac{n-(p+1)}{2}, \frac{(\mathbf{y}- \x\widehat{\bs\beta})^{\top}(\mathbf{y}- \x\widehat{\bs\beta})}{2}\right)\\
    f(\bs\beta |\sigma^2,  \x, \mathbf{y}) &= \mathcal{N}_{p+1}(\widehat{\bs\beta}, \bs\Sigma), \text{ where } \widehat{\bs\beta} = (\x^{\top}\x)^{-1}(\x^{\top}\mathbf{y}) \text{ and }\bs\Sigma = \sigma^2(\x^{\top}\x)^{-1}.
\end{align*}
In the case of simple linear regression, the marginal posterior distributions of $\beta_0$ and $\beta_1$ are
\begin{align*}
    \beta_1|\x, \mathbf{y} &\sim t_{n-2}\left(\hat{\beta}_1, \frac{\hat{\sigma}^2}{\sum_{i=1}^n(x_i-\bar{x})^2}\right) \text{ where } \hat{\sigma}^2 = \frac{\sum_{i=1}^n(y_i-\hat{y}_i)^2}{n-2},\\
    \beta_0|\x, \mathbf{y} &\sim t_{n-2}\left(\hat{\beta}_0, \hat{\sigma}^2\left(\frac{1}{n}+\frac{\bar{x}^2}{\sum_{i=1}^n(x_i-\bar{x})^2}\right)\right) \text{ where } \hat{\sigma}^2 = \frac{\sum_{i=1}^n(y_i-\hat{y}_i)^2}{n-2}.
\end{align*}
We first derive the constant $C$ for $\beta_1$. Note that $\hat{\beta}_1 = \frac{S_{xy}}{S_{xx}}$, and     $$n(\hat{\beta}'_1-\hat{\beta}_1) = n\frac{S_{x'y'}S_{xx} - S_{xy}S_{x'x'}}{S_{x'x'}S_{xx}} = n\frac{S_{xx}(S_{x'y'}-S_{xy}) - S_{xy}(S_{x'x'}-S_{xx})}{S_{x'x'}S_{xx}},\mbox{ where}$$ 
\begin{align*}
    S_{x'x'}-S_{xx} &= \left(\sum_{i=1}^{n-1} x_i^2+x'^{2}_n - n\bar{x}'^{2}\right) -\left(\sum_{i=1}^{n-1} x_i^2+x^2_n - n\bar{x}^2\right)= x'^{2}_n - x^2_n - n(\bar{x}'^{2} - \bar{x}^2)\\
    &= x'^{2}_n - x^2_n - n\left(\left(\frac{n-1}{n}\bar{x}_{n-1}+\frac
    {x'_n}{n}\right)^2- \left(\frac{n-1}{n}\bar{x}_{n-1}+\frac
    {x_n}{n}\right)^2\right)\\
    &=x'^{2}_n - x^2_n - (x'_n-x_n)\left(\frac{2(n-1)}{n}\bar{x}_{n-1}+\frac
    {x'_n+x_n}{n}\right) = \frac{n-1}{n}\left(x'^{2}_n - x^2_n-2\bar{x}_{n-1}(x'_n-x_n)\right)\\
    S_{x'y'} - S_{xy} 
    &=x'_ny'_n-n\bar{x}'\bar{y}'-x_ny_n+n\bar{x}\bar{y}\\
    &=x'_ny'_n-x_ny_n-n\Bigg[\left(\frac{n-1}{n}\bar{x}_{n-1}+\frac
    {x'_n}{n}\right)\left(\frac{n-1}{n}\bar{y}_{n-1}+\frac
    {y'_n}{n}\right)\\
    &\qquad - \left(\frac{n-1}{n}\bar{x}_{n-1}+\frac
    {x_n}{n}\right)\left(\frac{n-1}{n}\bar{y}_{n-1}+\frac
    {y_n}{n}\right)\Bigg]\\
    &=x'_ny'_n-x_ny_n-n\left(\frac{n-1}{n^2}\bar{x}_{n-1}(y'_n-y_n)+ \frac{n-1}{n^2}\bar{y}_{n-1}(x'_n-x_n)+\frac{x'_ny'_n-x_ny_n}{n^2}\right)\\
    &=\frac{n-1}{n}(x'_ny'_n-x_ny_n-\bar{x}_{n-1}(y'_n-y_n)-\bar{y}_{n-1}(x'_n-x_n)).
\end{align*}
\begin{align*}
& \text{Thus } n(\hat{\beta}'_1-\hat{\beta}_1) = n\frac{S_{x'y'}S_{xx} - S_{xy}S_{x'x'}}{S_{x'x'}S_{xx}} = n\frac{S_{xx}(S_{x'y'}-S_{xy}) - S_{xy}(S_{x'x'}-S_{xx})}{S_{x'x'}S_{xx}}\\
    =&\frac{n-1}{S_{x'x'}S_{xx}}\Big[S_{xx}(x'_ny'_n-x_ny_n-\bar{x}_{n-1}(y'_n-y_n)-\bar{y}_{n-1}(x'_n-x_n))-S_{xy}\left(x'^{2}_n - x^2_n-2\bar{x}_{n-1}(x'_n-x_n)\right)\Big]\\
    =&\underbrace{\frac{n-1}{S_{x'x'}}}_{\rightarrow (\sigma^2)^{-1}}\Big[\underbrace{x'_ny'_n-x_ny_n-\bar{x}_{n-1}(y'_n-y_n)-(x'_n-x_n)\bar{y}_{n-1}}_{=A}-\hat{\beta}_1(x'_n-x_n)(x'_n + x_n-2\bar{x}_{n-1})\Big]
    = C_1.
    \end{align*}
Together with $I_{\bs\theta} = \begin{pmatrix}
\frac{1}{\sigma^2} & \frac{x_i}{\sigma^2} & 0 \\
\frac{x_i}{\sigma^2} & \frac{x_i^2}{\sigma^2} & 0 \\
0 & 0 & \frac{1}{2\sigma^4}
\end{pmatrix},$
    $$G(n) \approx\frac{|C_1|e^{-\frac{1}{2}+\mathcal{O}(n^{-\frac{1}{2}})}}{\sqrt{2\pi }I^{-1}_{\theta_0}}+ \mathcal{O}(n^{-\frac{1}{2}}) \rightarrow\frac{|C_1|e^{-\frac{1}{2}}}{\sqrt{2\pi }\frac{\sum_{i=1}^n x_i^2}{n\sigma^2}}
    =\frac{|A-\hat{\beta}_1(x'_n-x_n)(x'_n + x_n-2\bar{x}_{n-1})|\sigma^2e^{-\frac{1}{2}}}{\sqrt{2\pi }(\bar{x}^2 +\frac{S_{xx}}{n})(\frac{S_{xx}}{n-1}+\frac{A}{n})}.$$
For example, in the simulation study, $\y=\! \beta_0 + \beta_1\x + \mathcal{N}(0,\sigma^2\!=\!0.25^2)$ with $\beta_0 \!=\! 1, \beta_1 \!=\! 0.5$ and $\x \!\sim \!\mathcal{N}(0,1)$. Then  $ \frac{\sum_{i=1}^n x_i^2}{n} = \frac{\sum_{i=1}^n x_i^2 - n\bar{x}^2+n\bar{x}^2}{n} = \bar{x}^2 +\frac{S_{xx}}{n}\rightarrow 1$,  
$A \rightarrow x'_ny'_n-x_ny_n-(x'_n-x_n)\bar{y}_{n-1} \mbox{ as } n\to\infty$, and 
    \begin{align*}
    G(n) &\rightarrow\frac{|x'_ny'_n-x_ny_n-(x'_n-x_n)(\bar{y}_{n-1} + \hat{\beta}_1x'_n + \hat{\beta}_1x_n)|\sigma^2e^{-\frac{1}{2}}}{\sqrt{2\pi }\sigma_x^2(\sigma_x^2+\frac{A}{n})}\\
    &\leq \frac{\sigma^2e^{-\frac{1}{2}}}{\sqrt{2\pi}\sigma_x^4} \cdot (|x'_ny'_n|+|x_ny_n|+|(x'_n-x_n)(\bar{y}_{n-1} + \hat{\beta}_1x'_n + \hat{\beta}_1x_n)|)
\end{align*}

In the case of $\beta_0= \bar{y} - \bar{x}\hat{\beta}_1$,
\begin{align*}
    \hat{\beta}'_0-\hat{\beta}_0& = \frac{y'_n-y_n}{n} - \bar{x}'\hat{\beta}'_1 + \bar{x}'\hat{\beta}_1-\bar{x}'\hat{\beta}_1 + \bar{x}\hat{\beta}_1= \frac{(y'_n-y_n)-\hat{\beta}_1(x'_n-x_n)}{n} - \bar{x}'(\hat{\beta}'_1 -\hat{\beta}_1)\\
    n(\hat{\beta}'_0-\hat{\beta}_0)&= (y'_n-y_n)-\hat{\beta}_1(x'_n-x_n)-n\bar{x}'(\hat{\beta}'_1 -\hat{\beta}_1)\rightarrow (y'_n-y_n)-\hat{\beta}_1(x'_n-x_n) - \bar{x}'C_1 = C_0
\end{align*}


\subsubsection{Multi-dimensional $\bs\theta$}\label{ape:G_proof_multi}
\begin{figure}[!htb]
    \centering
    \includegraphics[width=0.48\linewidth]{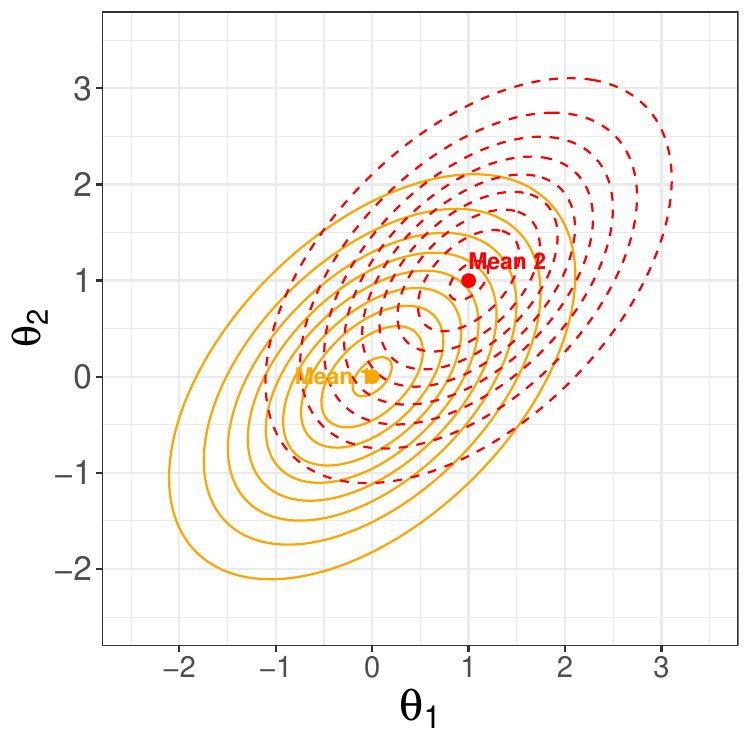}
    \includegraphics[width=0.48\linewidth]{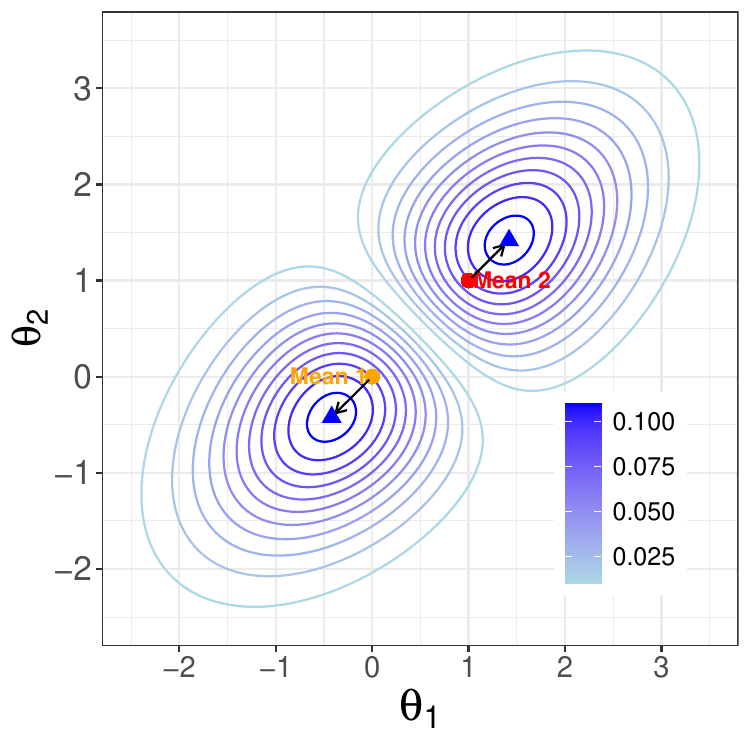}
    \captionsetup{justification=raggedright, singlelinecheck=false}
    \caption{Contour plots for the densities of two bivariate Gaussian distributions with $\bs\mu_1 = (0,0)^{\top}$ and $\bs\mu_2 = (1,1)^{\top}$ and the same covariance matrix (left) and the absolute difference between these two densities (right), where the two black vectors are identical in magnitude but in opposite directions. }
    \label{fig:bivariate_G}
\end{figure}

Let $\bs\theta = (\theta_1, \theta_2, \ldots, \theta_p)^{\top} \in \bs\Theta$ be a $p$-dimensional parameter vector. Denote the dataset by $\X = \{\x_i\}_{i=1}^n$ that contain data points on $n$ individuals, where $\x_i \in \mathbb{R}^q$. 
Per the Bernstein-von Mises theorem, as $n \rightarrow \infty$,
\begin{align}
||f(\bs\theta |\X)-{\mathcal {N}_p}({\widehat{\bs\theta}}_{n},n^{-1}I^{-1}_{\bs\theta_0})||_{\mathrm {TVD} }= \mathcal{O}(n^{-1/2}),
\end{align}
where $\widehat{\bs\theta}_n$ is the MLE based on $\X$, and $I_{\bs\theta_0}$ is the Fisher information matrix evaluated at the true population parameter $\bs\theta_0$. 

For the substitution neighboring relation, WLOG, assume datasets $\X$ and $\X'$ differ in the last observation $\x_n$ vs $\x'_n$. Let $\widehat{\bs\theta}_n'$ denote the MLE based on $\X'$. Assume $\widehat{\bs\theta}'_n - \widehat{\bs\theta}_n \approx \frac{\mathbf{C}}{n} + o(n^{-1})$ as $n\rightarrow \infty$, where $\mathbf{C}\in \mathbb{R}^p$. Then
\begin{align}
    &|f_{\bs\theta|\X}(\bs\theta) \!-\! f_{\bs\theta|\X'}(\bs\theta)| \notag\\
    =\;& (2\pi)^{-\frac{p}{2}} \sqrt{\frac{1}{\det(I^{-1}_{\bs\theta_0}/n)}}\cdot\Bigg|\exp\left(-\frac{n}{2}(\bs\theta - \widehat{\bs\theta}_n)^{\top}I_{\bs\theta_0}(\bs\theta - \widehat{\bs\theta}_n)\right)-\exp\left(-\frac{n}{2}(\bs\theta - \widehat{\bs\theta}_n')^{\top}I_{\bs\theta_0}(\bs\theta \!-\! \widehat{\bs\theta}_n')\right)\Bigg|\notag\\
    \overset{\triangle}{=}\;& (2\pi)^{-\frac{p}{2}} \sqrt{\frac{n^p}{\det(I^{-1}_{\bs\theta_0})}}\cdot \Big|g(\widehat{\bs\theta}_n)-g(\widehat{\bs\theta}_n')\Big|.\label{eqn: multi_objective}
\end{align}

Apply the Taylor expansion to $g(\x)$ around $\x_0$,
\[g(\x) \approx g(\x_0) + \nabla g(\x_0)^{\top} (\x-\x_0) + \frac{1}{2}(\x-\x_0)^{\top}\nabla^2 g(\x_0) (\x-\x_0),\]
where the gradient $\nabla g(\x)$ and Hessian matrix $\nabla^2 g(\x)$ are 
\begin{align}
    \nabla g(\x) &= \frac{\partial}{\partial \x}\exp\left(-\frac{n}{2}(\bs\theta - \x)^{\top}I_{\bs\theta_0}(\bs\theta - \x)\right)= g(\x) \cdot \left(nI_{\bs\theta_0}(\bs\theta - \x)\right)\\
    \nabla^2 g(\x) &=\frac{\partial}{\partial \x}g(\x) \cdot (nI_{\bs\theta_0}\left(\bs\theta - \x)\right)= n^2 g(\x) \cdot \left(I_{\bs\theta_0}(\bs\theta - \x)(\bs\theta - \x)^{\top}I_{\bs\theta_0}\right) - g(\x)nI_{\bs\theta_0}\notag\\
    &= g(\x)\left(n^2\left(I_{\bs\theta_0}(\bs\theta - \x)(\bs\theta - \x)^{\top}I_{\bs\theta_0}\right) -nI_{\bs\theta_0}\right).
\end{align}
Substituting $\widehat{\bs\theta}_n$ and $\widehat{\bs\theta}_n'$ for $\x$ and $\x_0$, respectively, we have
\begin{align}
    g(\widehat{\bs\theta}_n) &\approx g(\widehat{\bs\theta}_n') + g(\widehat{\bs\theta}_n')\Big[n(\bs\theta - \widehat{\bs\theta}_n')^{\top}I_{\bs\theta_0}(\widehat{\bs\theta}_n - \widehat{\bs\theta}_n')\notag\\
    &\qquad\qquad\qquad\quad + \frac{1}{2}(\widehat{\bs\theta}_n-\widehat{\bs\theta}_n')^{\top}\left(n^2\left(I_{\bs\theta_0}(\bs\theta - \widehat{\bs\theta}_n')(\bs\theta - \widehat{\bs\theta}_n')^{\top}I_{\bs\theta_0}\right) -nI_{\bs\theta_0}\right)(\widehat{\bs\theta}_n - \widehat{\bs\theta}_n')\Big]\notag\\
    &\approx g(\widehat{\bs\theta}_n')\Big[1-(\bs\theta - \widehat{\bs\theta}_n')^{\top}I_{\bs\theta_0}\mathbf{C}+ \frac{1}{2n}\mathbf{C}^{\top}\left(n\left(I_{\bs\theta_0}(\bs\theta - \widehat{\bs\theta}_n')(\bs\theta - \widehat{\bs\theta}_n')^{\top}I_{\bs\theta_0}\right) -I_{\bs\theta_0}\right)\mathbf{C}\Big].\label{eqn:multi_taylor}
\end{align}

WLOG, assume $C_j \geq 0$ for $\forall 1\leq j \leq p$ so $\mathcal{N}(\widehat{\bs\theta}'_n, I_{\bs\theta_0}^{-1})$ is shifted to the right of $\mathcal{N}(\widehat{\bs\theta}_n, I_{\bs\theta_0}^{-1})$ elementwise, with a single intersection point $\Tilde{\bs\theta}$. For $\bs\theta \leq \Tilde{\bs\theta}$, we have $g(\widehat{\bs\theta}'_n)- g(\widehat{\bs\theta}_n)\le 0$ while for $\bs\theta \geq \Tilde{\bs\theta}$, $g(\widehat{\bs\theta}'_n)- g(\widehat{\bs\theta}_n)\ge 0$. Due to symmetry (see Figure \ref{fig:bivariate_G} for an illustration), there are two maximizers in $\bs\theta$, where $|g(\widehat{\bs\theta}_n)-g(\widehat{\bs\theta}_n')|$  achieves the maximum; that is, there exists a constant vector $\mathbf{d} = (d_1, \ldots, d_p)^{\top}$ with $d_j \geq 0$ such that $\frac{\partial(g(\widehat{\bs\theta}_n)- g(\widehat{\bs\theta}_n'))}{\partial\bs\theta}|_{\bs\theta = \widehat{\bs\theta}_n'+\mathbf{d}}=0$ and $\frac{\partial(g(\widehat{\bs\theta}_n)- g(\widehat{\bs\theta}_n'))}{\partial\bs\theta}|_{\bs\theta = \widehat{\bs\theta}_n-\mathbf{d}}=0$.

Similar to Section \ref{ape:G_proof_uni}, we first prove the uniqueness of maximum for $g(\widehat{\bs\theta}'_n)- g(\widehat{\bs\theta}_n)$ when $\bs\theta \ge \Tilde{\bs\theta}$. Applying the same log-transformation to  $g(\widehat{\bs\theta}_n)( \frac{g(\widehat{\bs\theta}'_n)}{g(\widehat{\bs\theta}_n)}-1)$. First,
\begin{align*}
&\nabla^2_{\bs{\theta}} \log(g(\widehat{\bs\theta}_n)) = \frac{-nI_{\bs{\theta}_0}(\widehat{\bs\theta}'_n \!-\! \widehat{\bs\theta}_n)}{\partial \theta} = -\frac{n}{I^{-1}_{\bs\theta_0}} < 0; \\
\mbox{Let } z(\bs\theta) &= \log \left(\frac{g(\widehat{\bs\theta}'_n)}{g(\widehat{\bs\theta}_n)}-1\right) =\log\left( \exp\left(-\frac{n}{2}(\bs\theta - \widehat{\bs\theta}_n')^{\top}I_{\bs\theta_0}(\bs\theta \!-\! \widehat{\bs\theta}_n')+\frac{n}{2}(\bs\theta - \widehat{\bs\theta}_n)^{\top}I_{\bs\theta_0}(\bs\theta - \widehat{\bs\theta}_n)\right)-1\right),\\
\mbox{then }  \nabla_{\bs{\theta}} z(\bs\theta)&= \frac{\nabla \frac{g(\widehat{\bs\theta}'_n)}{g(\widehat{\bs\theta}_n)}}{\frac{g(\widehat{\bs\theta}'_n)}{g(\widehat{\bs\theta}_n)}-1} = \frac{g(\widehat{\bs\theta}_n)\nabla g(\widehat{\bs\theta}'_n) - g(\widehat{\bs\theta}'_n)\nabla g(\widehat{\bs\theta}_n)}{g^2(\widehat{\bs\theta}_n)\left(\frac{g(\widehat{\bs\theta}'_n)}{g(\widehat{\bs\theta}_n)}-1\right)}\\
    &=\frac{-n I_{\bs{\theta}_0} (\bs{\theta} \!-\! \widehat{\bs{\theta}}'_n)g(\widehat{\bs\theta}'_n) + g(\widehat{\bs\theta}'_n)n I_{\bs{\theta}_0} (\bs{\theta} - \widehat{\bs{\theta}}_n)}{g(\widehat{\bs\theta}'_n) - g(\widehat{\bs\theta}_n)} = \frac{g(\widehat{\bs\theta}'_n)}{g(\widehat{\bs\theta}'_n) - g(\widehat{\bs\theta}_n)}\cdot nI_{\bs{\theta}_0}\left(\widehat{\bs\theta}'_n \!-\!\widehat{\bs\theta}_n\right)\\
    \nabla^2_{\bs{\theta}} z(\bs\theta)&= nI_{\bs{\theta}_0}\left(\widehat{\bs\theta}'_n - \widehat{\bs\theta}_n\right) \cdot \nabla_{\bs{\theta}} \left(\frac{g(\widehat{\bs\theta}'_n)}{g(\widehat{\bs\theta}'_n) - g(\widehat{\bs\theta}_n)}\right)\\
    &= nI_{\bs{\theta}_0}(\widehat{\bs\theta}'_n \!-\! \widehat{\bs\theta}_n)\frac{(g(\widehat{\bs\theta}'_n) - g(\widehat{\bs\theta}_n))\nabla_{\bs\theta}g(\widehat{\bs\theta}'_n) - g(\widehat{\bs\theta}'_n)\nabla_{\bs\theta}(g(\widehat{\bs\theta}'_n) - g(\widehat{\bs\theta}_n))}{(g(\widehat{\bs\theta}'_n) - g(\widehat{\bs\theta}_n))^2}\\
    &= \frac{-n I_{\bs{\theta}_0} (\bs{\theta} - \widehat{\bs{\theta}}'_n)(g(\widehat{\bs\theta}'_n) - g(\widehat{\bs\theta}_n))g(\widehat{\bs\theta}'_n) - g(\widehat{\bs\theta}'_n)(-n I_{\bs{\theta}_0} (\bs{\theta} - \widehat{\bs{\theta}}'_n)g(\widehat{\bs\theta}'_n) +n I_{\bs{\theta}_0} (\bs{\theta} - \widehat{\bs{\theta}}_n) g(\widehat{\bs\theta}_n))}{(g(\widehat{\bs\theta}'_n) - g(\widehat{\bs\theta}_n))^2}\\
    &\quad \cdot nI_{\bs{\theta}_0}(\widehat{\bs\theta}'_n \!-\! \widehat{\bs\theta}_n)\\
    &= -nI_{\bs{\theta}_0}(\widehat{\bs\theta}'_n \!-\! \widehat{\bs\theta}_n)^2\frac{g(\widehat{\bs\theta}_n)g(\widehat{\bs\theta}'_n)}{(g(\widehat{\bs\theta}'_n) - g(\widehat{\bs\theta}_n))^2}< 0 
\end{align*}
Similarly to the argument in the single-parameter case in Section \ref{ape:G_proof_uni}, $g(\widehat{\bs\theta}'_n)- g(\widehat{\bs\theta}_n) = g(\widehat{\bs\theta}_n)(\frac{g(\widehat{\bs\theta}'_n)}{g(\widehat{\bs\theta}_n)}-1)$ is log-concave and has a unique maximum. 

To solve for $\bs\theta$ that leads to the maximum difference, we take the 1st derivative of Eq.~\eqref{eqn:multi_taylor} with respect to $\bs\theta$. 
\begin{align}
    &\frac{\partial (g(\widehat{\bs\theta}_n) - g(\widehat{\bs\theta}_n'))}{\partial \bs\theta}
    \approx\frac{\partial g(\widehat{\bs\theta}_n')}{\partial \bs\theta}\Big[-(\bs\theta -\widehat{\bs\theta}_n')^{\top}I_{\bs\theta_0}\mathbf{C}+ \frac{1}{2n}\mathbf{C}^{\top}\left(n\left(I_{\bs\theta_0}(\bs\theta -\widehat{\bs\theta}_n')(\bs\theta -\widehat{\bs\theta}_n')^{\top}I_{\bs\theta_0}\right) -I_{\bs\theta_0}\right)\mathbf{C}\Big]\notag\\
    & \qquad + g(\widehat{\bs\theta}_n')\Big[-I_{\bs\theta_0}\mathbf{C}+ \frac{1}{2}\underbrace{\frac{\partial \mathbf{C}^{\top}I_{\bs\theta_0}(\bs\theta -\widehat{\bs\theta}_n')(\bs\theta -\widehat{\bs\theta}_n')^{\top}I_{\bs\theta_0}^{\top}\mathbf{C}}{\partial (\bs\theta -\widehat{\bs\theta}_n')(\bs\theta -\widehat{\bs\theta}_n')^{\top} }}_{=I_{\bs\theta_0}^{\top}\mathbf{C}\mathbf{C}^{\top}I_{\bs\theta_0}}\cdot \underbrace{\frac{\partial (\bs\theta -\widehat{\bs\theta}_n')(\bs\theta -\widehat{\bs\theta}_n')^{\top}}{\partial \bs\theta}}_{=2(\bs\theta-\widehat{\bs\theta}_n')}\Big]\\
    =\;& g(\widehat{\bs\theta}_n')(-nI_{\bs\theta_0}(\bs\theta-\widehat{\bs\theta}_n'))\Big[-(\bs\theta -\widehat{\bs\theta}_n')^{\top}I_{\bs\theta_0}\mathbf{C}+ \frac{1}{2n}\mathbf{C}^{\top}\left(n\left(I_{\bs\theta_0}(\bs\theta -\widehat{\bs\theta}_n')(\bs\theta -\widehat{\bs\theta}_n')^{\top}I_{\bs\theta_0}\right) -I_{\bs\theta_0}\right)\mathbf{C}\Big]\notag\\
    & + g(\widehat{\bs\theta}_n')\Big[-I_{\bs\theta_0}\mathbf{C}+ I_{\bs\theta_0}\mathbf{C}\mathbf{C}^{\top}I_{\bs\theta_0}(\bs\theta -\widehat{\bs\theta}_n')\Big].\label{eqn:multi_deriv}
\end{align}
Set Eq.~\eqref{eqn:multi_deriv} equal to 0, then 
\begin{align}
    0=\;&nI_{\bs\theta_0}(\bs\theta-\widehat{\bs\theta}_n')(\bs\theta - \widehat{\bs\theta}_n')^{\top}I_{\bs\theta_0}\mathbf{C}+\frac{1}{2}I_{\bs\theta_0}(\bs\theta-\widehat{\bs\theta}_n')\mathbf{C}^{\top}I_{\bs\theta_0}\mathbf{C}\notag\\
    &-\frac{n}{2}I_{\bs\theta_0}(\bs\theta-\widehat{\bs\theta}_n')\mathbf{C}^{\top}I_{\bs\theta_0}(\bs\theta - \widehat{\bs\theta}_n')(\bs\theta - \widehat{\bs\theta}_n')^{\top}I_{\bs\theta_0}\mathbf{C}-I_{\bs\theta_0}\mathbf{C}+ I_{\bs\theta_0}\mathbf{C}\underbrace{\mathbf{C}^{\top}I_{\bs\theta_0}(\bs\theta - \widehat{\bs\theta}_n')}_{=c}\\
    =\;& ncI_{\bs\theta_0}(\bs\theta\!-\!\widehat{\bs\theta}_n') \!+\! \frac{1}{2}I_{\bs\theta_0}(\bs\theta\!-\!\widehat{\bs\theta}_n')\mathbf{C}^{\top}I_{\bs\theta_0}\mathbf{C}-\frac{nc^2}{2}I_{\bs\theta_0}(\bs\theta\!-\!\widehat{\bs\theta}_n')\!+\!(c-1)I_{\bs\theta_0}\mathbf{C}\notag\\
    =& \left(nc-\frac{nc^2}{2}+\frac{\mathbf{C}^{\top}I_{\bs\theta_0}\mathbf{C}}{2}\right)I_{\bs\theta_0}(\bs\theta-\widehat{\bs\theta}_n')+(c-1)I_{\bs\theta_0}\mathbf{C}.\label{eqn:multi_deriv=0}
\end{align}
Plug in $\bs\theta - \widehat{\bs\theta}_n'=\bs d$ and $\bs\theta - \widehat{\bs\theta}_n'\approx \bs\theta - (\widehat{\bs\theta}_n + \frac{\mathbf{C}}{n}+o(n^{-1})) = -\frac{\mathbf{C}}{n}-\bs d +o(n^{-1}) $ and define two constants,
\begin{align}
    c_1 &= \mathbf{C}^{\top}I_{\bs\theta_0}\bs d\label{eqn:c1}\\
    c_2 &\approx -\mathbf{C}^{\top}I_{\bs\theta_0}\left(\frac{\mathbf{C}}{n} + \bs d + o(n^{-1})\right) = -\frac{1}{n}\underbrace{\mathbf{C}^{\top}I_{\bs\theta_0}\mathbf{C}}_{=a} - c_1.\label{eqn:c2}
\end{align}
and plug Eqs~\eqref{eqn:c1} and \eqref{eqn:c2} into Eq.~\eqref{eqn:multi_deriv=0}, we have
\begin{align}
    &\left(nc_1-\frac{nc_1^2}{2}+\frac{\mathbf{C}^{\top}I_{\bs\theta_0}\mathbf{C}}{2}\right)I_{\bs\theta_0}\bs d=(1-c_1)I_{\bs\theta_0}\mathbf{C}\label{eqn:multi+d}\\
    &\left(nc_2-\frac{nc_2^2}{2}+\frac{\mathbf{C}^{\top}I_{\bs\theta_0}\mathbf{C}}{2}\right)I_{\bs\theta_0}\left(\frac{\mathbf{C}}{n}+\bs d\right)=(1-c_2)I_{\bs\theta_0}\mathbf{C}.\label{eqn:multi-d}
\end{align}
Taking the difference between Eq.~\eqref{eqn:multi+d} and Eq.~\eqref{eqn:multi-d}, we have
\begin{align}
    &(c_2-c_1)I_{\bs\theta_0}\mathbf{C} = \left(n(c_1-c_2)+\frac{n(c_2-c_1)(c_2+c_1)}{2}\right)I_{\bs\theta_0}\bs d - \left(c_2-\frac{c_2^2}{2}+\frac{\mathbf{C}^{\top}I_{\bs\theta_0}\mathbf{C}}{2n}\right)I_{\bs\theta_0}\mathbf{C}\notag\\
    &\triangleq n(c_2-c_1)\left(-1+\frac{(c_2+c_1)}{2}\right)I_{\bs\theta_0}\bs d - \left(c_2-\frac{c_2^2}{2}+\frac{a}{2n}\right)I_{\bs\theta_0}\mathbf{C}, \mbox{ where } a\triangleq \mathbf{C}^{\top}I_{\bs\theta_0}\mathbf{C}\\
    \Rightarrow & -(\frac{a}{n}+2c_1)I_{\bs\theta_0}\mathbf{C} = -n(\frac{a}{n}+2c_1)\left(-1-\frac{a}{2n}\right)I_{\bs\theta_0}\bs d - \left(c_2-\frac{c_2^2}{2}+\frac{a}{2n}\right)I_{\bs\theta_0}\mathbf{C}\\
    \Rightarrow & (2c_1-\frac{a}{2n})a = n(\frac{a}{n}+2c_1)\left(-1-\frac{a}{2n}\right)c_1 + \left(-\frac{a}{n}-c_1-\frac{(-\frac{a}{n}-c_1)^2}{2}\right)a\\
    \Rightarrow & \left(\frac{3a}{2}+2n\right)c_1^2-\left(2a+\frac{3a^2}{2n}\right)c_1+\frac{a^3}{2n^2}-2a-\frac{a^2}{2n} =  0.\label{eqn:c1_equation}
\end{align}
$c_1$ can be solved analytically from Eq.~\eqref{eqn:c1_equation}
\begin{align}
    & \Delta = \left(2a+\frac{3a^2}{2n}\right)^2 - 4\left(\frac{3a}{2}+2n\right)\left(\frac{a^3}{2n^2}-2a-\frac{a^2}{2n}\right)\notag\\
    &\; \;\;=4a^2 + \frac{9a^4}{4n^2} + \frac{6a^3}{n} - 4\left(\frac{3a^4}{4n^2}-3a^2-\frac{3a^3}{4n}+\frac{a^3}{n}-4an-a^2\right)\notag\\
    &\; \;\;= 4a^2 + \frac{9a^4}{4n^2} + \frac{6a^3}{n} + 16a^2+16an-\frac{a^3}{n}+\frac{3a^4}{n^2}\notag\\
    &\; \;\;= 16an + 20a^2 + \frac{21a^4}{4n^2}-\frac{5a^3}{n}\\
    &c_1  =-(\x_n'-\x_n)^{\top}I_{\bs\theta_0}\bs d \notag\\
    &\; \;\;= \frac{\left(2a+\frac{3a^2}{2n}\right) \pm \sqrt{\Delta}}{4n+3a} = \frac{\left(2a+\frac{3a^2}{2n}\right) \pm \sqrt{16an + 20a^2 + \frac{21a^4}{4n^2}-\frac{5a^3}{n}}}{4n+3a}\approx \frac{a}{2n} \pm \sqrt{\frac{a}{n}}.\label{eqn:c1_result}
\end{align}
Plug Eq.~\eqref{eqn:c1_result} into Eq.~\eqref{eqn: multi_objective}, we can have
\begin{align}
    G(n) &= (2\pi)^{-\frac{p}{2}} \sqrt{\frac{n^p}{\det(I^{-1}_{\bs\theta_0})}}\Big|g(\widehat{\bs\theta}_n)-g(\widehat{\bs\theta}_n')\Big|_{\bs\theta-\widehat{\bs\theta}_n'=\bs d}\notag\\
    &\approx (2\pi)^{-\frac{p}{2}} \sqrt{\frac{n^p}{\det(I^{-1}_{\bs\theta_0})}}\exp\left(-\frac{n}{2}\bs d^{\top} I_{\bs\theta_0} \bs d\right)\cdot \Big|\!-\bs d^{\top}I_{\bs\theta_0}\mathbf{C}+ \frac{1}{2n}\mathbf{C}^{\top}\left(nI_{\bs\theta_0}\bs d \bs d^{\top}I_{\bs\theta_0} -I_{\bs\theta_0}\right)\mathbf{C}\Big|\notag\\
    &= (2\pi)^{-\frac{p}{2}} \sqrt{\frac{n^p}{\det(I^{-1}_{\bs\theta_0})}}\exp\left(-\frac{n}{2}\bs d^{\top} I_{\bs\theta_0} \bs d\right)\Big|-c_1+ \frac{c_1^2}{2}-\frac{a}{2n}\Big|\notag\\
    &= (2\pi)^{-\frac{p}{2}} \sqrt{\frac{n^p}{\det(I^{-1}_{\bs\theta_0})}}\exp\left(\!-\frac{nc_1^2}{2}\left([\mathbf{C}^{\top}\!I_{\bs\theta_0}]^{-1} \right)^{\top}I_{\bs\theta_0} [\mathbf{C}^{\top}I_{\bs\theta_0}]^{-1}\!\right)\Big|\!-\!c_1\!+\! \frac{c_1^2}{2}\!-\!\frac{a}{2n}\Big|\notag\\
    &= (2\pi)^{-\frac{p}{2}} \sqrt{\frac{n^p}{\det(I^{-1}_{\bs\theta_0})}}\exp\left(-\frac{nc_1^2}{2}\underbrace{[I_{\bs\theta_0}\mathbf{C}]^{-1} [\mathbf{C}^{\top}]^{-1}}_{=a^{-1}}\right)\Big|\!-\!c_1\!+\! \frac{c_1^2}{2}\!-\!\frac{a}{2n}\Big|\notag\\
    &= (2\pi)^{-\frac{p}{2}} \sqrt{\frac{n^p}{\det(I^{-1}_{\bs\theta_0})}}\exp\left(-\frac{nc_1^2}{2a}\right)\Big|-c_1+ \frac{c_1^2}{2}-\frac{a}{2n}\Big|\\
    &\approx (2\pi)^{-\frac{p}{2}} \sqrt{\frac{n^p}{\det(I^{-1}_{\bs\theta_0})}}\exp\left(-\frac{1}{2}+\mathcal{O}(n^{-1/2})\right)\Big|\sqrt{\frac{a}{n}}+\mathcal{O}(n^{-1})\Big|\\
    &= n^{\frac{p-1}{2}} (2\pi)^{-\frac{p}{2}} \sqrt{\frac{\mathbf{C}^{\top}I_{\bs\theta_0}\mathbf{C}}{\det(I^{-1}_{\bs\theta_0})}}\exp\left(-\frac{1}{2}+\mathcal{O}(n^{-1/2})\right)+\mathcal{O}(n^{-1/2})
\end{align}

\subsection{Proof of Theorem \ref{thm:DP}}\label{ape:DPproof}
\begin{proof}

$H_p$ and $H_p'$, the histograms with bin width $h$  represented in probability  based on $m$ samples of $\theta$, are discretized probability distribution estimates for  $f_{\theta|\x}$ and $f_{\theta|\x'}$, respectively.  The $\ell_1$ distance between $H_p$ and $H_p'$ is  $\|H_p \!-\!H'_p\|_1=2\text{TVD}_{H_p, H'_p}\!=\! 2\sup_{b\in \{1, \ldots, B\}}|p_b - p'_b|$, where TVD stands for total variation distance, $p_b\!=\!\Pr(\theta\!\in\!\text{bin $b$ in $H_p$})$, and $p'_b\!=\!\Pr(\theta\!\in\!\text{bin $b$ in $H'_p$})$. The $\ell_1$ global sensitivity of the histogram with one-record change in $\x$ is given by  $\Delta_H\!=\!\max_{d(\x,\x')=1}\|H_p\!-\!H'_p\|_1\!=\!2\max_{d(\x,\x')=1}\sup_{b\in \{1, \ldots, B\}}|p_b - p'_b|\le2\sup_{d(\x,\x')=1,b\in \{1, \ldots, B\}}|p_b - p'_b|$. Since $p_b=f_{\theta|\x}(\xi_b)h$ and $p'_b=f_{\theta|\x'}(\xi_b')h$ per the mean value theorem, where $\xi'_b\approx\xi_b\in \Lambda_b$ if $h$ is small enough, $|p_b - p'_b|=|f_{\theta|\x}(\xi_b)-f_{\theta|\x'} (\xi_b)|h$, which is $\le Gh$. Thus $\Delta_{H_p}=2Gh$ and  $\Delta_{H}=2mGh$, where $H$ is the histogram represented in frequencies/counts.

\end{proof}

\subsection{Proof of Theorem \ref{thm:MSE_RAP}}\label{proof:PRECISE}

\begin{proof}
Let $\mathcal{M}$ denote the PRECISE procedure in Algorithm \ref{alg:PRECISE}; and we use $\theta^*_{(q)}$ and $ \theta^*_{([qm])}$ interchangeably  to denote the PP $q^{th}$ sample quantile in this section.

\textit{First}, we can expand the Mean Squared Error (MSE) between the sanitized $q^{th}$ posterior sample quantile $\theta^*_{([qm])}$ and the population posterior quantile $F^{-1}_{\theta|\mathbf{x}}(q)$ as
\vspace{-6pt}
\begin{align}
&\mathbb{E}_{\boldsymbol{\theta}}\mathbb{E}_{\mathcal{M}|\boldsymbol{\theta}}\left(\theta^*_{([qm])} - F^{-1}_{\theta|\mathbf{x}}(q)\right)^2 
= \mathbb{E}_{\boldsymbol{\theta}}\mathbb{E}_{\mathcal{M}|\boldsymbol{\theta}}\left(\theta^*_{([qm])} - \theta_{([qm])} + \theta_{([qm])} - F^{-1}_{\theta|\mathbf{x}}(q)\right)^2 \notag\\
=\;& \mathbb{E}_{\boldsymbol{\theta}}\mathbb{E}_{\mathcal{M}|\boldsymbol{\theta}}\left(\theta^*_{([qm])} - \theta_{([qm])}\right)^2  + \mathbb{E}_{\boldsymbol{\theta}}\mathbb{E}_{\mathcal{M}|\boldsymbol{\theta}}\left(\theta_{([qm])} - F^{-1}_{\theta|\mathbf{x}}(q)\right)^2 \notag\\
& \; +2\mathbb{E}_{\boldsymbol{\theta}}\mathbb{E}_{\mathcal{M}|\boldsymbol{\theta}}\left(\theta^*_{([qm])} - \theta_{([qm])}\right)\left(\theta_{([qm])} - F^{-1}_{\theta|\mathbf{x}}(q)\right) \notag\\
\leq\;&\mathbb{E}_{\boldsymbol{\theta}}\mathbb{E}_{\mathcal{M}|\boldsymbol{\theta}}\left(\theta^*_{([qm])} - \theta_{([qm])}\right)^2  + \mathbb{E}_{\boldsymbol{\theta}}\mathbb{E}_{\mathcal{M}|\boldsymbol{\theta}}\left(\theta_{([qm])} - F^{-1}_{\theta|\mathbf{x}}(q)\right)^2 \notag\\
& \; + 2\sqrt{\mathbb{E}_{\boldsymbol{\theta}}\mathbb{E}_{\mathcal{M}|\boldsymbol{\theta}}\left(\theta^*_{([qm])} - \theta_{([qm])}\right)^2 \mathbb{E}_{\boldsymbol{\theta}}\mathbb{E}_{\mathcal{M}|\boldsymbol{\theta}}\left(\theta_{([qm])} - F^{-1}_{\theta|\mathbf{x}}(q)\right)^2}.\label{eqn:CSinequality}
\end{align}
The last inequality in Eq.~\eqref{eqn:CSinequality} holds per the Cauchy-Schwarz inequality. 
Per Theorem 1 in \citep{walker1968note}, sample quantiles are asymptotically Gaussian, that is
\vspace{-3pt}
\begin{equation}
    \sqrt{m}\left(\theta_{([qm])} - F^{-1}_{\theta|\mathbf{x}}(q)\right)\overset{d}{\rightarrow}\mathcal{N}\left(0, q(1-q)\cdot \left(f_{\theta|\mathbf{x}}\left(F^{-1}_{\theta|\mathbf{x}}(q)\right)\right)^{-2}\right),
\end{equation}
based on which, we obtain the following result for the second square term in Eq.~\eqref{eqn:CSinequality} as $m\rightarrow\infty$,
\begin{align}\label{eqn:CS_1}
\mathbb{E}_{\boldsymbol{\theta}}\mathbb{E}_{\mathcal{M}|\boldsymbol{\theta}}\!\left(\theta_{([qm])} \!-\! F^{-1}_{\theta|\mathbf{x}}(q)\right)^2 \!\!=\mathbb{E}_{\boldsymbol{\theta}}\left(\theta_{([qm])} \!-\! F^{-1}_{\theta|\mathbf{x}}(q)\right)^2\!\!\rightarrow \frac{q(1-q)}{m}\!\cdot\! \left(\!f_{\theta|\mathbf{x}}\!\left(\!F^{-1}_{\theta|\mathbf{x}}(q)\right)\right)^{-2}.
\end{align}
As $n \to \infty$, under the regularity conditions of the Bernstein–von Mises theorem, the posterior density $f_{\theta|\mathbf{x}}(\theta)$ converges to a Gaussian density centered at the MAP with variance shrinking at the rate of $\mathcal{O}(1/n)$. i.e. $f_{\theta|\mathbf{x}}(\theta) \approx \phi\left(\theta; \hat{\theta}_n, n^{-1} I_{\theta_0}^{-1}\right)$. So, for any fixed $q\in (0,1)$, $F^{-1}_{\theta|\mathbf{x}}(q) \rightarrow \theta_0 + z_q\sqrt{I_{\theta_0}^{-1}/n}$, the density at the posterior quantile, $f_{\theta|\mathbf{x}}(F^{-1}_{\theta|\mathbf{x}}(q))$ can be approximated as
\begin{align}
    f_{\theta|\mathbf{x}}(F^{-1}_{\theta|\mathbf{x}}(q)) &\approx \frac{1}{\sqrt{2\pi \cdot n^{-1} I_{\theta_0}^{-1}}} \cdot \exp\left( -\frac{(z_q)^2}{2} \right) = \mathcal{O}(\sqrt{n})\\
    \left(f_{\theta|\mathbf{x}}\left(F^{-1}_{\theta|\mathbf{x}}(q)\right)\right)^{-2} &= \mathcal{O}(n^{-1}).
\end{align}
Next we upper bound the term $\mathbb{E}_{\boldsymbol{\theta}}\mathbb{E}_{\mathcal{M}|\boldsymbol{\theta}}\!\left(\!\theta^*_{([qm])} \!-\! \theta_{([qm])}\!\right)^2$ in Eq.~\eqref{eqn:CSinequality}. We first show the bias introduced by the truncation at $0$  (step~\ref{step:DP} Algorithm \ref{alg:PRECISE})  decays exponentially as $\varepsilon \!\rightarrow\! \infty$. For $\forall b\in \{0, \ldots, B'+1\}$,
\begin{align}
\mathbb{E}_{\mathcal{M}|\boldsymbol{\theta}}\left(c_b^*\right) &= \int_{-\infty}^{\infty}\max\{0, c_b+x\}\frac{\varepsilon}{2}e^{-\varepsilon|x|}dx = 0 + \int_{-c_b}^{\infty}(c_b+x)\frac{\varepsilon}{2}e^{-\varepsilon|x|}dx\notag\\
    &= \int_{-c_b}^{0}(c_b+x)\frac{\varepsilon}{2}e^{\varepsilon x}dx + \int_{0}^{\infty}(c_b+x)\frac{\varepsilon}{2}e^{-\varepsilon x}dx\notag\\
    &= \frac{\varepsilon c_b}{2}\int_{-c_b}^{0}e^{\varepsilon x}dx + \frac{\varepsilon}{2}\int_{-c_b}^{0}xe^{\varepsilon x}dx + \frac{\varepsilon c_b}{2}\int_{0}^{\infty}e^{-\varepsilon x}dx + \frac{\varepsilon}{2}\int_{0}^{\infty}x e^{-\varepsilon x} dx\notag\\
    &=\frac{c_b}{2}\left(1-e^{-\varepsilon c_b}\right) + \frac{1}{2}\left(c_be^{-\varepsilon c_b} - \frac{1}{\varepsilon}(1-e^{-\varepsilon c_b})\right) + \frac{c_b}{2} + \frac{1}{2\varepsilon}\notag\\
    &= c_b + \frac{e^{-\varepsilon c_b}}{2\varepsilon}. \label{eqn:bias_c*}
\end{align}

Then, we calculate the second moment for $c^*_b$ in a similar manner
\begin{align}
\mathbb{E}_{\mathcal{M}|\boldsymbol{\theta}}\left(c_b^{*2}\right) &= \int_{-\infty}^{\infty}\left(\max\{0, c_b+x\}\right)^2\frac{\varepsilon}{2}e^{-\varepsilon|x|}dx = 0 + \int_{-c_b}^{\infty}(c_b+x)^2\frac{\varepsilon}{2}e^{-\varepsilon|x|}dx\notag\\
    &= \int_{-c_b}^{0}(c^2_b+x^2 + 2c_bx)\frac{\varepsilon}{2}e^{\varepsilon x}dx + \int_{0}^{\infty}(c^2_b+x^2 + 2c_bx)\frac{\varepsilon}{2}e^{-\varepsilon x}dx\notag\\
    &= \frac{\varepsilon c_b^2}{2}\int_{-c_b}^{0}e^{\varepsilon x}dx + \frac{\varepsilon}{2}\int_{-c_b}^{0}x^2e^{\varepsilon x}dx + \varepsilon c_b\int_{-c_b}^0 x e^{\varepsilon x}dx\notag\\
    &\;\;\;\;+\frac{\varepsilon c_b^2}{2}\int_{0}^{\infty}e^{-\varepsilon x}dx + \frac{\varepsilon}{2}\int_{0}^{\infty}x^2e^{-\varepsilon x}dx + \varepsilon c_b\int_{0}^{\infty} x e^{-\varepsilon x}dx\notag\\
    &=\frac{c_b^2}{2}\left(1\!-\!e^{-\varepsilon c_b}\right)\! - \!\frac{c_b^2}{2}e^{-\varepsilon c_b} \!- \!\left(\!\frac{c_b}{\varepsilon}e^{- \varepsilon c_b} \!-\! \frac{1}{\varepsilon^2}(1\!-\!e^{-\varepsilon c_b})\!\right)\!+ \frac{c_b^2}{2} +\frac{1}{\varepsilon^2} +\frac{c_b}{\varepsilon}\notag\\
    & \;\;\;\;+ c_b\left(c_be^{-\varepsilon c_b} - \frac{1}{\varepsilon}(1-e^{-\varepsilon c_b})\right) \notag\\
    &= c^2_b + \frac{2}{\varepsilon^2} - \frac{e^{-\varepsilon c_b}}{\varepsilon^2}. \label{eqn:2ndmoment_c*}
\end{align}
Given Eqs~\eqref{eqn:bias_c*} and \eqref{eqn:2ndmoment_c*}, we are ready to show the MSE consistency of sanitized bin count $c^*_b$ for $c_b$ over sanitization, that is, $\mathbb{E}_{\mathcal{M}|\boldsymbol{\theta}}(c^*_b - c_b)^2\rightarrow 0$.
For $\forall b\in \{0, \ldots, B'+1\}$, 
\begin{align}
\mathbb{E}_{\mathcal{M}|\boldsymbol{\theta}}\left((c_b^* - c_b)^2\right) 
    &= \mathbb{E}_{\mathcal{M}|\boldsymbol{\theta}}\left(c_b^{*2}\right) -2c_b \mathbb{E}_{\mathcal{M}|\boldsymbol{\theta}}\left(c_b^* \right) + c_b^2\notag\\
    &=c^2_b + \frac{2}{\varepsilon^2} - \frac{e^{-\varepsilon c_b}}{\varepsilon^2} - 2 c_b\left(c_b + \frac{e^{-\varepsilon c_b}}{2\varepsilon}\right) + c_b^2\notag\\
    &= \frac{2}{\varepsilon^2} - \frac{e^{-\varepsilon c_b}}{\varepsilon^2} - \frac{c_be^{-\varepsilon c_b}}{\varepsilon} = \mathcal{O}\left(\varepsilon^{-2}\right).\label{eqn:MSE_c*}
\end{align}
In addition, we derive the variance for $c^*_b$  for later use,
\begin{align}
\mathbb{V}_{\mathcal{M}|\boldsymbol{\theta}}\left(c_b^*\right) &= \mathbb{E}_{\mathcal{M}|\boldsymbol{\theta}}\left(c_b^{2*}\right) -\left(\mathbb{E}_{\mathcal{M}|\boldsymbol{\theta}}\left(c_b^*\right)\right)^2= c^2_b + \frac{2}{\varepsilon^2} - \frac{e^{-\varepsilon c_b}}{\varepsilon^2} - \left(c_b + \frac{e^{-\varepsilon c_b}}{2\varepsilon}\right)^2 \notag\\
&= \frac{2-e^{-\varepsilon c_b}-e^{-2\varepsilon c_b}/4}{\varepsilon^2} -\frac{c_b e^{-\varepsilon c_b}}{\varepsilon}. \label{eqn:var_c*}
\end{align}
Let $g(b) \!=\! |\!\sum_{i\leq b}\!c_i - qm|$ and $g^*(b) \!=\! |\sum_{i\leq b} \!c^*_i - qm^*|$, where $m^* \!=\! \sum_{b=0}^{B'+1}c^*_b$ in step~\ref{step:normalization*} of Algorithm \ref{alg:PRECISE}. Let $\hat{b}$ be the true index of bin for $\theta_{([qm])}$, i.e., $\theta_{([qm])} \!\in\! I_{\hat{b}} $, where $\hat{b} = \min\{{\arg\min}_{b\in \{0, \ldots, B'+1\}}g(b)\}.$ We prove the bin index identification error $\mathbb{E}_{\mathcal{M}|\boldsymbol{\theta}}\left(b^* - b\right)^2\!\rightarrow\! 0$, where  \[b^*\!=\min\{{\arg\min}_{b\in \{0, \ldots, B'+1\}}g^*(b)\},\] by first showing the squared error between the objective functions, from which $b$ ad $b^*$ are solved, converges to $0$ as $\varepsilon \rightarrow \infty$; that is
\vspace{-3pt}
\begin{equation}\label{eqn:MSE_b*}
\mathbb{E}_{\mathcal{M}|\boldsymbol{\theta}}\left(g^*(b) - g(b)\right)^2 = \mathbb{E}_{\mathcal{M}|\boldsymbol{\theta}}\left(\Big|\sum_{i\leq b} c^*_i - qm^*\Big| - \Big|\sum_{i\leq b} c_i - qm\Big|\right)^2\!\!\rightarrow 0.
\end{equation}
Per definition of $m^*$, $m^* = \sum_{i\leq b}c_i^* + \sum_{i\geq b+1}c_i^*$ for $\forall b\in \{0, \ldots, B'+1\}$, then
\begin{align}
&\mathbb{E}_{\mathcal{M}|\boldsymbol{\theta}}\left(\sum_{i\leq b} c^*_i - qm^*\right)^2
    = \mathbb{E}_{\mathcal{M}|\boldsymbol{\theta}}\left((1-q)\sum_{i\leq b} c^*_i - q\sum_{i\geq b+1} c^*_i \right)^2\notag\\
    =\;& (1-q)^2 \mathbb{E}_{\mathcal{M}|\boldsymbol{\theta}}\!\left(\sum_{i\leq b} c^*_i\!\!\right)^2 \!\!\! + q^2 \mathbb{E}_{\mathcal{M}|\boldsymbol{\theta}}\!\left(\!\sum_{i\geq b+1} \!\!c^*_i \!\!\right)^2 \!\!\!- 2q(1\!-\!q)\mathbb{E}_{\mathcal{M}|\boldsymbol{\theta}}\!\left(\!\sum_{i\leq b} c^*_i\!\!\right)\!\mathbb{E}_{\mathcal{M}|\boldsymbol{\theta}}\!\left(\!\sum_{i\geq b+1}\!\! c^*_i\!\!\right)\label{eqn:MSEb*_1}\\
    =\;& (1-q)^2\!\!\left(\!\mathbb{V}_{\mathcal{M}|\boldsymbol{\theta}}\!\left(\!\sum_{i\leq b} c^*_i\!\right)\! + \!\left(\!\sum_{i\leq b}\mathbb{E}_{\mathcal{M}|\boldsymbol{\theta}}(c_i^*)\!\right)^2\!\right) \!\!+ q^2\!\!\left(\!\mathbb{V}_{\mathcal{M}|\boldsymbol{\theta}}\!\left(\sum_{i\geq b+1} \!c^*_i\!\right)\!\! + \!\left(\sum_{i\geq b+1}\!\!\mathbb{E}_{\mathcal{M}|\boldsymbol{\theta}}(c_i^*)\!\right)^2\!\right)\notag\\
    & -2q(1-q)\left(\sum_{i\leq b}\left(c_i + \frac{e^{-\varepsilon c_i}}{2\varepsilon}\right)\right)\left(\sum_{i\geq b+1}\left(c_i + \frac{e^{-\varepsilon c_i}}{2\varepsilon}\right)\right)\label{eqn:MSEb*_2}\\
    =\;&(1-q)^2\!\left(\!\sum_{i\leq b}\left(\frac{2-e^{-\varepsilon c_i}-e^{-2\varepsilon c_i}/4}{\varepsilon^2} -\frac{c_i e^{-\varepsilon c_i}}{\varepsilon}\right) + \!\left(\!\sum_{i\leq b}\left(c_i + \frac{e^{-\varepsilon c_i}}{2\varepsilon}\right)\!\right)^2\right)\notag\\
    & + q^2\!\left(\!\sum_{i\geq b+1}\left(\frac{2-e^{-\varepsilon c_i}-e^{-2\varepsilon c_i}/4}{\varepsilon^2} -\frac{c_i e^{-\varepsilon c_i}}{\varepsilon}\right) + \!\left(\sum_{i\geq b+1}\left(c_i + \frac{e^{-\varepsilon c_i}}{2\varepsilon}\right)\!\right)^2\right)\notag\\
    & -2q(1-q)\left(\sum_{i\leq b}\left(c_i + \frac{e^{-\varepsilon c_i}}{2\varepsilon}\right)\right)\left(\sum_{i\geq b+1}\left(c_i + \frac{e^{-\varepsilon c_i}}{2\varepsilon}\right)\right)\label{eqn:MSEb*_3}\\
    =&\left((1-q)\sum_{i\leq b} c_i - q\sum_{i\geq b+1} c_i \right)^2  + \mathcal{O}(\varepsilon^{-2}) = \left(\sum_{i\leq b} c_i - qm\right)^2  + \mathcal{O}(\varepsilon^{-2}).\label{eqn:MSEb*_res}
\end{align}
Eq.~\eqref{eqn:MSEb*_1} holds since noises are drawn independently from the DP mechanism for sanitizing each bin count in the histogram (e.g. $\mbox{Lap}(1/\varepsilon)$); and Eqs~\eqref{eqn:MSEb*_2} and \eqref{eqn:MSEb*_3} follow after plugging in Eqs~\eqref{eqn:bias_c*} and \eqref{eqn:var_c*}. 

Based on Eq.~\eqref{eqn:MSEb*_res}, expanding the LHS of Eq.~\eqref{eqn:MSE_b*} and leveraging the fact that $|X|\geq X$, and $\mathbb{E}(|X|)\geq\mathbb{E}(X)$, we have
\vspace{-2pt}
\begin{align}
&\mathbb{E}_{\mathcal{M}|\boldsymbol{\theta}}\left(\Big|\sum_{i\leq b} c^*_i - qm^*\Big| - \Big|\sum_{i\leq b} c_i - qm\Big|\right)^2 \notag\\
    = \;& \mathbb{E}_{\mathcal{M}|\boldsymbol{\theta}}\!\left(\!\sum_{i\leq b} c^*_i - qm^*\!\right)^2 \!+ \left(\sum_{i\leq b} c_i \!-\! qm\!\right)^2 \!\!-\! 2\underbrace{\Big|\sum_{i\leq b} c_i \!- qm\Big|}_{\geq \left(\sum_{i\leq b} c_i \!- qm\right)}\cdot\underbrace{\mathbb{E}_{\mathcal{M}|\boldsymbol{\theta}}\left(\Big|\sum_{i\leq b} c^*_i - qm^*\Big|\!\right)}_{\geq \mathbb{E}_{\mathcal{M}|\boldsymbol{\theta}}\left(\sum_{i\leq b} c^*_i - qm^*\!\right)}\notag\\
    \leq\; & 2\left(\sum_{i\leq b} c_i - qm\right)^2  + \mathcal{O}(\varepsilon^{-2}) \!-\! 2\left(\!\sum_{i\leq b} c_i \!-\! qm\!\right)\!\!\cdot\mathbb{E}_{\mathcal{M}|\boldsymbol{\theta}}\left((1-q)\sum_{i\leq b} c^*_i - q\sum_{i\geq b+1} c^*_i \right)\notag\\
    =\; & 2\!\left(\!\sum_{i\leq b}\!c_i \!-\! qm\!\!\right)^2 \!\!\!- 2\!\left(\!\sum_{i\leq b}\!c_i \!-\! qm\!\right)\!\!\cdot\!\!\left(\!\!(1-q)\!\!\left(\!\sum_{i\leq b}\!\left(\!c_i \!+\! \frac{e^{-\varepsilon c_i}}{2\varepsilon}\!\right)\!\right) \!- \!q \!\!\left(\!\sum_{i\geq b+1}\!\!\left(\!c_i \!+\! \frac{e^{-\varepsilon c_i}}{2\varepsilon}\!\right)\!\right)\!\right)\!+ \mathcal{O}(\varepsilon^{-2})\notag\\
    =\; & 2\left(\sum_{i\leq b} c_i \!-\! qm\!\right)^2  \!-\! 2\left(\!\sum_{i\leq b} c_i \!-\! qm\!\right)\!\!\cdot\! \!\left(\sum_{i\leq b} c_i\! -\! qm + \mathcal{O}\left(\frac{B'e^{-m\varepsilon/B'}}{\varepsilon}\!\right)\!\right)\!+\! \mathcal{O}(\varepsilon^{-2})\notag\\
    =\; & \mathcal{O}\left(\frac{B'me^{-m\varepsilon/B'}}{\varepsilon}+\varepsilon^{-2}\!\right) = \mathcal{O}\left(\frac{2G(n)m^2(u-l)e^{-\frac{\varepsilon}{2G(n)(u-l)}}}{\varepsilon}+\varepsilon^{-2}\!\right)\\
    =\; & \mathcal{O}\left(\varepsilon^{-2} + \frac{m^2e^{-\sqrt{n}\varepsilon}}{\sqrt{n}\varepsilon}\right) = \mathcal{O}\left(\varepsilon^{-2} + \frac{e^{-\sqrt{n}\varepsilon}}{h^2\sqrt{n}\varepsilon}\right).\label{eqn:MSEb*_objective}
\end{align}
The last equality in Eq.~\eqref{eqn:MSEb*_objective} holds because of the following: given bin width $h$, we require $m=(2G(n)h)^{-1}$ for DP guarantees in Eq.~\eqref{eqn:DeltaH1} by setting $\Delta_H = 1$. Denote the ``local'' bounds for the histogram $H$ after bin collapsing by $(l,u)$, we can conclude that $B' \!=\! (u-l)/h \!=\! 2G(n)m(u-l)$. Per the Bernstein-von Mises theorem, as $n \rightarrow \infty$, 
\[||f(\theta |\x)-{\mathcal {N}}({\hat {\theta }}_{n},n^{-1}I^{-1}_{\theta_0})||_{\mathrm {TVD} }= \mathcal{O}(n^{-1/2})\]
where $\hat{\theta}_n$ is the MAP and $ I_{\theta_0}$ is the Fisher information. Therefore, $u-l \asymp n^{-1/2}$, and $m/B' \asymp \sqrt{n}$.

Also, Eq.~\eqref{eqn:MSEb*_objective} captures the effects of both histogram bin granularity $h$ and privacy loss $\varepsilon$ on the accuracy of identifying the correct bin index that contains the target quantile. To ensure the second term in Eq.~\eqref{eqn:MSEb*_objective} converges to 0 as $n$ or $\varepsilon \rightarrow \infty$, a necessary upper bound on the number of posterior samples $m$ is 
\[m = o(e^{\varepsilon\sqrt{n}/2}n^{1/4}\varepsilon^{1/2}).\]
Under this condition, we can establish the MSE consistency of $|\sum_{i\leq b} c^*_i \!-\! qm^*|$ for $|\sum_{i\leq b} c_i \!-\! qm|$ for $\forall b\!\in \!\{1, \ldots, B'\}$  at the rate of Eq.~\eqref{eqn:MSEb*_objective}. 

Additionally, we show that $g(b)$ is Lipschitz continuous as follows. $\forall b, b' \!\in\!\{0, \ldots, B'+1\}$, without loss of generality, assume $b>b'$
\vspace{-5pt}
\begin{align}
   & |g(b') - g(b)| = \Bigg|\Big|\!\sum_{i\leq b} c_i \!-\! qm\Big| - \Big|\!\sum_{i\leq b'} c_i \!-\! qm\Big|\Bigg|\notag\\
    &\leq \Bigg|\!\left(\!\sum_{i\leq b} c_i \!-\! qm\!\right)\! -\! \left(\!\sum_{i\leq b'} c_i \!-\! qm\!\right)\!\Bigg|= \Big|\sum_{i=b'+1}^{b}c_i\Big| \leq (b-b')\max_{i}|c_i|\leq |b-b'|\cdot m.
\end{align}
Combined with the uniqueness of minimizers $b^*$ and $\hat{b}$, and condition on the convergence of Eq.~\eqref{eqn:MSEb*_objective}, we can conclude the DP-induced bin index mismatch error $\mathbb{E}_{\mathcal{M}|\boldsymbol{\theta}}(b^* - \hat{b})^2\rightarrow 0$  at the rate of at least Eq.~\eqref{eqn:MSEb*_objective}. 

Per step \ref{step:uniform} of Algorithm \ref{alg:PRECISE}, the privatized quantile estimate $\theta^*_{([qm])}\sim \mbox{Unif}(I_{b^*})$, where $I_{b^*} = [L+(b^*-1)h, L+b^* h]$ and $h = [2G(n)m]^{-1}$, then
\begin{align}
    &\theta^*_{([qm])}  - \left(L+(b^*-1)h\right) \sim \mbox{Unif}[0,h]; \;\; \theta_{([qm])}  - \left(L+(\hat{b}-1)h\right) \sim \mbox{Unif}[0,h].\notag\\
    &\text{Let } V_1, V_2 \sim \mbox{Unif}[0,h], \text{independent of }b^* \text{ and } \hat{b}\Rightarrow \begin{cases}
        \theta^*_{([qm])}= \left(L+(b^*-1)h\right) +V_1\\ \theta_{([qm])}  = \left(L+(\hat{b}-1)h\right)+V_2\\
    \end{cases}\!\!\!\!\!\!;\notag\\
    \;& \mathbb{E}_{\mathcal{M}|\boldsymbol{\theta}}\left(\theta^*_{([qm])}-\theta_{([qm])}\right)^2 = \mathbb{E}_{\mathcal{M}|\boldsymbol{\theta}}\left((b^* - \hat{b})h + (V_1 - V_2)\right)^2\notag\\
    =\;& h^2\mathbb{E}_{\mathcal{M}|\boldsymbol{\theta}}\left(b^* - \hat{b}\right)^2 + \mathbb{E}_{\mathcal{M}|\boldsymbol{\theta}}\left(V_1 - V_2\right)^2 + 2h\mathbb{E}_{\mathcal{M}|\boldsymbol{\theta}}\left((b^* - \hat{b})(V_1 - V_2)\right)\notag\\
    =\;& h^2\mathbb{E}_{\mathcal{M}|\boldsymbol{\theta}}\left(b^* - \hat{b}\right)^2 + \frac{2h^2}{3} - 2\cdot\frac{h}{2}\cdot\frac{h}{2}\notag \\
    =\;& h^2\mathbb{E}_{\mathcal{M}|\boldsymbol{\theta}}\left(b^* - \hat{b}\right)^2 + \underbrace{\frac{h^2}{6}}\notag\\
    & \qquad \text{discretization error and uniform sampling within the identified bin}\notag\\
    = \;&\mathcal{O}\left(m^{-2}+m^{-2}\left(\varepsilon^{-2} + \frac{m^2e^{-\sqrt{n}\varepsilon}}{\sqrt{n}\varepsilon}\right)\right).\label{eqn:CS_2}
\end{align}
Plugging Eqs~\eqref{eqn:CS_1} and \eqref{eqn:CS_2} into Eq.~\eqref{eqn:CSinequality}, we have
\begin{align}
&\mathbb{E}_{\boldsymbol{\theta}}\mathbb{E}_{\mathcal{M}|\boldsymbol{\theta}}\big(\theta^*_{(q)} - F^{-1}_{\theta|\mathbf{x}}(q)\big)^2 =\mathbb{E}_{\boldsymbol{\theta}}\mathbb{E}_{\mathcal{M}|\boldsymbol{\theta}}\left(\theta^*_{([qm])} - F^{-1}_{\theta|\mathbf{x}}(q)\right)^2 \notag\\
\leq& \underbrace{\mathcal{O}\left(m^{-2}\right)}_{T_0}+\underbrace{\mathcal{O}\left(\frac{1}{\sqrt{n}}e^{-\varepsilon\sqrt{n}/2}\right)}_{T_1} + \underbrace{\frac{q(1-q)}{m}\!\cdot\! \left(\!f_{\theta|\mathbf{x}}\!\left(\!F^{-1}_{\theta|\mathbf{x}}(q)\right)\right)^{-2}}_{T_2}.\label{eqn:thm2_res}
\end{align}
There are three types of errors in Eq.~\eqref{eqn:thm2_res}: the histogram discretization error $T_0$, the DP-induced error term \( T_1 \), and the sampling error term \( T_2 \). The dominant term among these depends on the posterior sample size $m$. The following conditions characterize the regimes where each term dominates:
\begin{align}
    &\text{If $T_2$ dominates $T_1$:} \quad \frac{1}{\sqrt{n}} e^{-\varepsilon\sqrt{n}/2}\lesssim m^{-1}n^{-1}\quad  \Rightarrow \quad m = \mathcal{O}\left(e^{\varepsilon\sqrt{n}/2} \cdot n^{-1/2}\right).\\
    &\text{If $T_2$ dominates $T_0$:} \quad m^{-2}\lesssim m^{-1}n^{-1}\quad  \Rightarrow \quad m = \Omega(n).\\
    &\text{If $T_1$ dominates $T_0$:} \quad \frac{1}{\sqrt{n}} e^{-\varepsilon\sqrt{n}/2}\lesssim m^{-2}\quad  \Rightarrow \quad m = \mathcal{O}\left(e^{\varepsilon\sqrt{n}/4} \cdot n^{1/4}\right).
\end{align}
Additionally, the validity of Eq.~\eqref{eqn:thm2_res} implicitly relies on the convergence of an intermediate DP-dependent result in Eq.~\eqref{eqn:MSEb*_objective}, which requires $m$ to satisfy $m = o\left(e^{\varepsilon\sqrt{n}/2} n^{1/4} \varepsilon^{1/2} \right).$
Together, these four constraints divide the valid range of $m$ into three asymptotic regimes, each dominated by a different error source:
\begin{align*}
&\mathbb{E}_{\boldsymbol{\theta}}\mathbb{E}_{\mathcal{M}|\boldsymbol{\theta}}\big(\theta^*_{(q)} - F^{-1}_{\theta|\mathbf{x}}(q)\big)^2\notag\\
=\: & \begin{cases}
    T_0 = \mathcal{O}(m^{-2})      &\text{if } m = o(n)\\
    T_2 = \mathcal{O}(m^{-1}n^{-1}) &\text{if } m \in \left(\Omega(n),  \mathcal{O}(e^{\varepsilon\sqrt{n}/2} \cdot n^{-1/2})\right)\\
    T_1 = \mathcal{O}\left(\frac{1}{\sqrt{n}}e^{-\varepsilon\sqrt{n}/2}\right) &\text{if } m \in \left(\Omega(e^{\varepsilon\sqrt{n}/2} \cdot n^{-1/2}), o(e^{\varepsilon\sqrt{n}/2}n^{1/4}\varepsilon^{1/2})\right)\\
\end{cases}
\end{align*}
\end{proof}

\subsection{Proof of Proposition~\ref{prop:coverage}}
\begin{proof}
Data $\mathbf{x}\sim f(X|\theta_0)$. Per the definition of $F^{-1}_{\theta|\mathbf{x}}(q)\! =\! \inf \{\theta\!:\! F(\theta|\mathbf{x})\!\geq\! q\}$, where $0\!<\!q\!<\!1$, 
\begin{align}
    &\Pr\left(F^{-1}_{\theta|\mathbf{x}}\left(\frac{\alpha}{2}\right)\leq \theta_0 \leq F^{-1}_{\theta|\mathbf{x}}\left(1-\frac{\alpha}{2}\right)\Big| \mathbf{x}\right)\notag\\
    =\; & \Pr\left(\theta_0 \leq F^{-1}_{\theta|\mathbf{x}}\left(1-\frac{\alpha}{2}\right)\Big| \mathbf{x}\right)  - \Pr\left(\theta_0 \leq F^{-1}_{\theta|\mathbf{x}}\left(\frac{\alpha}{2}\right)\Big| \mathbf{x}\right)\notag\\
    =\; & F_{\theta|\mathbf{x}}\left(F^{-1}_{\theta|\mathbf{x}}\left(1-\frac{\alpha}{2}\right)\right) - F_{\theta|\mathbf{x}}\left(F^{-1}_{\theta|\mathbf{x}}\left(\frac{\alpha}{2}\right)\right)= 1-\alpha.
\end{align}
Following  Eq.~\eqref{eqn:thm2_res}, as $n \rightarrow \infty$ or $\varepsilon \rightarrow \infty$
\begin{align}
    \Pr\left(\theta^*_{([\frac{\alpha}{2}m])}\leq \theta_0 \leq \theta^*_{([(1-\frac{\alpha}{2})m])}\Big| \mathbf{x}\right)\rightarrow \Pr\left(F^{-1}_{\theta|\mathbf{x}}\left(\frac{\alpha}{2}\right)\leq \theta_0 \leq F^{-1}_{\theta|\mathbf{x}}\left(1-\frac{\alpha}{2}\right)\Big| \mathbf{x}\right)=1-\alpha.
\end{align}
\end{proof}
\subsection{PrivateQuantile and its MSE consistency}\label{ape:PQ}

\begin{algorithm}[H]
\caption{PrivateQuantile of $\varepsilon$-DP \citep{smith2011privacy}}\label{alg:PQ}
\SetAlgoLined
\SetKwInOut{Input}{input}
\SetKwInOut{Output}{output}
\Input{data $\x \!=\! \{x_i\}_{i=1}^n$, privacy loss parameter $\varepsilon$,  quantile $q \!\in\! (0,1)$, global bounds $(L_{\x},U_{\x})$ for $\x$.}
\Output{ PP $q^{th}$ quantile estimate $x^*_{([qn])}$ of $\varepsilon$-DP.}
Sort $\x$ in ascending order $x_{(1)},\ldots,x_{(n)}$\;
Replace $x_i \!<\! L_{\x}$ with $L$ and $x_i \!>\! U_{\x}$ with $U_{\x}$\;
For $i=0,\ldots,n$, define $y_{i}\triangleq\left(x_{(i+1)}-x_{(i)}\right)\exp(-\varepsilon|i-q n|/2)$\;
Sample an integer $i^* \!\in\! \{0,\ldots,n\}$ with probability $y_i/\left(\sum_{i=0}^n y_i\right)$\;\label{step:i^*}
Draw $x^*_{([qn])}$ from Unif$\left(x_{(i^*)},x_{(i^*+1)}\right)$. 
\end{algorithm}

\begin{theorem}[MSE consistency of PrivateQuantile]  \label{thm:PQ}
Denote the sensitive dataset by $\x=\{x_i\}_{i=1}^n$. Let $x_{([qn])}^*$ be the private $q$-th sample quantile of $\x$ from $\mathcal{M}$: PrivateQuantile of $\varepsilon$-DP \citep{smith2011privacy}. Under Assumption \ref{con:PQ}, and assume $\exists \text{ constant } C \geq 0$ such that global bounds $(L_{\x}, U_{\x})$ for $\x$ satisfy $\lim_{n\rightarrow\infty}\left(U_{\x}-x_{(n)}\right) =\lim_{n\rightarrow\infty}\left(x_{(1)}-L_{\x}\right) =C$, then
\begin{align} \label{Eqn:MSE_PQ}
&\mathbb{E}_{\x}\mathbb{E}_{\mathcal{M}|\x}\left(x_{([qn])}^*-F_{\x}^{-1}(q)\right)^2= \mathcal{O}(n^{-1})+O\left(e^{-\mathcal{O}(n\varepsilon)}n^{-3/2}\right).
\end{align}
\end{theorem}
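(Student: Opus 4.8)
\textbf{Proof proposal for Theorem~\ref{thm:PQ} (MSE consistency of PrivateQuantile).}

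The plan is to mimic the decomposition strategy used for PRECISE in Appendix~\ref{proof:PRECISE}, but now for the simpler problem of a sample quantile of the raw data. I would first write
\[
\mathbb{E}_{\x}\mathbb{E}_{\mathcal{M}|\x}\!\left(x^*_{([qn])}-F^{-1}_{\x}(q)\right)^2
\;\le\; \mathbb{E}_{\x}\mathbb{E}_{\mathcal{M}|\x}\!\left(x^*_{([qn])}-x_{([qn])}\right)^2
+\mathbb{E}_{\x}\!\left(x_{([qn])}-F^{-1}_{\x}(q)\right)^2
+2\sqrt{(\cdots)(\cdots)},
\]
via Cauchy--Schwarz, exactly as in Eq.~\eqref{eqn:CSinequality}. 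The second term is the classical sampling error of an empirical quantile: by the asymptotic normality of sample quantiles (the same Walker~\citep{walker1968note} result cited in the PRECISE proof), $\sqrt n\,(x_{([qn])}-F^{-1}_{\x}(q))\xrightarrow{d}\mathcal{N}(0,\,q(1-q)/f_{\x}(F^{-1}_{\x}(q))^2)$, so under the continuity/positive-density regularity condition in Assumption~\ref{con:PQ} this contributes $\mathcal{O}(n^{-1})$. This gives the first term in Eq.~\eqref{Eqn:MSE_PQ}. The bulk of the work, and the source of the second term, is bounding $\mathbb{E}_{\mathcal{M}|\x}(x^*_{([qn])}-x_{([qn])})^2$, the error introduced purely by the exponential mechanism in Algorithm~\ref{alg:PQ}.

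For the mechanism error I would proceed in two stages. \emph{Stage one:} control the index selection error $|i^*-k|$, where $k$ is the true rank of the quantile ($k=[qn]$ up to rounding) and $i^*$ is the sampled rank in line~\ref{step:i^*}. The selection probability of interval $i$ is proportional to $y_i=(x_{(i+1)}-x_{(i)})\exp(-\varepsilon|i-qn|/2)$. Upper-bounding each gap by $U_{\x}-L_{\x}$ and lower-bounding the normalizing constant by the single term $y_k\ge s\cdot 1$ (with $s=\min_i(x_{(i+1)}-x_{(i)})$, or more carefully by the geometric-series sum of the exponential weights as is done in Theorem~\ref{thm:utility_PP}), I get a geometric-type tail bound $\Pr(|i^*-k|\ge t)\lesssim \frac{(U_{\x}-L_{\x})}{s}\,e^{-\varepsilon t/4}$ or similar. \emph{Stage two:} conditional on $i^*$, the output is uniform on $(x_{(i^*)},x_{(i^*+1)})$, an interval of length at most $U_{\x}-L_{\x}$, so $|x^*_{([qn])}-x_{([qn])}|$ is at most (roughly) $\sum_{j}$ of the $|i^*-k|$ intervening gaps plus one gap-width, each gap being $\mathcal{O}(n^{-1})$ in the bulk (spacings of order statistics from a density bounded away from zero scale like $1/(nf)$; this is precisely where the regularity condition that controls the density near the quantile enters, and where the assumption $U_{\x}-x_{(n)}\to C$, $x_{(1)}-L_{\x}\to C$ keeps the extreme gaps from blowing up the $\ell_1$ argument). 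Combining: $\mathbb{E}_{\mathcal{M}|\x}(x^*_{([qn])}-x_{([qn])})^2\lesssim (\text{typical gap})^2\cdot\mathbb{E}|i^*-k|^2 \lesssim n^{-2}\cdot\sum_t t\,e^{-\varepsilon t/4}=\mathcal{O}(n^{-2})$ after the geometric sum, and then taking $\mathbb{E}_{\x}$ and folding in the $\frac{U_{\x}-L_{\x}}{s}$ prefactor — which by the Bernstein--von Mises / order-statistic heuristic is $\mathcal{O}(\sqrt n)$ since $s=\mathcal{O}(n^{-1})$ while $U_{\x}-L_{\x}=\mathcal{O}(1)$ (or shrinking) — yields the $e^{-\mathcal{O}(n\varepsilon)}n^{-3/2}$ bound. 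The Cauchy--Schwarz cross term is then dominated by the geometric mean of $\mathcal{O}(n^{-1})$ and the exponentially small piece, hence absorbed.

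The main obstacle I anticipate is making the gap-size control rigorous and uniform enough: the bound $|x^*_{([qn])}-x_{([qn])}|\le (\text{sum of }|i^*-k|\text{ spacings})$ requires that \emph{all} the order-statistic spacings between ranks $k$ and $i^*$ are simultaneously $\mathcal{O}(n^{-1})$ with high probability, which needs a concentration argument for spacings of a sample from a density bounded below on a neighborhood of $F^{-1}_{\x}(q)$ — and one must handle the low-probability event that $i^*$ lands far out in the tails, where spacings are larger and where the $U_{\x}-L_{\x}$ crude bound must carry the day. This is exactly the role of the exponential decay $e^{-\varepsilon|i-k|/2}$: it kills those tail events fast enough that even a crude $\mathcal{O}(U_{\x}-L_{\x})$ deviation there contributes only $e^{-\mathcal{O}(n\varepsilon)}$. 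I would isolate this as a lemma (a tail bound on $|i^*-k|$ together with a spacing-concentration bound) and then assemble the pieces; the assumption on $(L_{\x},U_{\x})$ and Assumption~\ref{con:PQ} are precisely what make that lemma go through.
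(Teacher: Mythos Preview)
Your Cauchy--Schwarz decomposition and the use of Walker's asymptotic normality for the sampling-error piece match the paper exactly. The divergence --- and the gap --- is in how you handle the mechanism-error term $\mathbb{E}_{\mathcal{M}|\x}(x^*_{([qn])}-x_{([qn])})^2$.

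The paper does \emph{not} go through a tail bound on $|i^*-k|$. It argues directly that $\Pr(i^*=[qn])\to 1$ by dissecting the normalizing constant $\sum_i(x_{(i+1)}-x_{(i)})e^{-\varepsilon|i-[qn]|/2}$: the two boundary gaps $[L_{\x},x_{(1)}]$ and $[x_{(n)},U_{\x}]$ have width $\to C$ but sit at index distance $\Theta(n)$ from $[qn]$, hence carry weight $e^{-\Theta(n\varepsilon)}$ --- this is the \emph{sole} source of the $e^{-\mathcal{O}(n\varepsilon)}$ factor in the statement --- while every interior gap has width $\mathcal{O}(n^{-1})$ via a dedicated spacing lemma (Lemma~\ref{lemma:asym_gap}: $n(x_{([qn]+1)}-x_{([qn])})f_{\x}(F_{\x}^{-1}(q))\xrightarrow{d}\exp(1)$). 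Conditional on $i^*=[qn]$, $x^*$ is uniform on an interval of width $\mathcal{O}(n^{-1})$, and the resulting $\mathcal{O}(n^{-2})$ squared error combined with the Cauchy--Schwarz cross term yields the $n^{-3/2}$.

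Your Stage-one/Stage-two route has concrete errors. First, Bernstein--von Mises concerns posterior concentration and is irrelevant here --- this is raw-data order statistics --- so with $U_{\x}-L_{\x}=\mathcal{O}(1)$ and $s=\mathcal{O}(n^{-1})$ (or smaller), the prefactor $(U_{\x}-L_{\x})/s$ is at least $\mathcal{O}(n)$, not $\mathcal{O}(\sqrt n)$. Second, your main computation $n^{-2}\sum_t t\,e^{-\varepsilon t/4}$ evaluates to $\mathcal{O}(n^{-2}\varepsilon^{-2})$, not anything exponentially small in $n\varepsilon$; the $n$ never enters the exponent in that line, so the $e^{-\mathcal{O}(n\varepsilon)}$ you assert at the end is underived. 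You do correctly identify, in your obstacle paragraph, that the far-tail indices at distance $\Theta(n)$ are what generate $e^{-\mathcal{O}(n\varepsilon)}$ --- but that observation needs to be the main argument, not an afterthought: split into ``$i^*$ interior'' (all gaps $\mathcal{O}(n^{-1})$, use the spacing lemma) versus ``$i^*\in\{0,n\}$'' (width $\mathcal{O}(1)$, probability $e^{-\Theta(n\varepsilon)}$ times a polynomial prefactor). That split is precisely the paper's move.
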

The detailed proof is provided below. Briefly, Eq.~\eqref{Eqn:MSE_PQ} suggests that the MSE between $x_{([qn])}^*$ and $F_{\x}^{-1}(q)$ can be decomposed into two components: (1) the MSE between $x_{([qn])}^*$ and $x_{([qn])}$ introduced by DP sanitization noise that converges at rate $\mathcal{O}(e^{-\mathcal{O}(n\varepsilon)n^{-3/2}})$ and (2) the MSE between $x_{([qn])}$ and $F_{\x}^{-1}(q)$ due to the sampling error that converges at rate $\mathcal{O}(n^{-1})$). The faster convergence rate of the former implies that the sampling error, rather than the sanitization error, is the limiting factor in the convergence of $x_{([qn])}^*$ to $F_{\x}^{-1}(q)$.

\begin{con}\label{con:PQ}
    Let $x_{(1)} \leq x_{(2)} \leq \ldots \leq x_{(n)}$ be the order statistics of a random sample $x_1, \ldots, x_n$ from a continuous distribution $f_{\x}$, and  $F_{\x}^{-1}(q)\!=\!\inf \{x: F_{\x}(x) \geq q\}$ be the unique quantile at $q$, where $0\!<\!q\!<\!1$ and $F_{\x}$ is the CDF. Assume $f_{\x}$ is positive, finite, and continuous at $F_{\x}^{-1}(q)$.
\end{con}

\begin{lemma}[Asymptotic distribution of the spacing between two consecutive order statistics]\label{lemma:asym_gap}
Let $\x=(x_1,\ldots, x_n)$ be a sample from a continuous distribution $f_{\x}$, and $x_{([qn])}$ be the sample quantile at $q$ and $x_{([qn]+1)}$ be the value immediately succeeding  $x_{([qn])}$. Given the regularity conditions in Assumption \ref{con:PQ}, 
\begin{equation}\label{eqn:asym_gap}
 n\cdot(x_{([qn]+1)}-x_{([qn])})\cdot f_{\x}(F_{\x}^{-1}(q))\stackrel{d}{\longrightarrow}\mbox{exp}(1) \mbox{ as } n\rightarrow \infty.
 \end{equation}
\end{lemma}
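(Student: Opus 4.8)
\textbf{Proof proposal for Lemma~\ref{lemma:asym_gap}.}

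The plan is to reduce the problem to a statement about uniform order statistics via the probability integral transform, and then use the well-known exponential spacings representation of uniform order statistics. First I would set $U_i = F_{\x}(x_i)$, so that $U_1,\ldots,U_n$ are i.i.d.\ $\mathrm{Uniform}(0,1)$ and the order statistics satisfy $U_{(i)} = F_{\x}(x_{(i)})$ (using continuity of $F_{\x}$). Write $k = [qn]$. A classical fact is that $(U_{(1)},\ldots,U_{(n)})$ has the same joint distribution as $(S_1/S_{n+1},\ldots,S_n/S_{n+1})$ where $S_j = E_1 + \cdots + E_j$ and $E_1,\ldots,E_{n+1}$ are i.i.d.\ $\mathrm{exp}(1)$; consequently $n\bigl(U_{(k+1)} - U_{(k)}\bigr) = n E_{k+1}/S_{n+1}$. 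Since $S_{n+1}/(n+1) \to 1$ almost surely by the law of large numbers and $E_{k+1}\sim\mathrm{exp}(1)$ is independent of $S_{n+1}$, Slutsky's theorem gives $n\bigl(U_{(k+1)}-U_{(k)}\bigr) \stackrel{d}{\to} \mathrm{exp}(1)$.

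The second step is to transfer this back to the $x$-scale through a local linearization of $F_{\x}^{-1}$ at $q$. Since $U_{(k)} = F_{\x}(x_{(k)})$ and $U_{(k)} \stackrel{p}{\to} q$ (indeed $x_{(k)} \stackrel{p}{\to} F_{\x}^{-1}(q)$ by the standard consistency of sample quantiles), and since $f_{\x}$ is positive, finite, and continuous at $F_{\x}^{-1}(q)$ by Assumption~\ref{con:PQ}, the inverse CDF $F_{\x}^{-1}$ is differentiable at $q$ with derivative $1/f_{\x}(F_{\x}^{-1}(q))$. I would then write
\begin{align}
x_{([qn]+1)} - x_{([qn])}
&= F_{\x}^{-1}(U_{(k+1)}) - F_{\x}^{-1}(U_{(k)})\notag\\
&= \frac{1}{f_{\x}(F_{\x}^{-1}(q))}\bigl(U_{(k+1)} - U_{(k)}\bigr)\bigl(1 + o_p(1)\bigr),\notag
\end{align}
where the $o_p(1)$ term comes from the mean value theorem applied on the shrinking interval $[U_{(k)}, U_{(k+1)}]$ together with continuity of $f_{\x}$ at $F_{\x}^{-1}(q)$. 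Multiplying through by $n\,f_{\x}(F_{\x}^{-1}(q))$ and invoking Slutsky once more yields exactly Eq.~\eqref{eqn:asym_gap}.

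The main obstacle — really the only nontrivial point — is making the local linearization rigorous, i.e.\ controlling the error in replacing $F_{\x}^{-1}(U_{(k+1)}) - F_{\x}^{-1}(U_{(k)})$ by $(U_{(k+1)}-U_{(k)})/f_{\x}(F_{\x}^{-1}(q))$ uniformly enough that it does not disturb the weak limit. This requires knowing that \emph{both} endpoints $U_{(k)}$ and $U_{(k+1)}$ concentrate at $q$ (so that the intermediate point in the mean value theorem also lies near $q$), which follows from $U_{(k)} \stackrel{p}{\to} q$ and $U_{(k+1)} - U_{(k)} = O_p(1/n)$, combined with the continuity of $f_{\x}$ at that single point; no global smoothness of $f_{\x}$ is needed. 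Everything else — the exponential-spacings representation, the LLN for $S_{n+1}$, and the applications of Slutsky's theorem — is standard.
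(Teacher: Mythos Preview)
Your proof is correct and follows essentially the same route as the paper's: reduce to uniform order statistics via the probability integral transform, establish $n(U_{(k+1)}-U_{(k)})\stackrel{d}{\to}\mathrm{exp}(1)$, and then transfer back to the $x$-scale through the derivative of $F_{\x}^{-1}$ combined with Slutsky's theorem. The only differences are cosmetic --- you derive the uniform-spacing limit directly from the exponential representation while the paper cites Nagaraja (2015), and your claim that $E_{k+1}$ is independent of $S_{n+1}$ is false (it is a summand) but harmless, since Slutsky only needs $S_{n+1}/n\stackrel{p}{\to}1$.
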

\begin{proof}
Given a sample $X=(x_1,\ldots, x_n)$, since $\lim _{n \rightarrow \infty}[qn]/n=q \!\in\! (0,1)$, per Thm. 3 in \citep{smirnov1949limit}, 
\begin{equation}\label{eqn:smirnov}
    x_{([qn])} \stackrel{a . s.}{\longrightarrow} F_{\x}^{-1}(q) \text{ as } n\rightarrow\infty
\end{equation}
at rate $n^{-1/2}$. Let $y_{([qn])}$ be the $[qn]^{th}$ order statistic in a random sample of size $n$ from Uni$(0,1)$. From Eq.~\eqref{eqn:smirnov}, it follows that $y_{([qn]+1)}-y_{([qn])}\stackrel{a . s.}{\longrightarrow}0 \text{ as } n\rightarrow\infty$. Then, per Lemma 1 in \citep{nagaraja2015spacings},
\vspace{-5pt}
\begin{align}\label{eqn:uniform}
n \cdot(y_{([q n]+1)}-y_{([qn])})\stackrel{d}{\longrightarrow}\mbox{exp}(1),
\end{align} 
where $\mbox{exp}(1)$ represents an exponential random variable with rate parameter 1.

In addition, $\forall 1\leq i \leq n$, $x_{(i)} \stackrel{d}{=}F_{\x}^{-1}\left(y_{(i)}\right)$. Then 
\begin{align}\label{eqn:gaptouniform}
(x_{([qn]+1)}-x_{([qn])}) &
\stackrel{d}{=}F_{\x}^{-1}\left(y_{([qn]+1)}\right)-F_{\x}^{-1}\left(y_{([qn])}\right),\notag\\
n\cdot(x_{([qn]+1)}-x_{([qn])})&
\stackrel{d}{=}\frac{F_{\x}^{-1}\left(y_{([qn]+1)}\right)-F_{\x}^{-1}\left(y_{([qn])}\right)}{\left(y_{([qn]+1)}-y_{([qn])}\right)} \cdot n \cdot\left(y_{([qn]+1)}-y_{([qn])}\right),
\end{align}
where $\stackrel{d}{=}$ stands for ``equal in distribution'', meaning two random variables have the same distribution. Per the definition of  pdf and the assumptions around $f_{\x}$,
\begin{equation} \label{eqn:derivative}
\frac{F_{\x}^{-1}\left(y_{([q n]+1)}\right)-F_{\x}^{-1}\left(y_{([q n])}\right)}{\left(y_{([qn]+1)}-y_{([qn])}\right)} \stackrel{a . s.}{\longrightarrow} \frac{1}{f_{\x}(F_{\x}^{-1}(q))}.
\end{equation}
Plugging Eqns \eqref{eqn:uniform} and \eqref{eqn:derivative} into Eq.  \eqref{eqn:gaptouniform}, along with Slutsky's Theorem, we have
\begin{equation}
\begin{aligned}\label{eqn:exp(1)}
n\cdot(x_{([qn]+1)}-x_{([qn])}) & \stackrel{d}{\longrightarrow}(f_{\x}(F_{\x}^{-1}(q)))^{-1}\mbox{exp}(1),\\
\mathbb{E}[n(x_{([qn]+1)}-x_{([qn])})]&\longrightarrow(f_{\x}(F_{\x}^{-1}(q)))^{-1},\\
\mathbb{E}[n(x_{([qn]+1)}-x_{([qn])})]^2\!&\longrightarrow2(f_{\x}(F_{\x}^{-1}(q)))^{-2}.  
\end{aligned}
\end{equation}
\end{proof}

\begin{theorem}[MSE consistency of PrivateQuantile in Algorithm \ref{alg:PQ}]  \label{convergence_PQ}
Denote the sample data of size $n$ by $\x$ and let $x_{([qn])}^*$ be the sanitized $q^{th}$ sample quantile of $X$  from $\mathcal{M}$: \verb|PrivateQuantile| of $\varepsilon$-DP in Algorithm \ref{alg:PQ}. Under the regularity conditions in Assumption \ref{con:PQ}, and assume that $\exists \text{ constant } C \geq 0$ such that the user-provided global bounds $(L_{\x}, U_{\x})$ for $\x$ satisfy  \\ $\lim_{n\rightarrow\infty}\left(U_{\x}\!-\!x_{(n)}\right) \! = \! \lim_{n\rightarrow\infty}\left(x_{(1)}\!-\!L_{\x}\right) \!=\!C$, then
\begin{align}\label{Eqn:MSE_PQ_DP}
&\mathbb{E}_{\x}\mathbb{E}_{\mathcal{M}|\x}\!\!\left(\!x_{([qn])}^*\!-\!F_{\x}^{-1}(q)\!\right)^2\!=\! \mathcal{O}(n^{-1})\!+O\!\left(\!\frac{e^{-\mathcal{O}(n\varepsilon)}}{n^{3/2}}\!\right)\!\!=\!\begin{cases}
\!\mathcal{O}(n^{-1}) & \mbox{for constant $\varepsilon$} \\
\!\mathcal{O}(e^{-\mathcal{O}(\varepsilon)}) & \mbox{for constant $n$}
\end{cases}.
\end{align}
If the PrivateQuantile procedure $\mathcal{M}$ of $\rho$-zCDP is used, then \begin{align}\label{Eqn:MSE_PQ_zCDP}
&\mathbb{E}_{\x}\mathbb{E}_{\mathcal{M}|\x}\!\!\left(\!x_{([qn])}^*\!-\!F_{\x}^{-1}(q)\!\right)^2\!=\! \mathcal{O}(n^{-1})\!+O\!\left(\!\frac{e^{-\mathcal{O}(n\sqrt{\rho})}}{n^{3/2}}\!\right)\!\!=\!\begin{cases}
\!\mathcal{O}(n^{-1}) & \mbox{for constant $\rho$} \\
\!\mathcal{O}(e^{-\mathcal{O}(\sqrt{\rho})}) & \mbox{for constant $n$}
\end{cases}.
\end{align}
\end{theorem}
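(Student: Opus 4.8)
The plan is to prove Theorem~\ref{convergence_PQ} by decomposing the target MSE into a sampling-error component and a sanitization-error component via the Cauchy--Schwarz inequality, exactly in the spirit of the decomposition already used in the proof of Theorem~\ref{thm:MSE_RAP}. Write
\[
\mathbb{E}_{\x}\mathbb{E}_{\mathcal{M}|\x}\left(x_{([qn])}^*-F_{\x}^{-1}(q)\right)^2
\le \mathbb{E}_{\x}\mathbb{E}_{\mathcal{M}|\x}\left(x_{([qn])}^*-x_{([qn])}\right)^2
+ \mathbb{E}_{\x}\left(x_{([qn])}-F_{\x}^{-1}(q)\right)^2
+ 2\sqrt{(\cdots)(\cdots)}.
\]
The second term is $\mathcal{O}(n^{-1})$ by the classical asymptotic normality of sample quantiles (Walker's theorem, already cited for Theorem~\ref{thm:MSE_RAP}) under Assumption~\ref{con:PQ}, since $\sqrt{n}(x_{([qn])}-F_{\x}^{-1}(q))\overset{d}{\to}\mathcal{N}(0,q(1-q)/f_{\x}(F_{\x}^{-1}(q))^2)$ and the variance is bounded because $f_{\x}$ is positive and finite at the quantile. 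The work is in bounding the first term and showing it is $O(e^{-\mathcal{O}(n\varepsilon)}n^{-3/2})$; once both terms are controlled, the cross term is the geometric mean and is dominated by the larger of the two.

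For the sanitization-error term, first condition on the data $\x$ so that the order statistics $x_{(0)}=L_{\x}\le x_{(1)}\le\cdots\le x_{(n)}\le x_{(n+1)}=U_{\x}$ are fixed. The exponential mechanism in Algorithm~\ref{alg:PQ} selects index $i^*$ with probability proportional to $y_i=(x_{(i+1)}-x_{(i)})\exp(-\varepsilon|i-qn|/2)$, then draws $x^*_{([qn])}$ uniformly from $(x_{(i^*)},x_{(i^*+1)})$. Since conditional on $i^*=i$ we have $|x^*_{([qn])}-x_{([qn])}|\le |x_{(i+1)}-x_{([qn])}|+|x_{([qn])}-x_{(i)}|$, and the selected interval is at ``rank distance'' $|i-[qn]|$ from the true quantile's interval, I would bound $(x^*_{([qn])}-x_{([qn])})^2$ by a quantity growing at most like $(U_{\x}-L_{\x})^2$ (a worst-case interval length, using the assumption that $U_{\x}-x_{(n)}$ and $x_{(1)}-L_{\x}$ converge to a constant $C$, so $U_{\x}-L_{\x}=\mathcal{O}(1)$), weighted by the selection probability. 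The key estimate is that the probability of selecting an index $i$ with $|i-[qn]|\ge k$ decays like $e^{-\varepsilon k/2}$ relative to the central index, provided the normalizing denominator $\sum_j y_j$ is not too small. This is where Lemma~\ref{lemma:asym_gap} enters: it gives that the typical gap $x_{([qn]+1)}-x_{([qn])}\asymp n^{-1}/f_{\x}(F_{\x}^{-1}(q))$, so the central term $y_{[qn]}\asymp n^{-1}$ in probability, guaranteeing the denominator is $\Omega(n^{-1})$ and controlling the ratio. Summing the geometric tail $\sum_{k\ge1} (U_{\x}-L_{\x})^2 \cdot \frac{e^{-\varepsilon k/2}}{n^{-1}}\cdot(\text{gap bounds})$ and then taking $\mathbb{E}_{\x}$ (using $\mathbb{E}[n(x_{([qn]+1)}-x_{([qn])})]^2\to 2/f_{\x}(F_{\x}^{-1}(q))^2$ from Eq.~\eqref{eqn:exp(1)}) yields a bound of the form $O(e^{-\mathcal{O}(\varepsilon)})$ times a polynomial factor; a more careful bookkeeping, using that the relevant gaps near rank $[qn]$ are all $\mathcal{O}(n^{-1})$ in expectation and the interval containing the true quantile has length $\mathcal{O}(n^{-1})$, sharpens the polynomial factor to $n^{-3/2}$ and the exponent to $\mathcal{O}(n\varepsilon)$ because the effective ``cost'' of moving one rank away in the utility score is $\varepsilon/2$ and there are $\Theta(n)$ ranks.

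The main obstacle I anticipate is making the tail bound on $|i^*-[qn]|$ rigorous in expectation over $\x$ simultaneously with the interval-length control: the selection probabilities $y_i$ involve the random gaps $x_{(i+1)}-x_{(i)}$, which are neither independent nor identically distributed, and the denominator $\sum_j y_j$ is itself random, so one cannot naively factor the expectation. I would handle this by (i) a high-probability event on which the central gap $x_{([qn]+1)}-x_{([qn])}$ is $\Theta(n^{-1})$ and all gaps within $O(\log n)$ ranks of $[qn]$ are $O(n^{-1}\log n)$ (using Lemma~\ref{lemma:asym_gap} and uniform spacing results from \citep{nagaraja2015spacings}), on which the geometric-tail argument gives the stated rate; and (ii) on the complementary low-probability event, crudely bounding $(x^*_{([qn])}-x_{([qn])})^2\le(U_{\x}-L_{\x})^2=\mathcal{O}(1)$ and showing the event probability is itself $o(n^{-1})$ or exponentially small, so it does not dominate. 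Finally, the zCDP statement Eq.~\eqref{Eqn:MSE_PQ_zCDP} follows immediately: the exponential mechanism of $\rho$-zCDP is obtained by running Algorithm~\ref{alg:PQ} with $\varepsilon$ replaced by $\sqrt{2\rho}$ (up to constants) in the utility weighting, so substituting $\varepsilon\mapsto\mathcal{O}(\sqrt{\rho})$ into Eq.~\eqref{Eqn:MSE_PQ_DP} gives the claim, and the two displayed special cases (constant $\varepsilon$ or $\rho$ versus constant $n$) are read off by inspecting which term dominates.
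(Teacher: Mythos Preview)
Your high-level scaffolding matches the paper exactly: the Cauchy--Schwarz split into a sampling term and a sanitization term, Walker's asymptotic normality for the former giving $\mathcal{O}(n^{-1})$, and Lemma~\ref{lemma:asym_gap} to control the central spacing at rate $n^{-1}$. The zCDP claim is likewise handled by the same substitution you describe.

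Where you diverge is in the analysis of the sanitization error $\mathbb{E}_{\x}\mathbb{E}_{\mathcal{M}|\x}(x_{([qn])}^*-x_{([qn])})^2$. You propose a tail-sum over ranks $|i^*-[qn]|\ge k$ with geometric decay $e^{-\varepsilon k/2}$, coupled with a high-probability/low-probability event split to tame the random gaps and the random normalizer. The paper's route is considerably shorter and avoids exactly the obstacle you flag: it does \emph{not} sum over ranks or condition on good events. Instead it writes out the full normalizer $\sum_i y_i$, groups it into the two boundary terms $(x_{(1)}-L_{\x})e^{-\varepsilon[qn]/2}$ and $(U_{\x}-x_{(n)})e^{-\varepsilon(n-[qn])/2}$ (each $\approx C\cdot e^{-\mathcal{O}(n\varepsilon)}$ by the bounds assumption), the interior non-central terms (each gap $\mathcal{O}(n^{-1})$ times an exponential weight), and the central term $x_{([qn]+1)}-x_{([qn])}=\mathcal{O}(n^{-1})$. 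The ratio then gives $\Pr(i^*=[qn])\to 1$ directly, at rate $e^{-\mathcal{O}(n\varepsilon)}$. Conditional on correct selection, $x^*_{([qn])}$ is uniform on an interval of length $\mathcal{O}(n^{-1})$, so $\mathbb{E}_{\mathcal{M}|\x}h^2\to e^{-\mathcal{O}(n\varepsilon)}\cdot(x_{([qn]+1)}-x_{([qn])})^2/3$, and taking $\mathbb{E}_{\x}$ with Lemma~\ref{lemma:asym_gap} finishes. Your approach would work, but the event decomposition and rank-wise tail bounds are machinery the paper does not need. One minor point: the $O(n^{-3/2}e^{-\mathcal{O}(n\varepsilon)})$ term in the theorem statement \emph{is} the Cauchy--Schwarz cross term, not something to be dominated away---so keep it rather than absorbing it into $\mathcal{O}(n^{-1})$.
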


\begin{proof} Let $\mathcal{M}$ standards for the PrivateQuanitile procedure in Algorithm \ref{alg:PRECISE} through this section and $x_{([qn])}$ be the original sample quantile at $q$. Similar to proof in Appendix \ref{proof:PRECISE}, we first expand the MSE between the sanitized $q^{th}$ sample quantile $x_{([qn])}^*$ and the population quantile $F_{\x}^{-1}(q)$ as
\vspace{-6pt}
\begin{align} 
&\mathbb{E}_{\x} \mathbb{E}_{\mathcal{M}|\x}\left(x_{([qn])}^*-F_{\x}^{-1}(q)\right)^2 \notag =\mathbb{E}_{\x} \mathbb{E}_{\mathcal{M}|\x}\left(x_{([qn])}^* - x_{([qn])} + x_{([qn])}-F_{\x}^{-1}(q)\right)^2\\
= \; & \mathbb{E}_{\x} \mathbb{E}_{\mathcal{M}|\x}\left(x_{([qn])}^*-x_{([qn])}\right)^2 + \mathbb{E}_{\x} \mathbb{E}_{\mathcal{M}|\x}\left(x_{([qn])}-F_{\x}^{-1}(q) \right)^2\notag\\
& \; +2\mathbb{E}_{\x} \mathbb{E}_{\mathcal{M}|\x}\left(x_{([qn])}^*-x_{([qn])}\right)\left(x_{([qn])}-F_{\x}^{-1}(q) \right)\notag\\
\leq \; &\mathbb{E}_{\x} \mathbb{E}_{\mathcal{M}|\x}\left(x_{([q P])}^*-x_{([qn])}\right)^2 + \mathbb{E}_{\x}\left(x_{([qn])}-F_{\x}^{-1}(q) \right)^2 \notag\\
& \; +2\sqrt{\mathbb{E}_{\x} \mathbb{E}_{\mathcal{M}|\x}\left(x_{([q P])}^*-x_{([qn])}\right)^2\mathbb{E}_{\x} \left(x_{([qn])}-F_{\x}^{-1}(q) \right)^2}.
\label{eqn:CS_PQ}
\end{align}
The last inequality in Eq.~\eqref{eqn:CS_PQ} holds per the Cauchy-Schwarz inequality. Similarly based on Thm. 1 in \citep{walker1968note}, we have asymptotic normality for sample quantile 
\begin{align}
\sqrt{n}\left(x_{([qn])}-F_{\x}^{-1}(q) \right) &\stackrel{d}{\rightarrow} \mathcal{N}\left(0, \frac{q(1-q)}{\{f_{\x}(F_{\x}^{-1}(q))\}^2}\right),\notag\\
\Rightarrow \; \mathbb{E}_{\x}\mathbb{E}_{\mathcal{M}|\x}\left(x_{([qn])}-F_{\x}^{-1}(q) \right)^2 &=\mathbb{E}_{\x}\left(x_{([qn])}-F_{\x}^{-1}(q) \right)^2 \rightarrow \frac{n^{-1} q(1-q)}{\{f_{\x}(F_{\x}^{-1}(q))\}^2}.\label{eqn:popuquantile}
\end{align}

An intermediate step of the PrivateQuantile procedure is the sampling of index $i^*$ via the exponential mechanism with privacy loss $\varepsilon$ (step~\ref{step:i^*} in Algorithm \ref{alg:PQ}),
\begin{align}
    \Pr(i^*) &= \frac{\left(x_{(i^*+1)} - x_{(i^*)}\right)\exp(-\varepsilon|i^*-[qn]|/2)}{\sum_{i=0}^n \left(x_{(i+1)} - x_{(i)}\right)\exp(-\varepsilon|i-[qn]|/2)}\label{eqn:Pri*}\\
    \text{where }& \sum_{i=0}^n \left(x_{(i+1)} - x_{(i)}\right)\exp(-\varepsilon|i-[qn]|/2)\label{eqn:Pri*_denominator}\\
    =\; & \left(x_{(1)} - L_{\x}\right)\exp\left(-\frac{\varepsilon\cdot[qn]}{2}\right) + \left(U_{\x} - x_{(n)}\right)\exp\left(-\frac{\varepsilon|n-[qn]|}{2}\right)\label{eqn:Pri*_UL}\\
    & + \sum_{i\notin \{0, n, [qn]\}}\left(x_{(i+1)} - x_{(i)}\right)\exp(-\varepsilon|i-[qn]|/2) + \left(x_{([qn]+1)} - x_{([qn])}\right)\label{eqn:Pri*_middle}.
\end{align}

For $i\!\notin\! \{0, n, [qn]\}$, per Lemma~\ref{lemma:asym_gap}, $\left(x_{(i+1)} \!-\! x_{(i)}\right)\!\rightarrow \!0$ at the rate of $n^{-1}\!$ as $n\!\rightarrow\!\infty$, so the first term in Eq.~\eqref{eqn:Pri*_middle} converges to $0$ at the rate of $\mathcal{O}(n^{-1}e^{-\mathcal{O}(n\varepsilon)})$, while the second term in Eq.~\eqref{eqn:Pri*_middle} converges to $0$ at the rate of $\mathcal{O}(n^{-1})$. 

Also, per the assumption that $\exists \text{ constant } C \geq 0$ such that the user-provided global bounds $(L_{\x}, U_{\x})$ for $\x$ satisfy $\lim_{n\rightarrow\infty}\left(U_{\x}\!-\!x_{(n)}\right) \! = \! \lim_{n\rightarrow\infty}\left(x_{(1)}\!-\!L_{\x}\right) \!=\!C$. The two terms in Eq.~\eqref{eqn:Pri*_UL} $\approx C\cdot e^{-\mathcal{O}(n\varepsilon)}$. Therefore, as $n\rightarrow \infty$ or $\varepsilon \rightarrow \infty$,
\begin{align}
    &\Pr(i^* = [qn]) = \frac{\mathcal{O}(n^{-1})}{\mathcal{O}(n^{-1}+n^{-1}e^{-\mathcal{O}(n\varepsilon)}) + C\cdot e^{-\mathcal{O}(n\varepsilon)}} \rightarrow 1.\label{eqn:Prqn}\\
    \Rightarrow \; & \Pr\left(x_{([qn])}^* \sim \mbox{Unif}\left(x_{([qn])},x_{([qn]+1)}\right)\right)\rightarrow 1\label{Eqn:limitdist}.
\end{align}
Eqns \eqref{eqn:Prqn} and \eqref{Eqn:limitdist} imply the limiting distribution of $x_{([qn])}^*$ is a uniform distribution from $x_{([qn])}$ to $x_{([qn]+1)}$, achieved at the rate of $e^{\mathcal{O}(n\varepsilon)}$. Define $h\triangleq x_{([qn])}^*-x_{([qn])}$, then
\begin{equation}
e^{\mathcal{O}(n\varepsilon)}h \overset{d}{\rightarrow}\mbox{Unif}\left(0, x_{([qn]+1)}-x_{([qn])}\right).
\end{equation}
Therefore, as $n\rightarrow \infty$ or $\varepsilon\rightarrow \infty$
\begin{align}\label{eqn:sanitizetopublic}
&\mathbb{E}_{\x}\mathbb{E}_{\mathcal{M}|\x} \left(x_{([qn])}^*-x_{([qn])}\right)^2= \mathbb{E}_{\x}\mathbb{E}_{\mathcal{M}|\x}(h^ 2)=\mathbb{E}_{\x} \{\mathbb{V}_{\mathcal{M}|X}(h)+ (\mathbb{E}_{\mathcal{M}|\x}(h))^2\}\notag \\
\rightarrow &\; e^{-\mathcal{O}(n\varepsilon)}\mathbb{E}_{\x}\left[\frac{\left(x_{([qn]+1)}-x_{([qn])}\right)^2}{12} + \frac{\left(x_{([qn]+1)}-x_{([qn])}\right)^2}{4}\right]\notag\\ 
=&\;e^{-\mathcal{O}(n\varepsilon)} \mathbb{E}_{\x}\left[\frac{(x_{([qn]+1)}-x_{([qn])})^2}{3}\right]\rightarrow\frac{2\cdot n^{-2}e^{-\mathcal{O}(n\varepsilon)}}{3(f_{\x}(F_{\x}^{-1}(q)))^2} \mbox{ per Eq.~\eqref{eqn:exp(1)} in Lemma~\eqref{lemma:asym_gap}}.
\end{align}

Plugging Eqns~\eqref{eqn:sanitizetopublic} and \eqref{eqn:popuquantile} into the right-hand side of Eq.~\eqref{eqn:CS_PQ}, we have 
\begin{equation}
\begin{aligned}
& \qquad \mathbb{E}_{\x} \mathbb{E}_{\mathcal{M}|\x}\left(x_{([qn])}^*-F_{\x}^{-1}(q)\right)^2 \\
& \leq \frac{2\cdot n^{-2}e^{-\mathcal{O}(n\varepsilon)}}{3(f_{\x}(F_{\x}^{-1}(q)))^2} + \frac{n^{-1}q(1-q)}{\{f_{\x}(F_{\x}^{-1}(q))\}^2}+ 2\sqrt{\frac{2\cdot n^{-2}e^{-\mathcal{O}(n\varepsilon)}}{3(f_{\x}(F_{\x}^{-1}(q)))^2}  \cdot \frac{n^{-1}q(1-q)}{\{f_{\x}(F_{\x}^{-1}(q))\}^2}}\\
& = \mathcal{O}(e^{-\mathcal{O}(n\varepsilon)}n^{-2}+n^{-1}+e^{-\mathcal{O}(n\varepsilon)}n^{-3/2})\\
&=\! \mathcal{O}(n^{-1})\!+O\!\left(\!\frac{e^{-\mathcal{O}(n\varepsilon)}}{n^{3/2}}\!\right)\!\!=\!\begin{cases}
\!\mathcal{O}(n^{-1}) & \mbox{for constant $\varepsilon$} \\
\!\mathcal{O}(e^{-\mathcal{O}(\varepsilon)}) & \mbox{for constant $n$}
\end{cases}.
\end{aligned}
\end{equation}
\end{proof}


\subsection{Proof of Theorem \ref{thm:utility_PP}}\label{proof:PPQ}
We first present Lemmas \ref{lemma:chernoff} and \ref{lemma:quantile} that will be used in proving Theorem \ref{thm:utility_PP}.
\begin{lemma}[Chernoff bounds \citep{mitzenmacher2017probability}]\label{lemma:chernoff}
    Let $Z_i \stackrel{iid}\sim \mbox{Bernoulli}(p)$ and $Z = \sum_{i=1}^n \!Z_i$, then for $\delta \!\in\![0,1],$
    \begin{align*}
        \mathbb{P}(Z \geq (1+\delta)np) &\leq e^{-np\delta^2/3};\\
        \mathbb{P}(Z \leq (1-\delta)np) &\leq e^{-np\delta^2/2}.
    \end{align*}
\end{lemma}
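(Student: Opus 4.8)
The plan is to use the standard exponential-moment (Chernoff) method. The two bounds are proven symmetrically, so I would begin with the upper tail. For any $t>0$, Markov's inequality applied to the monotone transform $e^{tZ}$ gives $\mathbb{P}(Z \geq (1+\delta)\mu) \leq \mathbb{E}[e^{tZ}]\, e^{-t(1+\delta)\mu}$, where I write $\mu = np$. Because the $Z_i$ are independent, the moment generating function factorizes, and each Bernoulli factor is $\mathbb{E}[e^{tZ_i}] = 1 - p + pe^t = 1 + p(e^t-1)$. Applying the elementary inequality $1+x \leq e^{x}$ then yields the clean bound $\mathbb{E}[e^{tZ}] = (1+p(e^t-1))^n \leq e^{\mu(e^t-1)}$.

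Next I would optimize the resulting exponent $\mu(e^t-1) - t(1+\delta)\mu$ over $t>0$. Setting the derivative to zero gives the minimizer $t = \ln(1+\delta) > 0$, and substituting back produces the sharp multiplicative form
\begin{equation*}
\mathbb{P}(Z \geq (1+\delta)\mu) \leq \left(\frac{e^{\delta}}{(1+\delta)^{1+\delta}}\right)^{\mu}.
\end{equation*}
The lower tail is handled identically but with $t<0$: bounding $\mathbb{P}(Z \leq (1-\delta)\mu) = \mathbb{P}(e^{-sZ} \geq e^{-s(1-\delta)\mu})$ for $s>0$, using the same MGF bound $\mathbb{E}[e^{-sZ}] \leq e^{\mu(e^{-s}-1)}$, and optimizing at $s = -\ln(1-\delta)$ to obtain $\mathbb{P}(Z \leq (1-\delta)\mu) \leq \big(e^{-\delta}/(1-\delta)^{1-\delta}\big)^{\mu}$.

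It then remains to convert these multiplicative bounds into the stated Gaussian-type exponents. For the lower tail I would show $\varphi(\delta) \triangleq -\delta - (1-\delta)\ln(1-\delta) + \tfrac{\delta^2}{2} \leq 0$ on $[0,1]$ by checking $\varphi(0)=0$ and differentiating to confirm $\varphi$ is decreasing, which delivers $e^{-\delta}/(1-\delta)^{1-\delta} \leq e^{-\delta^2/2}$ and hence the $e^{-\mu\delta^2/2}$ bound. For the upper tail the analogous step is to prove $(1+\delta)\ln(1+\delta) - \delta \geq \tfrac{\delta^2}{3}$ for $\delta \in [0,1]$, giving $e^{\delta}/(1+\delta)^{1+\delta} \leq e^{-\delta^2/3}$ and the $e^{-\mu\delta^2/3}$ bound.

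The main obstacle is this last pair of scalar inequalities, and in particular the upper-tail one: unlike the lower tail, $(1+\delta)\ln(1+\delta)-\delta - \delta^2/3$ does not factor cleanly, and the weaker constant $1/3$ (rather than $1/2$) together with the restriction $\delta \in [0,1]$ is exactly what makes it hold. I would establish it by a Taylor/series argument — writing $(1+\delta)\ln(1+\delta)-\delta = \sum_{k\geq 2}\frac{(-1)^{k}\delta^{k}}{k(k-1)}$ and verifying the alternating remainder stays above $\delta^2/3$ on the unit interval — or equivalently by a monotonicity check on the difference function after two derivatives. Everything else is routine once the convexity bound $1+x\leq e^{x}$ and the two optimizations are in place.
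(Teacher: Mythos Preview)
Your proof is correct and follows the standard exponential-moment derivation found in the cited reference. The paper itself does not prove this lemma; it simply quotes the result from \citep{mitzenmacher2017probability} as a tool for the subsequent concentration argument in Lemma~\ref{lemma:quantile}, so there is nothing to compare against beyond noting that your writeup is exactly the textbook argument the citation points to.
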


\begin{lemma}[sample quantile is concentrated around the population quantile]\label{lemma:quantile}
    Let $\boldsymbol{\theta} \!=\!(\theta_1, \ldots, \theta_m)$ be a set of samples from posterior distribution with CDF $f_{\theta|\mathbf{x}}$, and $F^{-1}_{\theta|\mathbf{x}}(q)\! =\! \inf \{\theta\!:\! f_{\theta|\mathbf{x}}\!\geq\! q\}$ where $0\!<\!q\!<\!1$. Assume the posterior density $f_{\theta|\mathbf{x}}$ is continuous at $F^{-1}_{\theta|\mathbf{x}}(q)$. Let $\eta>0$ and $0\leq u\leq \eta$ and $p_{\min} = \inf_{|\tau-F^{-1}_{\theta|\mathbf{x}}(q)|\leq 2\eta} f_{\theta|\mathbf{x}}(\tau)$, then
    \[\mathbb{P}\left(\Big|\theta_{([qm])} - F^{-1}_{\theta|\mathbf{x}}(q)\Big| > u\right)\leq
    \begin{cases}
        2\exp\left(-mu^2p^2_{\min}/2q\right) &\text{ if 
 }\frac{3}{5}<q<1; \\
        2\exp\left(-mu^2p^2_{\min}/3(1-q)\right) &\text{ if 
 } 0<q<\frac{3}{5}.
    \end{cases}\]
\end{lemma}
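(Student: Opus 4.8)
\textbf{Proof proposal for Lemma~\ref{lemma:quantile}.}

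The plan is to reduce the statement about the deviation of the sample quantile to a statement about a sum of i.i.d.\ Bernoulli variables and then apply the Chernoff bounds of Lemma~\ref{lemma:chernoff}. The starting observation is the standard order-statistic inequality: for a fixed threshold $t$, the event $\{\theta_{([qm])} > t\}$ is the same as the event that fewer than $[qm]$ of the samples $\theta_1,\dots,\theta_m$ fall at or below $t$, i.e. $\{\sum_{j=1}^m \mathbbm{1}(\theta_j \le t) < [qm]\}$; symmetrically, $\{\theta_{([qm])} < t\}$ corresponds to $\{\sum_{j=1}^m \mathbbm{1}(\theta_j \le t) \ge [qm]\}$ (up to the usual care with ties, which do not matter since $f_{\theta|\mathbf{x}}$ is continuous at $F^{-1}_{\theta|\mathbf{x}}(q)$). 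So I would first split $\Pr(|\theta_{([qm])} - F^{-1}_{\theta|\mathbf{x}}(q)| > u)$ into the two one-sided pieces $\Pr(\theta_{([qm])} > F^{-1}_{\theta|\mathbf{x}}(q) + u)$ and $\Pr(\theta_{([qm])} < F^{-1}_{\theta|\mathbf{x}}(q) - u)$ and handle each by the corresponding tail bound.

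The key quantitative input is a lower bound on how much the CDF moves over a window of width $u$ around the quantile. Writing $F \equiv F_{\theta|\mathbf{x}}$, by the mean value theorem and the definition $p_{\min} = \inf_{|\tau - F^{-1}(q)|\le 2\eta} f_{\theta|\mathbf{x}}(\tau)$, for $0 \le u \le \eta$ we get $F(F^{-1}(q) + u) \ge q + u\,p_{\min}$ and $F(F^{-1}(q) - u) \le q - u\,p_{\min}$. Now set $Z = \sum_{j=1}^m \mathbbm{1}(\theta_j \le F^{-1}(q) + u)$, which is $\mathrm{Binomial}(m, p)$ with $p = F(F^{-1}(q)+u) \ge q + u p_{\min}$. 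Then $\Pr(\theta_{([qm])} > F^{-1}(q) + u) \le \Pr(Z < [qm]) \le \Pr(Z \le (1-\delta)mp)$ for a suitable $\delta$; choosing $\delta$ so that $(1-\delta)mp = qm$, i.e. $\delta = (p-q)/p \ge u p_{\min}/p$, Lemma~\ref{lemma:chernoff} gives a bound of $\exp(-mp\delta^2/2) = \exp(-m(p-q)^2/(2p)) \le \exp(-m u^2 p_{\min}^2/(2p))$. The analogous argument with $Z' = \sum_j \mathbbm{1}(\theta_j \le F^{-1}(q) - u)$, which is $\mathrm{Binomial}(m,p')$ with $p' \le q - up_{\min}$, and the upper Chernoff bound with $(1+\delta')mp' = qm$, yields $\exp(-mp'\delta'^2/3) \le \exp(-m u^2 p_{\min}^2/(3(1-q)))$ after bounding $p' \le q \le 1$ appropriately for the denominator.

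The last step is to combine the two one-sided bounds and notice that the worst of the two exponents is what appears in the stated case split. When $q$ is large (the regime $3/5 < q < 1$), the $p$ in the denominator of the first bound is close to $q$ and dominates, giving the $2\exp(-mu^2 p_{\min}^2/(2q))$ form; when $q$ is small ($0 < q < 3/5$), the $3(1-q)$ denominator from the upper-tail bound is the binding one, giving $2\exp(-mu^2 p_{\min}^2/(3(1-q)))$; the factor $2$ is the union bound over the two sides. The main obstacle I anticipate is bookkeeping around the exact constants: verifying that $p \le q + up_{\min} \le$ something controllable so that $1/p$ can be replaced by the clean constant in the stated bound, and checking the crossover point $q = 3/5$ comes out exactly from comparing $2q$ against $3(1-q)$ (indeed $2q = 3(1-q) \iff q = 3/5$), so that in each regime the reported exponent is genuinely the smaller (hence valid) one. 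Everything else is a routine application of Chernoff plus monotonicity of $F$.
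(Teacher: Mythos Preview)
Your overall strategy---reduce each tail to a Binomial and apply the Chernoff bounds of Lemma~\ref{lemma:chernoff}---is exactly what the paper does. The gap is precisely in the step you yourself flag as an obstacle: simplifying the $p$-dependent denominators to the clean constants $2q$ and $3(1-q)$. With your choice of indicator $Z=\sum_j \mathbbm{1}(\theta_j\le F^{-1}(q)+u)$ for the right tail, the Bernoulli parameter satisfies $p\ge q+up_{\min}>q$, so replacing $2p$ by $2q$ in $\exp\bigl(-mu^2p_{\min}^2/(2p)\bigr)$ would \emph{tighten} the bound rather than loosen it; the only data-free upper bound on $p$ is $1$, which does not give the stated constant. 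Likewise, on the left tail your $p'\le q-up_{\min}<q$ leads to $\exp\bigl(-m(q-p')^2/(3p')\bigr)\le\exp\bigl(-mu^2p_{\min}^2/(3q)\bigr)$, not $3(1-q)$; you have effectively swapped which tail produces which denominator.

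The paper resolves this by flipping the indicator on the right tail: it sets $Z_j=\mathbbm{1}\{\theta_j>F^{-1}(q)+u\}$, so the success probability $\hat p\le 1-q-up_{\min}$ is bounded \emph{above} by $1-q$. Then $\{\theta_{([qm])}>F^{-1}(q)+u\}\subseteq\{Z\ge(1-q)m\}$ is an upper-tail event, and the upper Chernoff bound yields $\exp\bigl(-m(1-q-\hat p)^2/(3\hat p)\bigr)\le\exp\bigl(-mu^2p_{\min}^2/(3(1-q))\bigr)$, with the $(1-q)$ denominator falling out directly from $\hat p\le 1-q$. The left tail uses $Z'_j=\mathbbm{1}\{\theta_j<F^{-1}(q)-u\}$ with $\hat p'\le q$ in the same way to produce the $q$ denominator. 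The moral: choose the indicator so that the Bernoulli parameter has a natural \emph{upper} bound (either $1-q$ or $q$); then both tails become upper-tail Chernoff events and the denominators in the statement emerge. Your remark that the crossover $2q=3(1-q)$ sits exactly at $q=3/5$ is correct once those two one-sided bounds are in hand.
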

\begin{proof}
    Let $Z_j = 1\{\theta_{(j)} > F^{-1}_{\theta|\mathbf{x}}(q) + u\}$ and $Z = \sum_{j=1}^m Z_j$ denote the number of posterior samples larger than $F^{-1}_{\theta|\mathbf{x}}(q) + u$. Then
    \vspace{-5pt}
    \[\hat{p} = \mathbb{P}(Z_j = 1) \leq 1-q-up_{\min}.\]
    If $\theta_{([qm])} > F^{-1}_{\theta|\mathbf{x}}(q) + u$, then $Z \geq (1-q)m$, therefore per Chernoff bound in Lemma~\ref{lemma:chernoff},
    \begin{align}
        &\mathbb{P}\left(\theta_{([qm])} > F^{-1}_{\theta|\mathbf{x}}(q) + u\right)
        \leq \mathbb{P}\left(Z \geq (1-q)m\right) =\mathbb{P}\left(Z \geq \left(1+\frac{1-q}{\hat{p}}-1\right)m\hat{p}\right)\notag\\
        \leq & \exp\left(-\frac{m\hat{p}}{3}\left(\frac{1-q}{\hat{p}}-1\right)^2\right) 
        = \exp\left(-\frac{m}{3\hat{p}}\left(\underbrace{1-q-\hat{p}}_{\geq up_{min}}\right)^2\right)\notag\\
        \leq&\exp\left(-\frac{mu^2p^2_{\min}}{3\hat{p}}\right)
        \leq \exp\left(-\frac{mu^2p^2_{\min}}{3(1-q)}\right).
    \end{align}
    Similarly, let $Z'_j = 1\{\theta_j < F^{-1}_{\theta|\mathbf{x}}(q) - u\}$ and $Z' = \sum_{j=1}^m Z'_j$ denote the number of posterior samples smaller than $F^{-1}_{\theta|\mathbf{x}}(q) - u$. Then $\hat{p}' = \mathbb{P}(Z'_j = 1) \leq q-up_{\min}.$
    If $\theta_{([qm])} < F^{-1}_{\theta|\mathbf{x}}(q) - u$, then $Z' \geq qm$; therefore, per the Chernoff bound in Lemma~\ref{lemma:chernoff},
    \begin{align}
        &\mathbb{P}\left(\theta_{([qm])} \!<\! F^{-1}_{\theta|\mathbf{x}}(q)\! -u\right)
        \!\leq\! \mathbb{P}\left(Z' \!\geq\! qm\right) 
        \!=\!\mathbb{P}\!\left(\!\!Z' 
        \!\leq \!\left(1\!+\!\frac{q}{\hat{p}'}-1\!\right)\!m\hat{p}'\!\right)\!\leq \exp\!\left(\!-\frac{m\hat{p}'}{2}\!\!\left(\!\frac{q}{\hat{p}'}\!-\!1\!\right)^2\!\right)\notag\\
        = & \exp\!\left(\!-\frac{m}{2\hat{p}'}\!\!\left(\underbrace{q-\hat{p}'}_{\geq up_{min}}\right)^2\right)\!\leq  \exp\!\left(\!-\frac{mu^2p^2_{\min}}{2\hat{p}'}\!\right)\!
        \leq\! \exp\left(\!-\frac{mu^2p^2_{\min}}{2q}\!\right).
    \end{align}
If $3(1-q)<2q \Leftrightarrow \frac{3}{5}<q<1$, then
\[\mathbb{P}\left(|\theta_{([qm])} - F^{-1}_{\theta|\mathbf{x}}(q)| > u\right)\leq 2\exp\left(-\frac{mu^2p^2_{\min}}{2q}\right)\leq 2\exp\left(-\frac{mu^2p^2_{\min}}{3(1-q)}\right);\]
otherwise, if $0<q<\frac{3}{5}$,
\[\mathbb{P}\left(|\theta_{([qm])} - F^{-1}_{\theta|\mathbf{x}}(q)| > u\right)\leq 2\exp\left(-\frac{mu^2p^2_{\min}}{3(1-q)}\right)\leq 2\exp\left(-\frac{mu^2p^2_{\min}}{2q}\right).\]
\end{proof}


We can now move on to the proof of Theorem \ref{thm:utility_PP}.
\begin{proof}
Since $\theta_{(q)} = \theta_{([qm])}$ and $\theta^*_{(q)} = \theta^*_{([qm])}$, we the notations interchangeably  for the non-private and PP $q^{th}$ sample quantiles. The proof is inspired \citet{asi2020near}, with substantial extensions to address our specific problem.

First, we divide the interval $[\theta_{(k)}-\eta, \theta_{(k)} + \eta]$ to blocks of size $u$: $I_1, I_2, \ldots, I_{2\eta/u}$. Let $N_i$ denote the number of elements in $I_i$. We also define the following three events:
    \begin{align*}
    A &= \{\forall i, N_i \geq (m+1)u p_{\min}/2\};\\
    B &= \{|\theta_{(k)} - F^{-1}_{\theta|\mathbf{x}}(q)| \leq \eta/2\};\\
    D &= \{|\theta_{([qm])} - F^{-1}_{\theta|\mathbf{x}}(q)| \leq \eta/2\}.
    \end{align*}
Recall that $k \!=\! \arg\min_{j\in \{0, 1, \ldots, m+1\}}{|\theta_{(j)}\!-\!F^{-1}_{\theta|\mathbf{x}}(q)|}$ as defined in Algorithm \ref{alg:DPP}, thus $|\theta_{(k)} - F^{-1}_{\theta|\mathbf{x}}(q)| \leq |\theta_{([qm])} - F^{-1}_{\theta|\mathbf{x}}(q)| \Rightarrow D\subset B \Rightarrow \mathbb{P} (D) \leq \mathbb{P} (B) \Rightarrow \mathbb{P} (D^c) \geq \mathbb{P} (B^c)$.

Next, we derive a lower bound for $\mathbb{P}(A|B)$:
    \begin{align}\label{eqn:A|B inequality}
        \mathbb{P}(A|B)
        &\geq \mathbb{P}(A|B)\mathbb{P}(B) = \mathbb{P}(A) - \mathbb{P}(A|B^c)\mathbb{P}(B^c) \geq \mathbb{P}(A) - \mathbb{P}(B^c) \geq \mathbb{P}(A) - \mathbb{P}(D^c). 
    \end{align}
    For $\mathbb{P}(D^c)$, per Lemma~\ref{lemma:quantile},
    \begin{equation}\label{eqn:D^c}
        \mathbb{P}(D^c) \leq 2\exp\left(-\frac{m\eta^2p^2_{\min}}{12(1-q)}\right).
    \end{equation}
    For $\mathbb{P}(A)$, we first let $Z_j \!=\! 1\{\theta_{(j)} \!\in\! I_i\}$, then $N_i \!=\! \sum_{j=0}^mZ_j$. As $\hat{p}\!=\!\mathbb{P}(Z_j\!=\!1) \!\geq\! u p_{\min}$, per the Chernoff bound in Lemma~\ref{lemma:chernoff},
    \begin{align}
        \mathbb{P}\!\left(\!N_i\!<\!\frac{(m+1)up_{min}}{2}\!\right)
        \!=\;&\mathbb{P}\left(N_i<(m+1)\hat{p}\left(1-(1-\frac{u p_{\min}}{2\hat{p}})\right)\right)\notag\\
        \leq\;&\exp\!\left(\!-\frac{(m\!+\!1)\hat{p}}{2}\!\left(\underbrace{1\!-\!\frac{u p_{\min}}{2\hat{p}}}_{\geq 1/2}\right)^2\right)
        \!\leq \exp\!\left(\!-\frac{(m\!+\!1)u p_{\min}}{8}\!\right).
    \end{align}
    By taking a union bound across all blocks,
    \begin{equation}\label{eqn:A^c}
        \mathbb{P}(A^c)
        \leq \frac{2\eta}{u}\exp\left(-\frac{(m+1)u p_{\min}}{8}\right), 
    \end{equation}
    and thus
    \begin{equation}\label{eqn:A}
        \mathbb{P}(A)
        \geq 1-\frac{2\eta}{u}\exp\left(-\frac{(m+1)u p_{\min}}{8}\right).
    \end{equation}
    Plug Eqs.~\eqref{eqn:D^c} and \eqref{eqn:A} into the RHS of Eq.~\eqref{eqn:A|B inequality},
    \begin{align}
        \mathbb{P}(A|B)
        &\geq 1-\frac{2\eta}{u}\exp\left(-\frac{(m+1)u p_{\min}}{8}\right) - 2\exp\left(-\frac{m\eta^2p^2_{\min}}{12(1-q)}\right),\notag\\
        \Rightarrow\mathbb{P}(A^c|B)
        &\leq \frac{2\eta}{u}\exp\left(-\frac{(m+1)u p_{\min}}{8}\right) + 2\exp\left(-\frac{m\eta^2p^2_{\min}}{12(1-q)}\right).\label{eqn:A|B}
    \end{align}
    Next, we establish the following inequality for later use. For any event $E$,
    \begin{align}
        \mathbb{P}(E) &= \mathbb{P}(E|A\cap B)\mathbb{P}(A\cap B) + \mathbb{P}(E|(A\cap B)^c)\mathbb{P}((A\cap B)^c\notag\\
        &\leq \mathbb{P}(E|A\cap B) + \mathbb{P}((A\cap B)^c)\notag\\
        &=\mathbb{P}(E|A\cap B) + \mathbb{P}(A^c\cup B^c)\notag\\
        &=\mathbb{P}(E|A\cap B) + \mathbb{P}(B^c) + \mathbb{P}(A^c) - \mathbb{P}(A^c\cap B^c)\notag\\
        &=\mathbb{P}(E|A\cap B) + \mathbb{P}(B^c) + \mathbb{P}(A^c\cap B)\notag\\
        &\leq\mathbb{P}(E|A\cap B) + \mathbb{P}(B^c) + \mathbb{P}(A^c|B)\notag\\
        &\leq \mathbb{P}(E|A\cap B) + \mathbb{P}(D^c) + \mathbb{P}(A^c|B).\label{eqn:E inequality}
    \end{align}
   If both events $A$ and $B$ occur, then for any $\theta_{(j^*)}$ such that $|\theta_{(j^*)}\!-\!\theta_{(k)}|\!>\! 2u$, there are at least $(m+1)u p_{\min}/2$ elements between $\theta_{(k)}$ and $\theta_{(j^*)}$. This implies that $|j^* - k|\geq (m+1)up_{\min}/2$. Therefore, if $C(m,\varepsilon)\triangleq\sum_{i=0}^{m} (\theta_{(i+1)} - \theta_{(i)})\cdot \mbox{exp}(-\frac{\varepsilon}{2(m+1)}|i-k|)$,
    \begin{align}\label{eqn:DPPj^*}
        &\exp\!\left(\!-\frac{\varepsilon}{2(m+1)}|j^*\!-\!k| \right)
        \leq \exp\left(-\frac{\varepsilon (m+1)u p_{\min}}{4(m+1)}\right) = \exp\left(-\frac{\varepsilon u p_{\min}}{4}\right)\\
        \Rightarrow & \: \mathbb{P}(j^*|A, B)\leq \frac{\theta_{(j^*+1)}-\theta_{(j^*)}}{C(m,\varepsilon)}\exp\left(-\frac{\varepsilon u p_{\min}}{4}\right).
    \end{align}
    Let $j^*_{\max} \!\triangleq\! \arg\min_{j}\{\theta_{(j)}\!-\!\theta_{(k)}\!>\!2u\}$ and $j^*_{\min} \!\triangleq\! \arg\max_{j}\{\theta_{(j)} -\theta_{(k)} \!<\! -2u\}$. Then, 
    \[\sum_{|\theta_{(j^*\!)} - \theta_{(k)}| > 2u}\!\!\!\!\!\!\mathbb{P}(j^*|A, B) \leq \frac{e^{-\varepsilon u p_{\min}/4}}{C(m, \varepsilon)}\left(U - \theta_{(j^*_{\max})} \!\!+ \theta_{(j^*_{\min})}\!\!- L\right).\]
    WLOG, assume $2k \leq m+1$, let $s \!=\! \min_{i \in \{0, 1, \ldots, m\}}(\theta_{(i+1)} \!-\! \theta_{(i)})$ and $\xi = \mbox{exp}(-\frac{\varepsilon}{2(m+1)})$,
    \begin{align}
        C(m,\varepsilon)
        =& \sum_{i=0}^{m} (\theta_{(i+1)} - \theta_{(i)})\cdot \mbox{exp}\left(-\frac{\varepsilon}{2(m+1)}|i-k|\right)\notag\\
        \geq\;& s\left(1+2\xi + 2\xi^2 + 2\xi^3 + \cdots + 2\xi^{k-1} +2\xi^{k} + \xi^{k+1} + \cdots \xi^{m-k}\right)\notag\\
        \geq \;& s\left(1+2\frac{\xi(1-\xi^{k})}{1-\xi} + \frac{\xi^{k+1}(1-\xi^{m-2k})}{1-\xi}\right)\notag\\
        =\;&s\;\frac{1+\xi-\xi^{k+1}-\xi^{m-k+1}}{1-\xi}.
    \end{align}

     Since $\theta_{(j^*_{\max})} - \theta_{(j^*_{\min})} = \theta_{(j^*_{\max})} - \theta_{(k)} + \theta_{(k)} -\theta_{(j^*_{\min})} > 4u$, 
    \begin{align}
        \sum_{|\theta_{(j^*)} - \theta_{(k)}| > 2u}\!\!\!\!\!\!\mathbb{P}(j^*|A, B) &\leq \frac{e^{-\varepsilon u p_{\min}/4}}{C(m,\varepsilon)}\left(U - \theta_{(j^*_{\max})} \!\!+ \theta_{(j^*_{\min})}\!\!- L\right)\notag\\
        &\leq \frac{U-L - 4u}{s}\cdot\frac{1-\xi}{1+\xi-\xi^{k+1}-\xi^{m-k+1}}\cdot\exp\left(-\frac{\varepsilon u p_{\min}}{4}\right).\label{eqn:j*|AB}
    \end{align}
    
Using the inequality in Eq.~\eqref{eqn:E inequality}, 
 \begin{align*}
 &\Pr\left(\Big|\theta^*_{(q)}-\theta_{(k)}\Big|>2u\right)\!=\!\Pr\left(\Big|\theta^*_{([qm])}-\theta_{(k)}\Big|>2u\right)
\!\!\leq\!\!\sum_{|\theta_{(j^*)} - \theta_{(k)}| > 2u}\!\!\!\!\!\!\mathbb{P}(j^*|A, B) + \mathbb{P}(D^c) + \mathbb{P}(A^c|B),
\end{align*}
and plugging in Eqns~\eqref{eqn:D^c}, \eqref{eqn:A|B}, and \eqref{eqn:j*|AB} to the RHS of the above inequality, we have
    \begin{align}
Pr\left(\Big|\theta^*_{(q)}-\theta_{(k)}\Big|>2u\right)\leq \;&\frac{U-L - 4u}{s}\cdot\frac{1-\xi}{1+\xi-\xi^{k+1}-\xi^{m-k+1}}\cdot\exp\left(-\frac{\varepsilon u p_{\min}}{4}\right) \label{eqn: term1}\\
        &+ \frac{2\eta}{u}\exp\left(-\frac{(m+1)u p_{\min}}{8}\right) + 2\exp\left(-\frac{m\eta^2p^2_{\min}}{12(1-q)}\right).\label{eqn: term23}
    \end{align}
\end{proof}

\begin{proposition}\label{prop:Pr}
    Under the conditions of Theorem \ref{thm:utility_PP}, the probability of PPquantile in Algorithm \ref{alg:DPP} selecting the correct index $[qm]$ is
    \begin{equation}\label{eqn:Prj*_upper}
        \Pr(j^* = [qm]) \leq \left(1 + \frac{(U\!-\!\theta_{(m)} + \theta_{(1)}\!-\!L) + s\cdot(m-2)}{\theta_{([qm]+1)} - \theta_{([qm])}}\cdot e^{-\varepsilon}\right)^{-1}.
    \end{equation}
\end{proposition}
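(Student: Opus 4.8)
\textbf{Proof proposal for Proposition~\ref{prop:Pr}.}

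The plan is to bound $\Pr(j^*=[qm])$ directly from the explicit sampling probability in Algorithm~\ref{alg:DPP}. Recall from line~\ref{step:exp_norm} that
\[
\Pr(j^*=[qm]) = \frac{y_{[qm]}}{\sum_{j=0}^m y_j}
= \frac{(\theta_{([qm]+1)}-\theta_{([qm])})\,e^{-\varepsilon|[qm]-k|/(2(m+1))}}{\sum_{j=0}^m (\theta_{(j+1)}-\theta_{(j)})\,e^{-\varepsilon|j-k|/(2(m+1))}}.
\]
First I would observe that the value $k=\arg\min_j|\theta_{(j)}-F^{-1}_{\theta|\x}(q)|$ and the index $[qm]$ are both ``central,'' so in the worst case I simply take the numerator's exponential factor to be at most $1$ (i.e.\ bound $e^{-\varepsilon|[qm]-k|/(2(m+1))}\le 1$), giving numerator $\le \theta_{([qm]+1)}-\theta_{([qm])}$. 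The real work is a \emph{lower} bound on the denominator $C(m,\varepsilon)=\sum_{j=0}^m(\theta_{(j+1)}-\theta_{(j)})e^{-\varepsilon|j-k|/(2(m+1))}$.

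For the denominator, I would split the sum into three pieces: the term $j=[qm]$ itself, the two endpoint gaps $j=0$ (contributing $(\theta_{(1)}-L)e^{-\varepsilon k/(2(m+1))}$) and $j=m$ (contributing $(U-\theta_{(m)})e^{-\varepsilon(m-k)/(2(m+1))}$), and the remaining $m-2$ ``interior'' gaps $j\notin\{0,m,[qm]\}$. For the interior gaps I would use $\theta_{(j+1)}-\theta_{(j)}\ge s$ where $s=\min_j(\theta_{(j+1)}-\theta_{(j)})$, together with a crude exponential bound; and for the two endpoint gaps I would likewise bound their exponential factors. The key quantitative move is that every exponential factor appearing other than for $j=[qm]$ can be bounded below by $e^{-\varepsilon|j-k|/(2(m+1))}\ge e^{-\varepsilon m/(2(m+1))}\ge e^{-\varepsilon/2}$ in the crudest form — but to land the clean bound $e^{-\varepsilon}$ stated in the proposition I would instead note that the relevant ratio $y_j/y_{[qm]}$ involves $e^{-\varepsilon(|j-k|-|[qm]-k|)/(2(m+1))}$, and over the range $j\in\{0,\dots,m\}$ the exponent $|j-k|-|[qm]-k|$ is at least $-(m+1)$ and at most $m+1$; taking it $\ge -(m+1)$ (worst case, when $j$ is far and $[qm]$ coincides with $k$) gives $y_j/y_{[qm]}\ge (\theta_{(j+1)}-\theta_{(j)})/(\theta_{([qm]+1)}-\theta_{([qm])})\cdot e^{-\varepsilon/... }$ — so I will need to track the factor of $2$ in the exponent carefully. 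Summing the interior gaps gives a contribution $\ge s(m-2)e^{-\varepsilon}$ (after the exponent accounting), and the two endpoint gaps give $\ge (U-\theta_{(m)}+\theta_{(1)}-L)e^{-\varepsilon}$; adding the $j=[qm]$ term $(\theta_{([qm]+1)}-\theta_{([qm])})$ and dividing yields exactly
\[
\Pr(j^*=[qm]) \le \left(1 + \frac{(U-\theta_{(m)}+\theta_{(1)}-L) + s(m-2)}{\theta_{([qm]+1)}-\theta_{([qm])}}\,e^{-\varepsilon}\right)^{-1},
\]
as claimed.

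The main obstacle I anticipate is the exponent bookkeeping: the denominator in $\Pr$ has $e^{-\varepsilon|j-k|/(2(m+1))}$ with the factor $\tfrac{1}{2(m+1)}$, and to collapse this to a clean $e^{-\varepsilon}$ one must be careful that $|j-k|\le m$ (not $m+1$) and that the \emph{difference} of the distances to $k$, not the raw distances, is what controls the ratio $y_j/y_{[qm]}$; a naive bound gives $e^{-\varepsilon m/(m+1)}$ rather than $e^{-\varepsilon}$, so either the proposition is using the slightly looser $|j-k|\le m+1$ bound or there is an implicit ``$\ge$'' slack being absorbed — I would state the inequality $|j-k|\le m+1$ for all $j,k\in\{0,\dots,m+1\}$ and accept the resulting $e^{-\varepsilon/2}$, then note that since the stated bound uses $e^{-\varepsilon}$ it must be that the intended comparison is against $e^{-\varepsilon}$ via $|[qm]-k|$ being possibly nonzero, which only strengthens the bound. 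In the write-up I would simply present the cleanest chain that yields the displayed inequality and flag that it is conservative. A secondary (routine) point is handling the WLOG assumption $2k\le m+1$ inherited from Theorem~\ref{thm:utility_PP}, which only affects which endpoint gap is ``closer'' and does not change the final form.
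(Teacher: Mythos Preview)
Your proposal is correct and follows essentially the same route as the paper: write $\Pr(j^*=[qm])=y_{[qm]}/\sum_j y_j$, split the denominator into the $[qm]$ term, the two boundary gaps $j=0,m$, and the $m-2$ interior gaps bounded below by $s$, and then lower-bound each exponential weight. The paper's write-up is terser---it simply records that every exponent $\frac{|j-k|}{2(m+1)}$ lies in $(0,1)$, writes the lower bound as $e^{-c_i\varepsilon}$ with $c_i\in(0,1)$, and then weakens $e^{-c_i\varepsilon}\ge e^{-\varepsilon}$.

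Your ``main obstacle'' paragraph is overthinking a triviality. Since $|j-k|\le m+1$ for all $j\in\{0,\dots,m\}$ and $k\in\{0,\dots,m+1\}$, one has $\frac{|j-k|}{2(m+1)}\le\frac12<1$, hence $e^{-\varepsilon|j-k|/(2(m+1))}\ge e^{-\varepsilon/2}\ge e^{-\varepsilon}$. That last inequality is just monotonicity of the exponential and is the entire content of ``collapsing to $e^{-\varepsilon}$''; no ratio-of-distances argument is needed, and the bound with $e^{-\varepsilon}$ is deliberately looser than the $e^{-\varepsilon/2}$ you already derived. You can drop the discussion of $|[qm]-k|$ entirely: either work in the ratio form $\Pr=(1+\sum_{j\ne[qm]}y_j/y_{[qm]})^{-1}$, where the numerator's exponential cancels automatically, or bound the numerator by $1$ and the denominator termwise---both land on the stated inequality immediately.
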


\begin{proof}
The probability of selecting the correct index $[qm]$ in Algorithm \ref{alg:DPP} is $\Pr(j^* = [qm]) = (\theta_{([qm]+1)} - \theta_{([qm])})/C(m,\varepsilon)$, where
\begin{align}
    C(m,\varepsilon)= &\left(\theta_{(1)} - L\right)\exp\left(-\frac{\varepsilon\cdot[qm]}{2(m+1)}\right) + \left(U - \theta_{(m)}\right)\exp\left(-\frac{\varepsilon|m-[qm]|}{2(m+1)}\right) \label{eqn:Prj*_UL}\\
    & + \!\!\!\sum_{i\notin \{0, m, [qm]\}}\!\!\!\!\!\!\left(\theta_{(j+1)} - \theta_{(j)}\right)\exp(-\varepsilon|j-[qm]|/2(m+1))+ \left(\theta_{([qm]+1)} - \theta_{([qm])}\right)\label{eqn:Prj*_middle}\\
    \geq \; & (U\!-\!\theta_{(m)} + \theta_{(1)}\!-\!L)\cdot e^{-c_1\cdot \varepsilon} + s\cdot(m-2)\cdot e^{-c_2\cdot \varepsilon}+ \left(\theta_{([qm]+1)} - \theta_{([qm])}\right)
\end{align}
for $c_1, c_2 \in (0,1)$. Therefore, 
\begin{align}
    \Pr(j^* = [qm]) &\leq \left(1 + \frac{(U\!-\!\theta_{(m)} + \theta_{(1)}\!-\!L)}{\theta_{([qm]+1)} - \theta_{([qm])}}\cdot e^{-c_1\cdot \varepsilon} + \frac{s\cdot(m-2)}{\theta_{([qm]+1)} - \theta_{([qm])}}\cdot e^{-c_2\cdot \varepsilon}\right)^{-1}\\
    &\leq \left(1 + \frac{(U\!-\!\theta_{(m)} + \theta_{(1)}\!-\!L) + s\cdot(m-2)}{\theta_{([qm]+1)} - \theta_{([qm])}}\cdot e^{-\varepsilon}\right)^{-1}
\end{align}

$(L, U)$ need to be chosen carefully in practice. First, $(L, U)$ should cover the spread of the posterior samples so as not to bias the posterior distribution or clip the true posterior quantiles. On the other hand, 
Loose $(L,U)$  leads to large $U-\theta_{(m)} + \theta_{(1)}-L$  and small $\Pr(j^* = [qm])$, resulting in inaccurate estimation of $j^*$.
Given the randomness of posterior sampling, especially when $m$ is not large, $\theta_{(m)}, \theta_{(1)}, s$, and $\theta_{([qm]+1)} - \theta_{([qm])}$ can vary significantly across different sets of posterior samples, making a precise calibration of $(L, U)$ even more important, without compromising privacy.


\end{proof}

\section{Experiment details}
\subsection{Hyperparameters and code}\label{ape: hyper}

All the  PPIE methods require specification of the global bounds $(L_{\x}, U_{\x})$ for data $\x$ for the population mean \& variance case. We set $(L_{\x}\!=\!-4, U_{\x}\!=\!4)$ for $\x\sim\mathcal{N}(0,1)$ and  $(L_{\x}\!=\!0, U_{\x}\!=\!25)$ for $X\sim\mbox{Pois}(10)$ so that $\Pr (L_{\x} \!\leq\! x_i \!\leq\! U_{\x})\!\geq\!99.99\%$. For the Bernoulli case, $\x$ is 0 or 1 and thus naturally bounded.  The hyperparameters for the method to be compared with PRECISE in the simulation studies are listed below.
\begin{itemize}
\item SYMQ: The code is located \href{https://github.com/wxindu/dp-conf-int}{\textbf{here}}. We set the number of parametric bootstrap sample sets at 500.

\item PB:  The code is located 
 \href{https://github.com/ceciliaferrando/PB-DP-CIs}{\textbf{here}}. We set the number of parametric bootstrap sample sets at 500. We also identified and corrected a bug in the original code of OLS, where $\varepsilon$ is supposed to be split into 3 portions -- that is, 
\verb|np.random.laplace(0, Delta_w/eps/3, 1)| in the original code   should be replaced by \verb|np.random.laplace(0, Delta_w/(eps/3), 1)|.
\item  repro: The code is located \href{https://github.com/Zhanyu-Wang/Simulation-based_Finite-sample_Inference_for_Privatized_Data}{\textbf{here}}. We set the number of repro samples $R=200$. 
\item deconv:  The code is located \href{https://github.com/Zhanyu-Wang/Differentially_Private_Bootstrap}{\textbf{here}}: we used $B=\max\{2000\mu^2, 2000\}$, where $B$ is the number of bootstrap samples and $\mu$ is the privacy loss in $\mu$-GDP.
\item Aug.MCMC:  The code is located \href{https://github.com/nianqiaoju/dataaugmentation-mcmc-differentialprivacy}{\textbf{here}}. The prior for $\beta$ is $ \mathcal{N}_{p+1}(\mu, \tau^2I_{p+1})$, where $\mu\!=\!0.5, \tau\!=\!1$ for $n\!=\!100$ and $\mu\!=\!1, \tau\!=\!0.25$ for $n\!=\!1000$. We run 10,000 iterations per MCMC chain and with a  5,000 burn-in period.
\item \textbf{MS}: We set the number of multiple syntheses at 3.
\item \textbf{BLBquant}: BLBquant involve multiple hyperparameters. Readers may refer to the original paper for what each hyperparameter is. In terms of their values in our experiments, we set the multipliers $c=3$, $K=14$ to be more risk-averse as suggested by the authors. The other hyperparameters follow the settings in the original paper, specifically $R=50$, the number of Monte Carlo iterations for each little bootstrap $m_{\text{boot}} = \min\{10000, \max\{100, n^{1.5}/(s\log(n)\}\}$, the number of partitions of the dataset $s=\lfloor K\log(n)/\varepsilon_{(q)}\rfloor$, where $\varepsilon_{(q)} = 0.5\varepsilon$, where $\varepsilon$ is the total privacy loss, and the sequence of sets $I_t = [-tc/\sqrt{n}, tc/\sqrt{n}]$ for $t=1,2,...$
\end{itemize}

\subsection{Sensitivity of LS linear regression coefficients}
The MS implementation in the linear regression simulation study is based on sanitized $\hat{\bs\beta}=(\X'\X)^{-1}(\X'\y)$. To that end, we sanitize $(\X'\X)$ and $(\X'\y)$, respectively, the sensitivities of which are provided below. Let $\|\x_i\|_2=(\sum_{j=1}^{p-1}x^2_j)^{1/2}\le 1$ (no intercept) for any $p\ge1$ and $|y_i|\le C_Y$ for every $i=1,\ldots,n$. In the simulation study,  $C_Y = 4$. For the substitution neighboring relationship between datasets $D_1$ and $D_2$, WLOG, assuming the last data points $(\x'_{1n},y_{1n})$ and $(\x'_{2n},y_{2n})$ differ between $D_1$ and $D_2$, then
\begin{align*}
\Delta(\X'\y)&=\sup\|\sum_i\x'_{1i}y_{1i}-\sum_i\x'_{2i}y_{2i}\|_2
=\sup\|\x'_{1n}y_{1n}-\x'_{2n}y_{2n}\|_2\\
&\le 2\sup_{\x',y}\|\x'y\|_2\le2\sup_{\x',y}\|\x'\|_2\cdot\|y\|_2=2C_Y,
\end{align*}
where the first inequality holds due to the triangle inequality and the second is built upon the Cauchy-Schwarz inequality; and 

\begin{align*}
\Delta(\X'\X)&=\textstyle\sup\|\sum_i\x'_{1i}\x_{1i}-\sum_i\x'_{2i}\x_{2i}\|_F
=\sup\|\x'_{1n}\x_{1n}-\x'_{2n}\x_{2n}\|_F\\
&\leq2\sup_{\x',\x}\|\x'\x\|_F=\textstyle2\sup_{\x',\x}(1+2\sum_{j=1}^{p-1}x^2_j+2\sum_{j=1}^{p-1}x^2_jx^2_{j'} +\sum_{j=1}^{p-1}x^4_j)\\
& \mbox{since $\|\x_i\|_2^4=(\sum_{j=1}^{p-1}x^2_j)^2 \le 2\sum_{j=1}^{p-1}x^2_jx^2_{j'} +\sum_{j=1}^{p-1}x^4_j\le 1$, then}\\
\Delta(\X'\X)&=\textstyle2(1+2\sup_{\x',\x}(\sum_{j=1}^{p-1}x^2_j)+\sup_{\x',\x}(2\sum_{j=1}^{p-1}x^2_jx^2_{j'} +\sum_{j=1}^{p-1}x^4_j))\le 2(1+ 2 + 1)= 8.
\end{align*}
After the sensitivities are derived,  $\hat{\bs\beta}$ can be sanitized as in $(\x'\x+\mathbf{e}_x)$ and $(\x'\y+ \mathbf{e}_y)$, where  $\mathbf{e}_x$ and $\mathbf{e}_y$ are samples drawn independently from either a Laplace distribution or a Gaussian distribution, depending on the DP mechanism.

\section{Additional experimental results}\label{ape:exp_results}
This section presents results for $\mu$-GDP as a supplement to the $\varepsilon$-DP results shown in Figures \ref{fig:Normal} to \ref{fig:SLR} in Section \ref{sec:results}, the trends and the performances of methods are similar to those observed under $\varepsilon$-DP.

\begin{figure}[!htb]
\centering
  \includegraphics[width = 0.85\textwidth, trim={0.05in 0.1in 0.05in, 0in},clip]{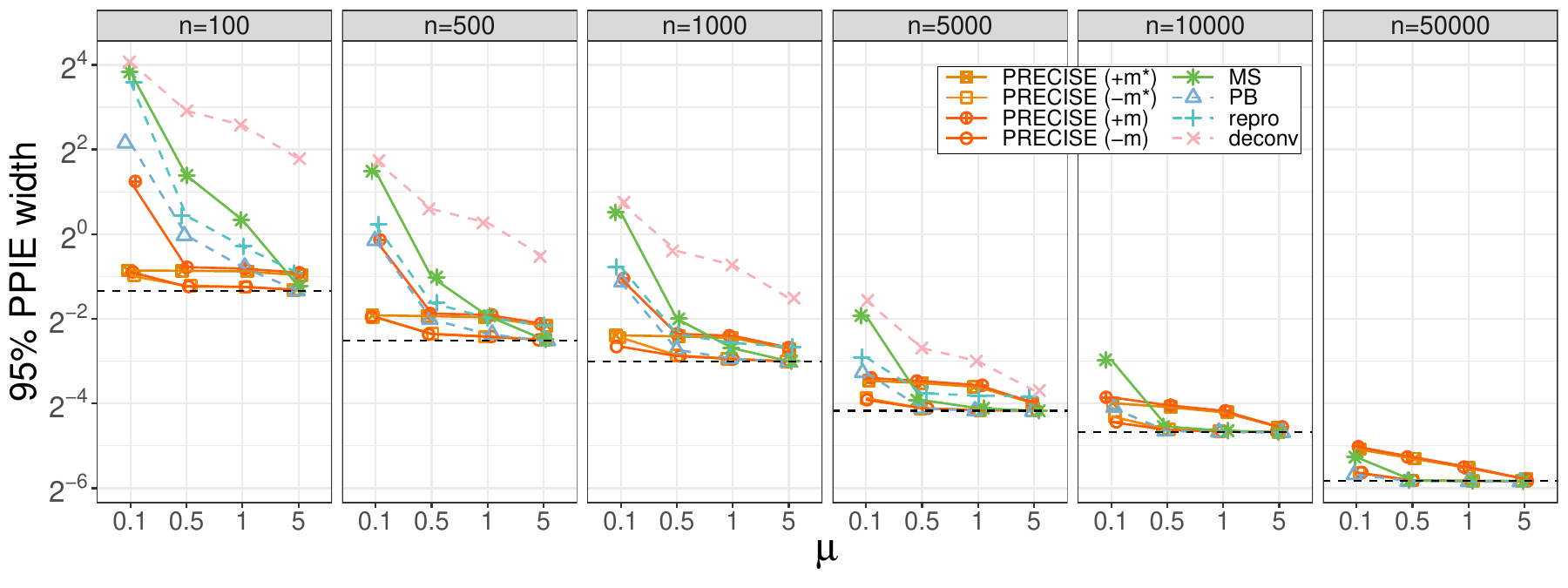}
  \includegraphics[width = 0.85\textwidth, trim={0.05in 0.1in 0.05in, 0.05in},clip]{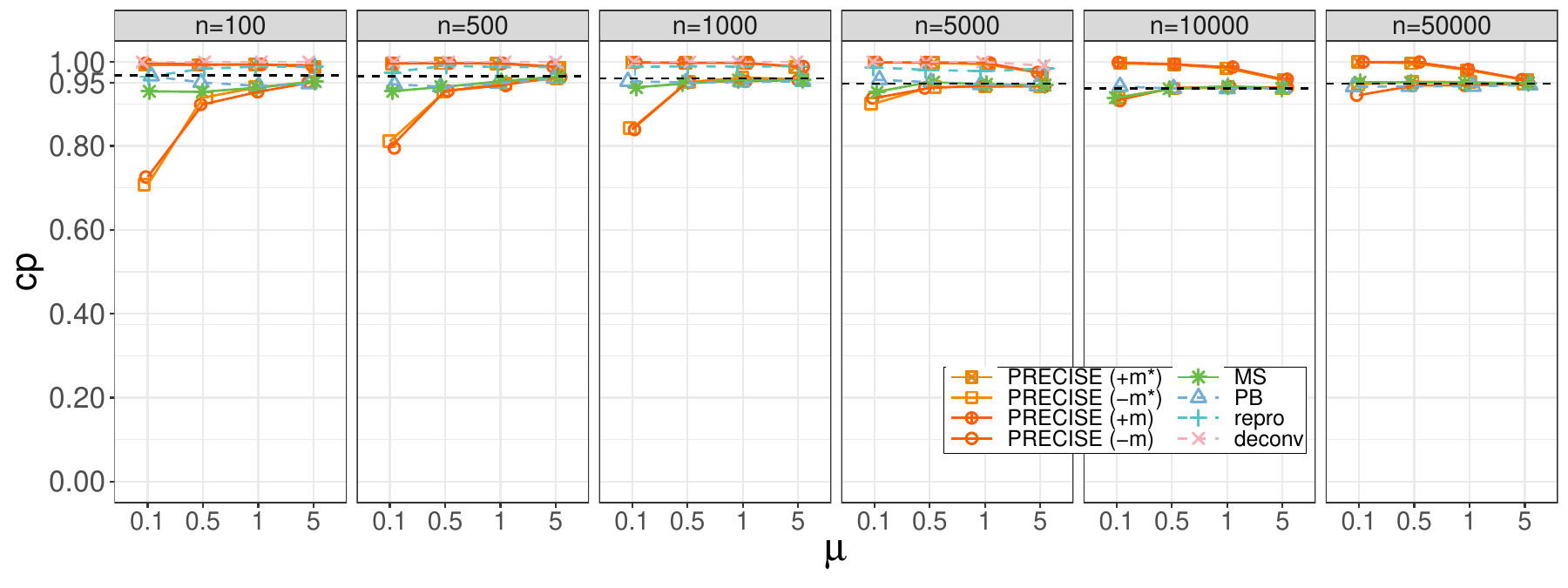}
  \vspace{-8pt}
\caption{PPIE width and CP for Gaussian mean ($\mu$-GDP).}
\end{figure}


\begin{figure}[!htb]
\centering
  \includegraphics[width = 0.85\textwidth, trim={0.05in 0.1in 0.05in, 0in},clip]{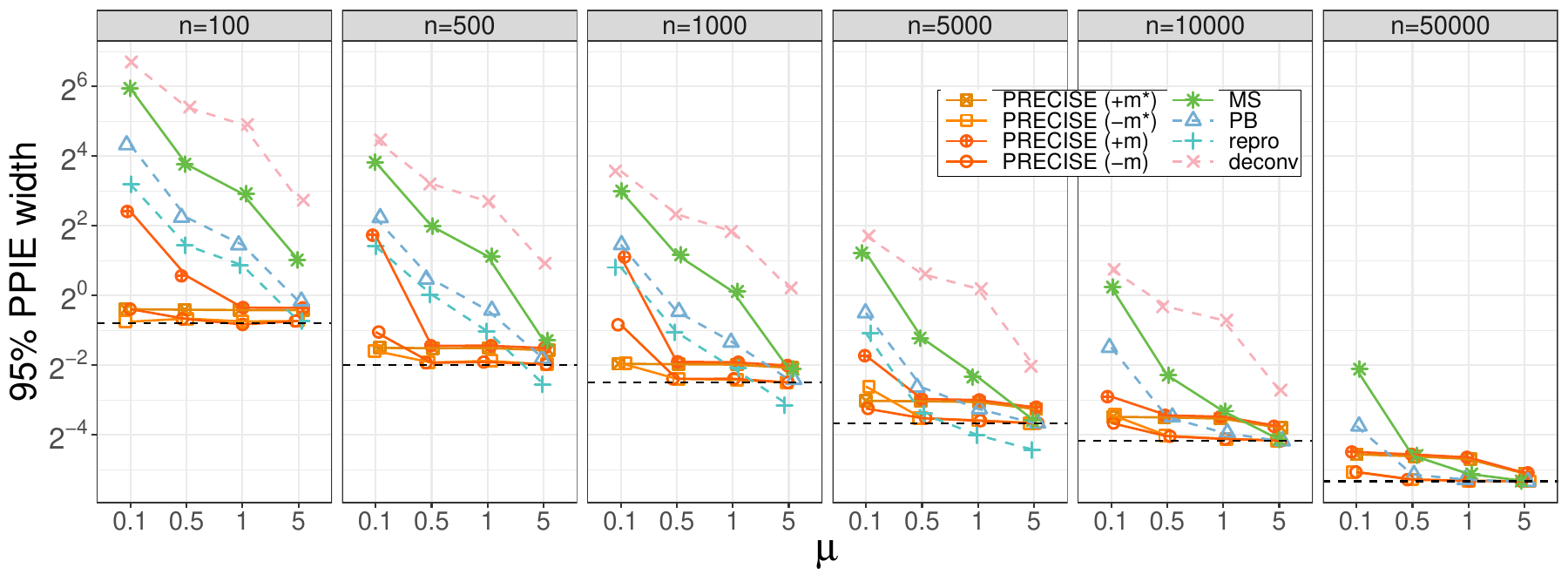}
  \includegraphics[width = 0.85\textwidth, trim={0.05in 0.1in 0.05in, 0in},clip]{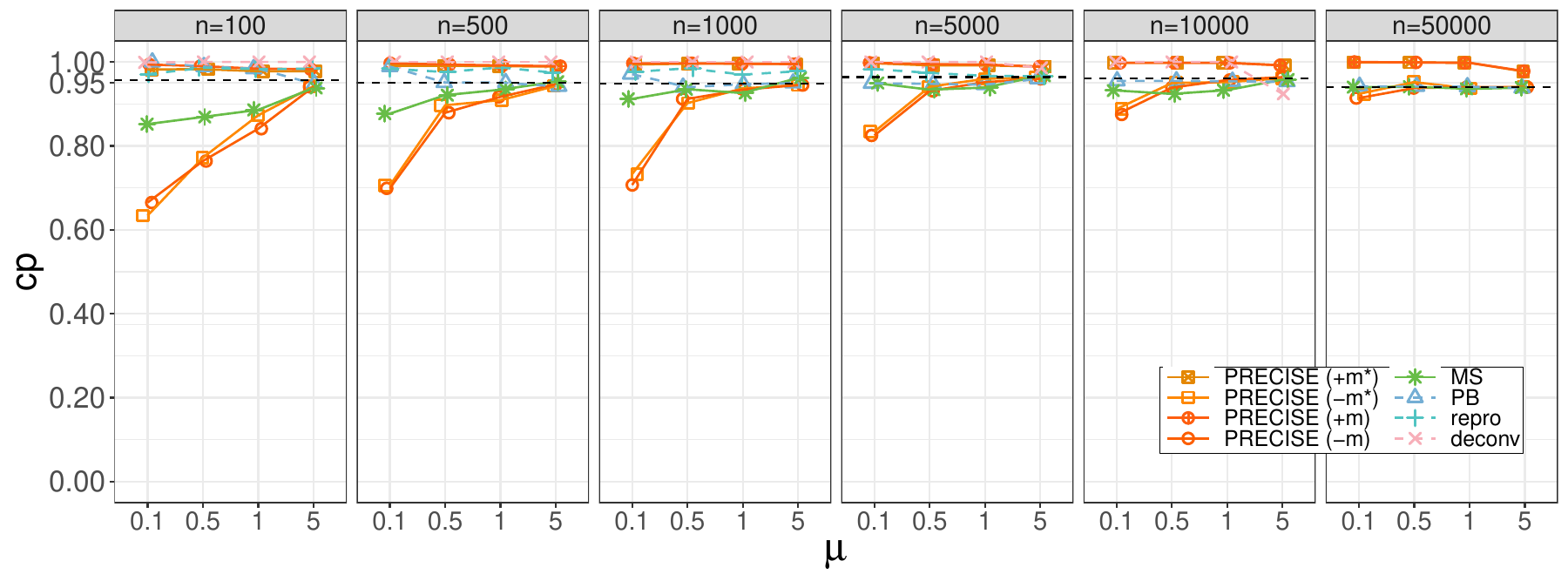}
  \vspace{-8pt}
  \caption{PPIE width and CP for Gaussian variance ($\mu$-GDP).}
\end{figure}


\begin{figure}[!htbp]
\centering
  \includegraphics[width = 0.85\textwidth, trim={0.05in 0.1in 0.05in, 0in},clip]{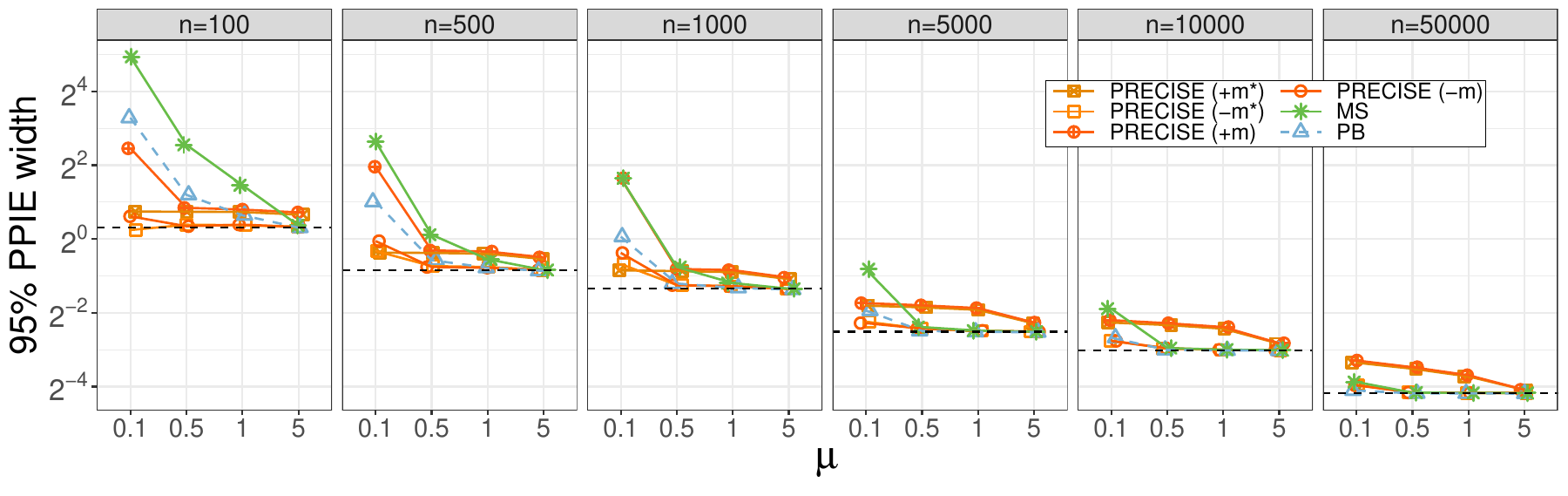}
  \includegraphics[width = 0.85\textwidth, trim={0.05in 0.1in 0.05in, 0in},clip]{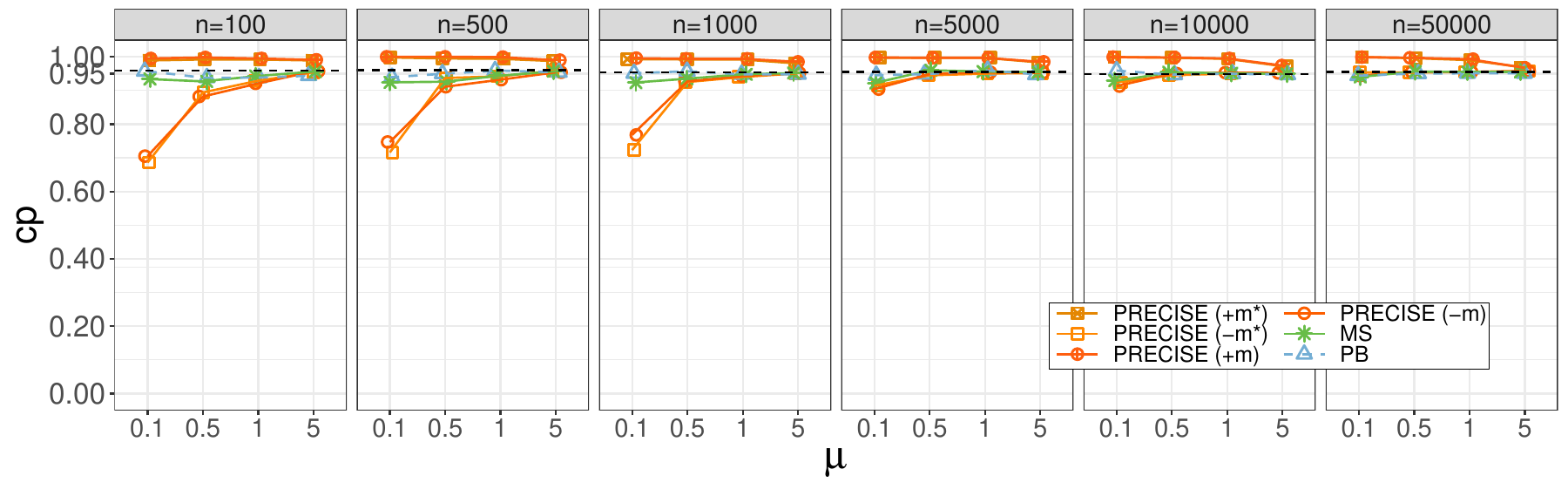}
  \vspace{-8pt}
  \caption{PPIE width and CP for Poisson mean ($\mu$-GDP).}
\end{figure}


\begin{figure}[!htb]
\centering
  \includegraphics[width = 0.85\textwidth, trim={0.05in 0.1in 0.05in, 0in},clip]{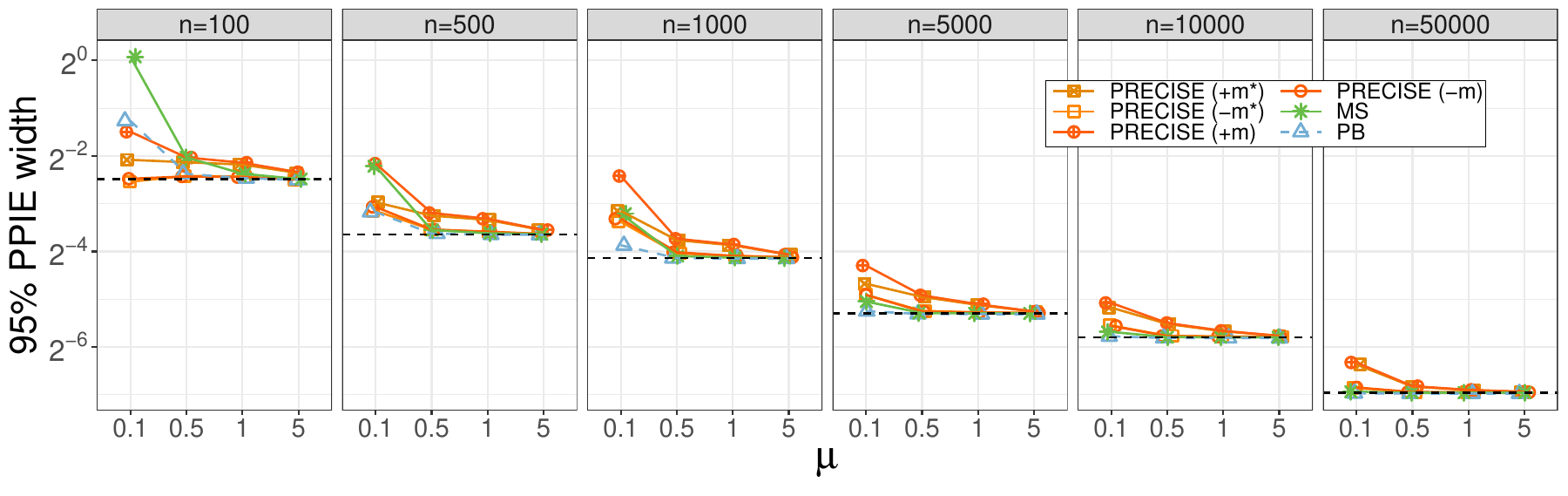}
  \includegraphics[width = 0.85\textwidth, trim={0.05in 0.1in 0.05in, 0in},clip]{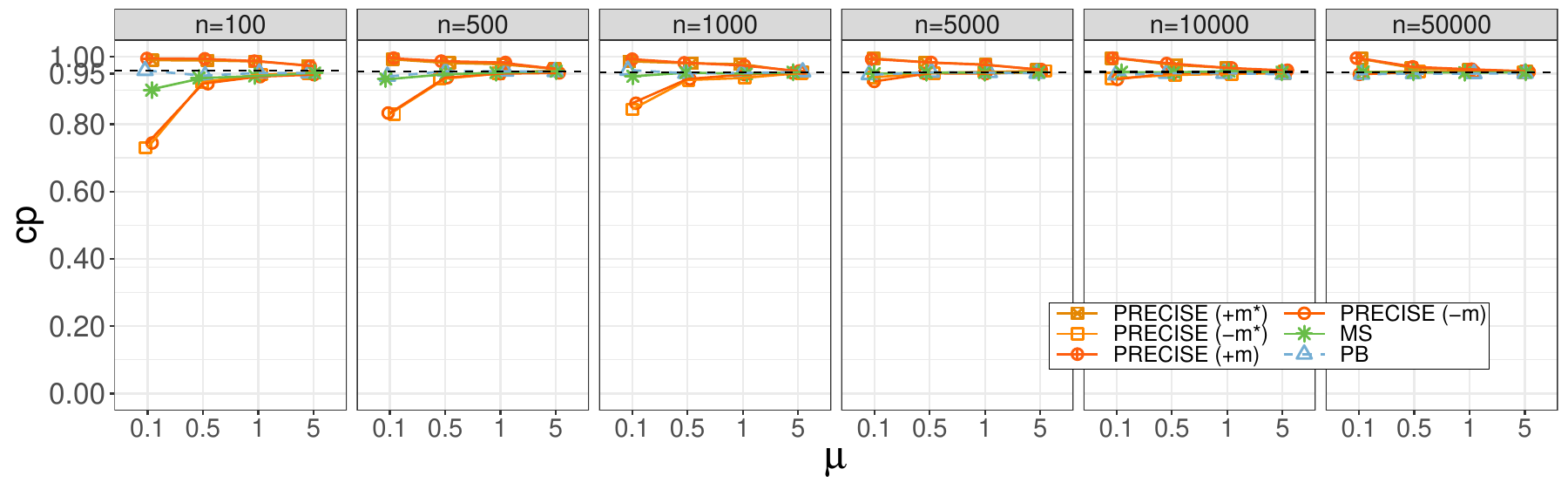}
  \vspace{-8pt}
  \caption{PPIE width and CP for Bernoulli proportion ($\mu$-GDP).}
\end{figure}

\begin{figure}[!htbp]
  \centering
  \includegraphics[width = 0.9\textwidth, trim={0.05in 0.14in 0.07in, 0.06in},clip]{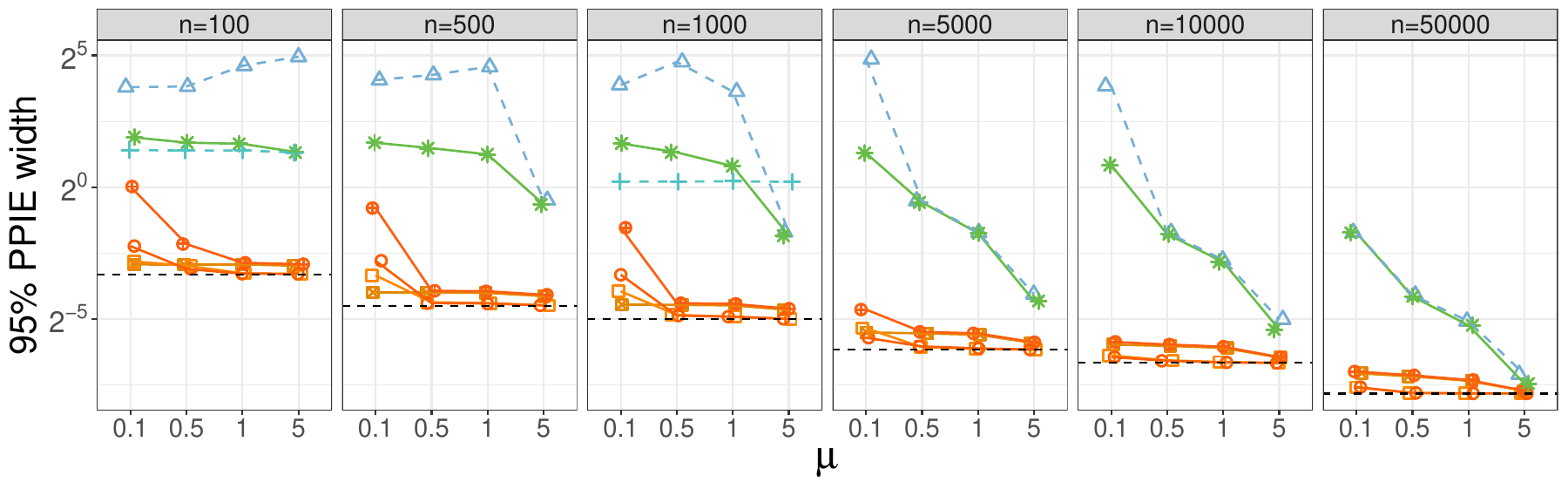}
  \includegraphics[width = 0.9\textwidth,trim={0.1in 0.14in 0.07in, 0.06in},clip]{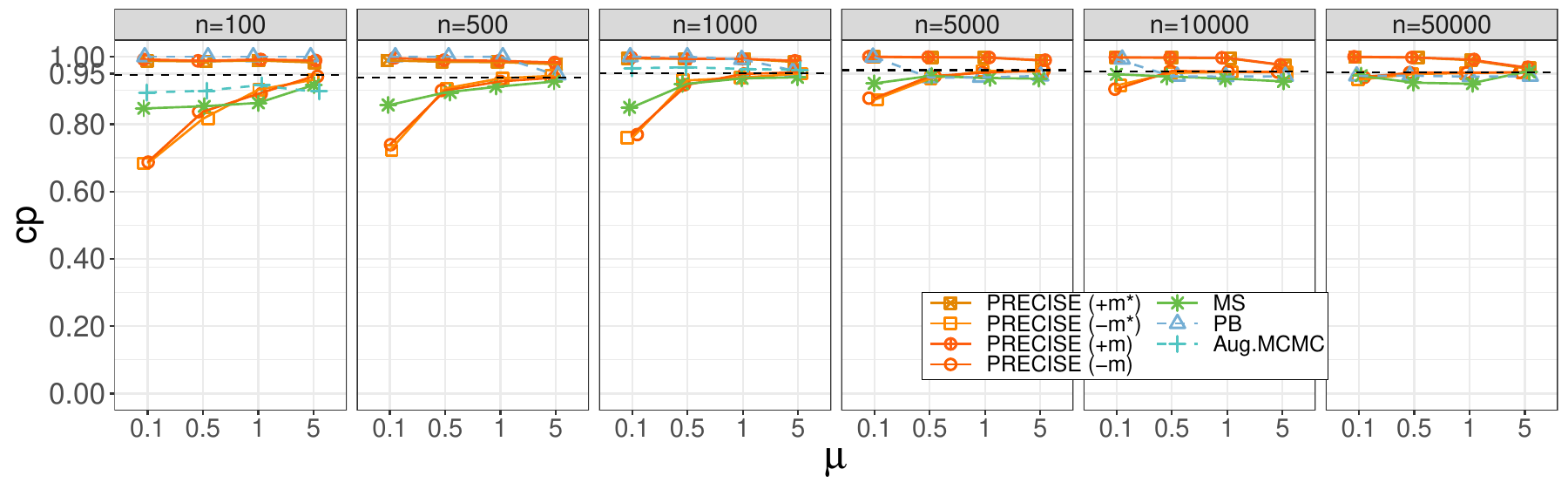}
\caption{PPIE width and CP for the slope in linear regression ($\mu$-GDP). }
\label{fig:SLR} 
\end{figure}

\end{document}